\documentclass{fundam}

\makeatletter
\let\c@definition\relax
\makeatother

\usepackage[T1]{fontenc}
\usepackage[USenglish]{babel}

\usepackage{fontawesome5}

\usepackage{amsmath,amssymb}
\usepackage{mathtools}
\usepackage{shuffle}

\usepackage{tabularray}
\UseTblrLibrary{booktabs}
\usepackage{subcaption}

\usepackage{tikz}
\usetikzlibrary{
	shapes.multipart,
	arrows.meta,
	petri,
	positioning,
	calc,
	decorations.pathreplacing,
}

\usepackage{environ}

\usepackage[autostyle=true]{csquotes}
\makeatletter
\let\c@author\relax
\makeatother
\usepackage[
	backend=biber,
	style=numeric-comp,
	sorting=nyvt,
	sortcites=true,
	uniquename=init,
	giveninits=true,
	useprefix,
	maxbibnames=99,
	maxcitenames=2,
	minnames=1,
]{biblatex}
\DeclareSortingNamekeyTemplate{
	\keypart{\namepart{family}}
	\keypart{\namepart{prefix}}
	\keypart{\namepart{given}}
	\keypart{\namepart{suffix}}
}
\DeclareFieldFormat{eprint:urn}{%
	URN\addcolon\space
	\ifhyperref
	{\href{http://www.nbn-resolving.org/#1}{\nolinkurl{#1}}}
	{\nolinkurl{#1}}
}
\DeclareSourcemap{
	\maps[datatype=bibtex]{
		\map[overwrite]{
			\step[fieldsource=shortseries, fieldtarget=series]
			\step[fieldset=eventtitle, null]
			\step[fieldset=eventdate, null]
			\step[fieldset=venue, null]
		}
		\map[overwrite]{
			\step[fieldsource=doi, final]
			\step[fieldset=issn, null]
			\step[fieldset=url, null]
			\step[fieldset=eprint, null]
		}
	}
}
\usepackage{hyperref}
\usepackage[noabbrev, capitalize]{cleveref}
\usepackage{orcidlink}
\usepackage{todonotes}

\newcommand*{\lbfc}{\textsc{lbfc}}

\newcommand*{\M}{\mathcal{M}}
\newcommand*{\N}{\mathbb N}

\newcommand*{\Q}{\mathbb Q}
\newcommand*{\costRange}{\Q_{\geq0}}
\renewcommand*{\vec}{\boldsymbol}
\newcommand*{\mat}[1]{\mathbf{#1}}
\DeclareMathOperator*{\seqsum}{\textstyle\sum}
\newcommand*{\pset}{{\bullet}}
\newcommand*{\nomove}{{\gg}}
\newcommand*{\init}{\mathit{init}}
\newcommand*{\final}{\mathit{final}}
\newcommand*{\aux}{\mathit{aux}}
\newcommand*{\Sigmatau}{\Sigma\cup\{\tau\}}
\makeatletter
\newcommand*{\L@Icon}[1]{\text{\raisebox{\depth}{\resizebox*{!}{1.5ex}{\faIcon{#1}}}}}
\newcommand*{\Lstart}{\L@Icon{play}}
\newcommand*{\Lstop}{\L@Icon{stop}}
\newcommand*{\Lskip}{\L@Icon{fast-forward}}
\newcommand*{\Lreset}{\L@Icon{undo}}
\newcommand*{\Lacc}{\L@Icon{check}}
\newcommand*{\Lrej}{\L@Icon{times}}
\newcommand*{\Laux}{\L@Icon{h-square}}
\makeatother
\DeclarePairedDelimiter\abs\lvert\rvert
\DeclarePairedDelimiter\seq\langle\rangle
\providecommand\given{}
\makeatletter
\newcommand\@SetSymbol[1][]{%
	\nonscript\:#1\vert
	\allowbreak
	\nonscript\:
	\mathopen{}}
\DeclarePairedDelimiterX\Set[1]\lbrace\rbrace{%
	\renewcommand\given{\@SetSymbol[\delimsize]}
	#1
}
\DeclarePairedDelimiterX\Multiset[1]\lbrack\rbrack{%
	\renewcommand\given{\@SetSymbol[\delimsize]}
	#1
}
\makeatother
\DeclarePairedDelimiter{\supp}{\langle}{\rangle}
\DeclarePairedDelimiter{\fire}{\lbrack}{\rangle}
\DeclarePairedDelimiter{\reach}{\lbrack}{\rangle}
\DeclarePairedDelimiterXPP{\scfs}[1]{\phi}{(}{)}{}{#1}
\DeclarePairedDelimiterXPP{\lang}[1]{\mathcal L}{(}{)}{}{#1}
\DeclarePairedDelimiterXPP{\reachGraph}[1]{\mathcal R}{(}{)}{}{#1}
\makeatletter
\MHInternalSyntaxOn
\def\MH_bigshuffle_scaler:N #1{%
	\vcenter{\hbox{#1$\m@th\mkern-.5mu\shuffle\mkern-.5mu$}}}
\def\MH_bigshuffle_inner: {
	\mathchoice{\MH_bigshuffle_scaler:N \huge}         % display style
	{\MH_bigshuffle_scaler:N \LARGE}        % text style
	{\MH_bigshuffle_scaler:N {}}            % script style
	{\MH_bigshuffle_scaler:N \footnotesize} % script script style
}
\def\MH_csym_bigshuffle: {\mathop{\MH_bigshuffle_inner:}\nolimits}
\AtBeginDocument{
	\providecommand\bigshuffle{\MH_csym_bigshuffle:}
}
\MHInternalSyntaxOff
\makeatother

\DeclareMathOperator{\ptseq}{\rightarrow}
\DeclareMathOperator{\ptxor}{\times}
\DeclareMathOperator{\ptpar}{\wedge}
\DeclareMathOperator{\ptloop}{\circlearrowleft}

\makeatletter
\newcommand*{\@complexity}[1]{\mathsf{#1}}

\renewcommand*{\P}{\@complexity{P}}
\newcommand*{\NP}{\@complexity{NP}}
\newcommand*{\PSPACE}{\@complexity{PSPACE}}
\newcommand*{\EXPSPACE}{\@complexity{EXPSPACE}}
\newcommand*{\Ackermann}{\@complexity{Ackermann}}
\makeatother

\makeatletter
\newcommand*{\@compprob}[1]{\ensuremath{\textup{\textsc{#1}}}}
\newcommand*{\CPalign}{\@compprob{Align}}
\newcommand*{\CPreach}{\@compprob{Reach}}
\newcommand*{\CPcostReach}{\@compprob{MinCostReach}}
\newcommand*{\CPmember}{\@compprob{Member}}
\newcommand*{\CPwordshuffle}{\@compprob{WordShuffle}}
\newcommand*{\CPmembershuffle}{\@compprob{ShuffleMember}}
\makeatother

\newenvironment{proofof}[1]
{\trivlist\PRstyle\item[]{\bfseries Proof of #1:}\newline}{\QED\endtrivlist}
\def\squareforqed{\hbox{\rlap{$\sqcap$}$\sqcup$}}
\def\QED{\ifmmode\squareforqed\else{\unskip\nobreak\hfil
		\penalty50\hskip1em\null\nobreak\hfil\squareforqed
		\parfillskip=0pt\finalhyphendemerits=0\endgraf}\fi}
\NewEnviron{problem}[1]{%
	\begin{center}\fbox{\parbox{.9\columnwidth}{%
				{\centering\scshape #1\par}%
				\parskip=1ex
				\everypar{\hangindent=1em}%
				\BODY
	}}\end{center}
}

\newcommand{\THstyleIt}{\theorembodyfont{\itshape}\theoremheaderfont{\normalfont\bfseries}}
\theoremstyle{plain}
\THstyle
\newtheorem{definition}{Definition}[section]
\THstyleIt
\newtheorem{theorem}{Theorem}[section]
\THstyle
\newtheorem{fact}{Fact}[section]
\THstyleIt
\newtheorem{lemma}{Lemma}[section]
\THstyle
\newtheorem{example}{Example}[section]
\THstyleIt
\newtheorem{assumption}{Assumption}[section]
\newtheorem{proposition}{Proposition}[section]
\newtheorem{remark}{Remark}[section]
\newtheorem{corollary}{Corollary}[section]
\newtheorem{claim}{Claim}[section]

\makeatletter
\newlength\PN@BaseSize
\tikzset{
	node distance=4\PN@BaseSize,
	every label/.append style={
		inner sep=0,
		text height=.7\PN@BaseSize,
		text depth=.3\PN@BaseSize,
	},
	every place/.style={
		draw,
		thick,
		minimum size=1.5\PN@BaseSize,
	},
	every transition/.append style={
		draw,
		thick,
		minimum height=1.5\PN@BaseSize,
		minimum width=1.5\PN@BaseSize,
		text height=.7\PN@BaseSize,
		text depth=.3\PN@BaseSize,
	},
	tau-transition/.style={
		transition,
		fill,
		minimum width=.75\PN@BaseSize,
	},
	labels/.code={\tikzset{
			place/.append style={label=#1:##1},
			transition/.append style={label=#1:##1},
			tau-transition/.append style={label=#1:##1},
		}
	},
	labels/.default=below,
	pre/.style={Latex-,thick},
	post/.style={-Latex,thick},
	final/.style={double},
	position/.style args={#1 degrees from #2}{%
		at=(\iftikz@lib@ignore@size#2\else#2.#1\fi),%
		shift=(#1:\tikz@node@distance),%
	},
	position/.append code={
		\iftikz@lib@ignore@size\pgfkeysalso{/tikz/anchor=center}\else\pgfkeysalso{/tikz/anchor={#1+180}}\fi
	},
	between/.code args={#1 and #2}{
		\iftikz@lib@ignore@size
		\pgfkeysalso{/tikz/at=($(#1)!.5!(#2)$)}
		\else
		\pgfmathanglebetweenpoints{\pgfpointanchor{#1}{center}}{\pgfpointanchor{#2}{center}}
		\pgfcoordinate{__middle}{\pgfpointlineattime{.5}{\pgfpointanchor{#1}{\pgfmathresult}}{\pgfpointanchor{#2}{\pgfmathresult+180}}}
		\pgfkeysalso{/tikz/at=(__middle)}
		\fi
	},
}
\makeatother

\begin{document}
	\title{Computational Complexity of Alignments}
	\author{Christopher T. Schwanen\corresponding \orcidlink{0000-0002-3215-7251}\\
		Chair of Process and Data Science (PADS)\\
		RWTH Aachen University
		\and Wied Pakusa \orcidlink{0009-0004-6302-4445}\\
		Faculty of Mathematics, Informatics and Technology\\
		Koblenz University of Applied Sciences
		\and Wil M. P. van der Aalst \orcidlink{0000-0002-0955-6940}\\
		Chair of Process and Data Science (PADS)\\
		RWTH Aachen University}
	\address{schwanen@pads.rwth-aachen.de}
	\maketitle
	\runninghead{C.\,T.~Schwanen, W.~Pakusa, and W.\,M.\,P.~van~der~Aalst}{Computational Complexity of Alignments}
	
	\begin{abstract}
		In process mining, alignments quantify the degree of deviation between an observed event trace and a business process model and constitute the most important conformance checking technique.
		We study the algorithmic complexity of computing alignments over important classes of Petri nets.
		First, we show that the alignment problem is $\PSPACE$-complete on the class of safe Petri nets and also on the class of safe and sound workflow nets.
		For live, bounded, free-choice systems, we prove the existence of optimal alignments of polynomial length which positions the alignment problem in $\NP$ for this class. We further show that computing alignments is $\NP$-complete even on basic subclasses such as process trees and T-systems. We establish NP-completeness  on several related classes as well, including acyclic systems.
		Finally, we demonstrate that on live, safe S-systems the alignment problem is solvable in $\P$ and that both assumptions (liveness and safeness) are crucial for this result.
	\end{abstract}
	\begin{keywords}
		Process Mining, Conformance Checking, Alignments, Computational Complexity, Petri Nets, Workflow Nets, Process Trees
	\end{keywords}
	
	\section{Introduction}
	
	%Data analysis plays a major role in modern business process management.
	%When gathering log data from a process-centric system, the resulting structures differ significantly from the kind of data commonly studied in statistics and machine learning, such as text data, images, or time series.
	In process-centric applications, log data comes in form of \emph{event logs} which consist of \emph{sequences} of \emph{events}, called \emph{traces}. Each trace, in turn, corresponds to a particular execution of a business process. The events are associated with timestamps and activity identifiers.
	Given this particular structure, classic data analysis approaches turn out to be inadequate for this particular kind of data.
	The objective of \emph{process mining}~\autocite{vanderAalst2016} is to explore and develop specialized data-driven techniques to effectively extract meaningful insights from event logs.
	
	A seminal subfield of process mining, known as \emph{conformance checking}~\autocite{CarmonaDSW2018}, investigates methods for comparing observed behavior of a process, usually given in the form of an event log, against a reference model, typically given as Petri net or BPMN diagram.
	Clearly, in almost all organizations and application domains it is important to ensure compliance with the intended process design.
	Conformance checking aims at finding deviations between the observed behavior and the given model which allows for identifying inefficiencies, regulatory violations, or, on the positive side, for discovering best practices that have not been thought of when the process was designed in the first place.
	% Applications can be found, for example, in the public sector~\autocite{RozinatA2008}, healthcare~\autocite{LuMFA2014}, IoT networks~\autocite{SeigerZBGW2020}, or cyber security~\autocite{MyersSRF2018}.
	Leading vendors of process mining tools have also integrated conformance checking algorithms into their products which underpins the significance of developing efficient and effective conformance checking algorithms in academia.
	While different proposals for conformance metrics have been made during the last decades, as of today, \emph{alignments} \autocite{Adriansyah2014} are considered to be the state-of-the-art technique in conformance checking. We refer to~\autocite{CarmonaDSW2018} for a thorough introduction to this field.
	
	The system in \cref{fig:align:example} specifies a process with two observable events $a$ and~$b$ in the form of a Petri net.
	We view a Petri net as both, a formal model defining a process, and as a \emph{language acceptor}. In both cases, we agree on an \emph{initial} and a \emph{final} marking and consider runs of the model (so called \emph{firing sequences}) from the initial to the final marking. For each run, the transition labels form a finite \emph{word}, also called a \emph{trace}. Transitions can also be unlabeled (those are called \emph{silent}). The collection of generated words constitutes the \emph{language} accepted by the Petri net.
	For the Petri net in \cref{fig:align:example}, the initial marking is highlighted (one single token in place $p_\init$) and the intended final marking has a single token in $p_\final$.
	Relative to these markings, the Petri net accepts the traces $\seq{a,a,b,b}$ and $\seq{a,b,a,b}$, but not $\seq{a,b,a,a}$. The accepted language is $(aab | aba)^+ b$.
	
	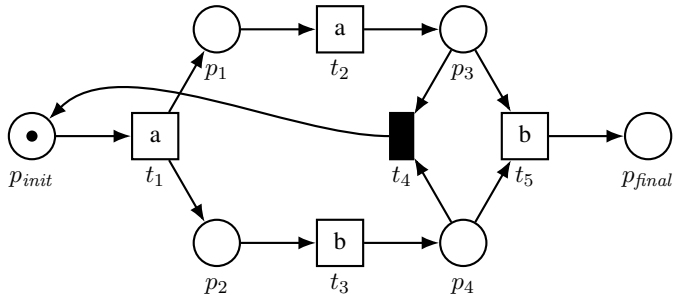
\begin{figure}
		\centering
		\begin{tikzpicture}[scale=.85, transform shape, on grid, labels]
			\node[place={$p_\init$},      tokens=1]                     (pi)  {};
			\node[transition={$t_1$},     position=  0 degrees from pi] (t1) {a} edge[pre] (pi);
			\node[place={$p_1$},          position= 60 degrees from t1] (p1)  {} edge[pre] (t1);
			\node[place={$p_2$},          position=-60 degrees from t1] (p2)  {} edge[pre] (t1);
			\node[transition={$t_2$},     position=  0 degrees from p1] (t2) {a} edge[pre] (p1);
			\node[transition={$t_3$},     position=  0 degrees from p2] (t3) {b} edge[pre] (p2);
			\node[place={$p_3$},          position=  0 degrees from t2] (p3)  {} edge[pre] (t2);
			\node[place={$p_4$},          position=  0 degrees from t3] (p4)  {} edge[pre] (t3);
			\node[transition={$t_5$},     position=-60 degrees from p3] (t5) {b} edge[pre] (p3) edge[pre] (p4);
			\node[tau-transition={$t_4$}, position=240 degrees from p3] (t4)  {} edge[pre] (p3) edge[pre] (p4) edge[post, out=180, in=45, looseness=.75] (pi);
			\node[place={$p_\final$},     position=  0 degrees from t5] (pf)  {} edge[pre] (t5);
		\end{tikzpicture}
		\caption{Example of a Petri net (sound free-choice workflow net)}
		\label{fig:align:example}
	\end{figure}
	
	Dissimilarity is quantified by \enquote{\emph{aligning}} a given trace against the model, which means that we \emph{insert} and \emph{delete} activities into and from the trace until it fits with the model. The goal is to find an \emph{optimal alignment} which minimizes the number of required change operations.
	For example, the trace $\sigma=\seq{a,b,a,a}$ deviates from the model and would incur non-zero costs in an optimal alignment.
	There are two change operations: \emph{insertions} and \emph{deletions} of activities in the given trace.
	This resembles classic \emph{edit distances} studied in formal language theory.
	However, for alignments it is not necessarily the case that all change operations incur the same amount of costs. This is motivated by the business context: for example, it might be unproblematic to remove certain activities from the observed trace while the removal of others had dramatic impact on the correctness of the process. Of course, the same might be true for certain parts of the normative model.
	The costs of the change operations can also be made dependent on the specific activity label and the current transition of the normative model.
	This is formalized as follows: we assume that the business process and the trace are executed and generated concurrently.
	While the model evolves from its initial to its final marking, and, concurrently, the trace from its first to its last event, there are three different types of \emph{moves}:
	\begin{enumerate}
		\item the system \emph{and} the trace take one step \emph{synchronously}, i.e., the system fires a transition $t$ labeled with $a$ and the trace moves on via its next letter $a$; this is called a \emph{synchronous} move $(a, t)$;
		\item the system fires $t$, but the trace maintains its state; this corresponds to an \emph{insert} operation where the label of $t$ (which might be empty) is inserted into the trace; such move is called \emph{model move} and is denoted by $(\nomove, t)$ where $\nomove$ is a distinguished \enquote{no-move} symbol;
		\item dually, the model can stay in its current state while only the trace proceeds one letter further which corresponds to a \emph{deletion} operation; such steps are called \emph{log moves} and can be written as $(a, \nomove)$.
	\end{enumerate}
	An alignment is complete when both, the trace and the model, have reached their final states, i.e., the model has reached its final marking and all events of the trace have been processed.
	
	\begin{figure}
		\centering
		\begin{subfigure}[b]{0.55\linewidth}
			\centering
			\begin{booktabs}{*{8}{c}}
				 $a$  &  $b$  & $\nomove$ & $\nomove$ &  $a$  &  $a$  & $\nomove$ & $\nomove$ \\ \midrule
				 $a$  &  $b$  &    $a$    &  $\tau$   &  $a$  &  $a$  &    $b$    &    $b$    \\
				$t_1$ & $t_3$ &   $t_2$   &   $t_4$   & $t_1$ & $t_2$ &   $t_3$   &   $t_5$
			\end{booktabs}
			\caption{Four synchronous moves, three model moves (inserts), one silent model move}
			\label{fig:ex:intro:alignments:a}
		\end{subfigure}
		\hfill
		\begin{subfigure}[b]{0.4\linewidth}
			\centering
			\begin{booktabs}{*{5}{c}}
				 $a$  &  $b$  &  $a$  & $\nomove$ &    $a$    \\ \midrule
				 $a$  &  $b$  &  $a$  &    $b$    &           \\
				$t_1$ & $t_3$ & $t_2$ &   $t_5$   & $\nomove$
			\end{booktabs}
			\caption{Three synchronous moves, one model move (insert), one log move (deletion)}
			\label{fig:ex:intro:alignments:b}
		\end{subfigure}
		\caption{Two possible alignments for $w=\seq{a,b,a,a}$}
		\label{fig:ex:intro:alignments}
	\end{figure}
	\Cref{fig:ex:intro:alignments} shows two alignments for a trace $\sigma=\seq{a,b,a,a}$ and the system from \cref{fig:align:example}.
	The first row indicates the progress in the trace, while the last two rows contain the labels and transitions fired by the model.
	For the first alignment, we have four insert operations (model moves) including a silent one ($t_4$ is unlabeled, indicated by $\tau$). The trace generated by the model is $\seq{a,b,a,a,a,b,b}$.
	The model trace $\seq{a,b,a,b}$ in the second alignment results in one deletion (log move) and one insertion (model move), which is optimal regarding the number of change operations.
	Thus, the costs of aligning $\sigma$ and the model would be $2$ in this example.
	
	% We study the algorithmic complexity of computing optimal alignments over different classes of process models.
	
	%There are two main algorithmic approaches for computing alignments.
	%The first reduces the alignment problem to a reachability problem for Petri nets and solves the latter by using heuristics for pruning the, usually exponentially large, search space.
	%The second is based on encoding (parts or variants of) the alignment problem as an Integer Linear Program (ILP) and using ILP-solvers to obtain (possibly approximate) solutions.
	It has frequently been observed that finding optimal alignments is computationally hard. %despite different algorithmic strategies to improve scalability.
	Despite numerous algorithmic proposals that pushed the barrier of manageable event logs further and further, a systematic analysis of the computational structure of alignments was missing, so that surprisingly, our previous work~\autocite{SchwanenPA2025b} was the first such analysis.
	In particular, for practical applications it would be beneficial to know which parameters of the process model influence the complexity of the alignment problem and, thus, which algorithmic approaches are most promising on certain inputs.
	In the present article, we continue to overcome this situation: we show that on important classes of process models, the alignment problem is \emph{computationally harder} than on other classes (assuming $\P \neq \NP \neq \PSPACE$).
	
	In \cref{sec:safenets}, we show that on safe Petri nets, the alignment problem is $\PSPACE$-complete and has the same complexity as the \emph{reachability problem} (cf.\@ \cref{sec:reachability}).
	In an attempt to reduce the complexity, we then consider \emph{soundness requirements} in the form of \emph{sound workflow nets}, which are the de-facto standard in process mining.
	In \cref{sec:wfnets}, we show that adding soundness alone does not change the picture: both, the reachability and the alignment problem, remain $\PSPACE$-complete on safe and sound workflow nets.
	Thus, we continue and impose further restrictions on the model classes.
	Specifically, we restrict the allowed \emph{choice constructs} via the notion of \emph{free-choice} Petri nets.
	In this way, we arrive at the well-studied class of \emph{live, bounded, free-choice systems} (\lbfc-systems).
	This class has particularly good algorithmic features and a well-developed structure theory. One structural property that we heavily exploit in our work is known as the \emph{Shortest Sequence Theorem}~\autocite{DeselE1995a,DeselE1995}. It gives a polynomial bound on the distance between connected markings.
	As a first step, in \cref{sec:lbfc_systems}, we apply the structure theory of Petri nets to show that \lbfc-systems also allow for optimal alignments of polynomial length.
	This leads to new algorithmic approaches: for instance, alignments on \lbfc-systems could be computed via mixed integer linear programs~\autocite[cf.][]{SchwanenPA2025}. Compared to general safe Petri nets, this improves the complexity from $\PSPACE$ to $\NP$.
	We set out to obtain a matching lower bound for the alignment problem in the following sections.
	
	Specifically, and on the more negative side, we show that $\NP$ is the best we can hope for almost all interesting classes of Petri net systems by first drawing a connection to the \emph{membership problem} (cf.\@ \cref{sec:membership}).
	We show in \cref{sec:process_trees,sec:stsystems,sec:acyclic:systems} that the membership problem is $\NP$-complete for languages that can be defined by very simple process models, namely for \emph{process trees}, for \emph{T-systems} and for \emph{acyclic systems}. In particular, for the former two classes (process trees and T-systems), reachability is solvable in polynomial time, which means that the alignment problem is provably harder (assuming $\P\neq\NP$).
	Both classes have been studied extensively as models which allow no choices (T-systems) or only local choices (process trees). In~\autocite{SchwanenPA2025,SchwanenPA2025a,SchwanenPA2025c}, we already studied the complexity of alignments on process trees before. We showed that the alignment problem is in $\NP$ for general process trees and in $\P$ for process trees with \emph{unique labels}, that is, each activity occurs only once in the entire process tree.
	To complete the picture for acylic systems, we also establish a matching $\NP$-upper bound in \cref{sec:acyclic:systems}.
	
	To complement our findings with at least one model class with efficiently computable alignments, we finally turn our attention to
	\emph{live, safe S-systems}, see~\cref{sec:stsystems}.
	These are safe systems where each transition has at most one input and one output place and where the initial marking consists of a single token. We show that on this class, the alignment problem is solvable in polynomial time (denoted by $\P$). However, we further show that even on this seemingly simple class of systems, we cannot relax the assumptions: if we drop liveness or safeness, then the alignment problem becomes $\NP$-complete again.
	A complete overview of our results is given in \cref{tab:overview:results} at the end of this article.
	
	\section{Related Work}
	
	Alignments, introduced by \autocite{Adriansyah2014,AdriansyahDA2011}, are the state-of-the-art technique and considered the \enquote{gold standard} in conformance checking~\autocite{CarmonaDSW2018,CarmonaDW2022}.
	It is ubiquitously mentioned that computing alignments (with respect to the standard algorithms based on $A^*$) suffers from the so-called \emph{state explosion problem} \autocite[cf.][]{Valmari1998}. This prevents the computation of alignments for large models or event logs. As a consequence, different algorithmic approaches were investigated to improve scalability of alignment computations. For example, in~\autocite{BloemenPA2018} the authors use symbolic representations of the (exponential) state space to reduce the memory footprint of the alignment computation. Another common approach is to encode alignments into related problems for which well-adapted algorithms exists. This idea is investigated, for example, in~\autocite{deLeoniM2017} where alignment computations are represented as a planning problem.
	Another angle is to improve heuristics for the $A^*$-algorithm: in~\autocite{vanDongen2018}, for instance, the author combines Petri net theory and linear programming to improve the runtime on several benchmarks significantly.
	Besides this, several approximative algorithms have been proposed as well, see, e.g., \autocite{TaymouriC2016} for a scheme based on integer linear programming and~\autocite{TaymouriC2018} for a genetic method to compute optimal alignments. In addition, \autocite{SchusterZA2021} presents an approximation algorithm specifically tailored to process trees.
	
	However, these studies do not provide formal guarantees and in preparation of \autocite{SchwanenPA2025b}, we did hardly find any source that investigates the algorithmic complexity of alignments. In general, it has been recognized that the reachability problem for Petri nets is a lower bound for the complexity \autocite{CarmonaDSW2018,CarmonaDW2022}.
	A statement on specific model classes is provided in \autocite{BoltenhagenCC2019,BoltenhagenCC2021}. There, the authors showed that computing alignments is $\NP$-complete on the class of safe Petri nets when the length of permissible alignments is limited.
	Beyond that, we are not aware of any results studying the computational complexity of alignments over different process models.
	
	The reachability problem for Petri nets provides a lower bound for computing alignments.
	In fact, the expressive power and algorithmic properties of formal languages defined by (general) Petri nets have already been studied in the 1970s, e.g., by \textcite{Hack1976}. Among many other results, he proved the inter-reducibility between the membership problem and the reachability problem, a construction we will use later on.
	Most relevant for us, however, are complexity bounds for classes of safe Petri nets (which means that places can hold at most one token).
	For safe Petri nets, it was shown that reachability is $\PSPACE$-complete~\autocite{JonesLL1977,ChengEP1993,ChengEP1995}.
	Later on, it turned out that almost all interesting computational problems on safe Petri nets are $\PSPACE$-hard. A comprehensive overview of results is given in~\autocite{ChengEP1993,ChengEP1995,Esparza1998,EsparzaN1994,JonesLL1977} as well as in \cref{sec:reachability} and \autocite{SchwanenP2026}, which also include more recent results.
	Because of the high complexity, people also looked into more restricted classes of Petri nets with better algorithmic properties. Here, \emph{free-choice} Petri nets form the most relevant example on which important computational problems become tractable, see~\autocite{DeselE1995}.
	
	Finally, we like to mention work on the \emph{error correction problem} for regular languages. Here, the goal is to compute the edit distance between an input string and a regular expression. This is strongly related to the alignment problem with the most significant difference being the presentation of the regular language (a safe Petri net also defines a regular language, but in an exponentially more succinct form). A first efficient algorithm for the error correction problem for regular languages was given by \textcite{Wagner1974} in the 1970s, a complexity-theoretic analysis can be found in~\autocite{Pighizzini2001}. For context-free grammars (CFGs), a polynomial-time error-correction algorithm can be found in~\autocite{AhoP1972}. Our work might be interesting for the error correction problem as well, as it provides a new perspective on the complexity of the problem for other presentations of regular languages.
	
	\section{Preliminaries}
	
	Let $\N \coloneq \Set{0, 1, 2, \dots}$ denote the natural numbers.
	For any tuple $a$, $\pi_i(a)$ denotes the \emph{projection} on its $i$th element, i.e., $\pi_i\colon A_1\times\dots\times A_n\to A_i,(a_1,\dots,a_n)\mapsto a_i$.
	
	\begin{definition}[Multiset]
		A \emph{multiset} $M$ over a set $A$ is a function $M\colon A\to\N$; thus, for any $a\in A$, $M(a)$ indicates the multiplicity of $a$ in the multiset~$M$.
		The set of all multisets over $A$ is given by $\N^A$.
		We also use the notation $\Multiset*{a^{M(a)}\given a\in A}$ for a multiset $M\in\N^A$.
		Any set $A$ can also be considered a multiset by assigning $1$ to each item.
		For multisets $M,M'\in\N^A$, we use the standard notation for functions, e.g., $M+M'$, $M\leq M'$, etc.
		The \emph{support} (or \emph{domain}) of a multiset $M\in\N^A$, denoted by $\supp{M}$, is the set of distinct elements contained in $M$, i.e., $\supp{M}\coloneq\Set{a\in A\given M(a)>0}$.
		%The restriction of a multiset $M\in\N^A$ to a set $B\subseteq A$ is the multiset $M\vert_{B}\coloneq\Multiset*{a^{M(a)}\given a\in B}\in\N^B$ containing only the elements of $B$.
	\end{definition}
	Whenever we compare or operate with multisets $M\in\N^A$ and $M'\in\N^B$ over different domains $A$ and $B$, we implicitly assume that they are first extended with 0 for each item of $A\cup B$ not in their domain.
	
	\begin{definition}[Sequence, Permutation]
		\emph{Sequences} with index set $I$ over a set $A$ are denoted by $\sigma=\seq{a_i}_{i\in I}\in A^I$.
		The \emph{length} of a sequence $\sigma$ is written as $\abs\sigma$ and the set of all finite sequences over $A$ is denoted by $A^*$.
		Given two sequences $\sigma$ and $\sigma'$, $\sigma\cdot\sigma'$ (or $\sigma\sigma'$ in short) denotes the concatenation of the two sequences.
		For a sequence $\sigma=\seq{a_i}_{i\in I}\in A^I$, the notation $\seqsum\sigma$ is used as a shorthand for $\sum_{i\in I}a_i$.
		The restriction of a sequence $\sigma\in A^*$ to a set $B\subseteq A$ is the subsequence $\sigma\vert_{B}$ of $\sigma$ consisting of all elements in $B$.
		A function $f\colon A\to B$ can be applied to a sequence $\sigma\in A^*$ given the recursive definition $f(\seq{})\coloneq\seq{}$ and $f(\seq a\cdot\sigma)\coloneq\seq{f(a)}\cdot f(\sigma)$. %Applying a partial function $f\colon A\pto B$ to $\sigma$ is defined analogously using $f(\sigma\vert_{\dom(f)})$.
		The \emph{Parikh vector} of a sequence $\sigma\in A^*$, denoted by $\vec{\sigma}$, is its multiset representation defined by $\vec{\sigma}(a)\coloneq\abs*{\sigma\vert_{\Set{a}}}$ for every $a\in A$, and $\supp{\vec{\sigma}}$ provides the \emph{support} of $\sigma$.
		A sequence $\sigma'\in A^*$ is a \emph{permutation} of $\sigma$ if and only if $\vec{\sigma'}=\vec{\sigma}$.
	\end{definition}
	
	% Since we study alignments in this paper, we consider Petri nets to have labels.
	
	\begin{definition}[Alphabet, Word, Language]
		An \emph{alphabet} $\Sigma$ is a finite, non-empty set of \emph{labels} (also referred to as \emph{activities}). A \emph{word} $w\in\Sigma^*$ is a sequence of labels and a \emph{language} $\mathcal{L}\subseteq\Sigma^*$ is a set of words.
	\end{definition}
	
	\begin{definition}[Petri Net]
		Let $\Sigma$ be an alphabet. A \emph{Petri net} $N$ is a weakly-connected bipartite directed graph $N = (P, T, F, \ell)$ where $P$ and $T$, $P \cap T = \emptyset$ are disjoint finite sets of vertices and $F \subseteq (P \times T) \cup (T \times P)$ is the set of arcs. In a Petri net, $P$ is called the set of \emph{places}, $T$ the set of \emph{transitions}, and $F$ the \emph{flow relation}. In addition, $\ell\colon T \to \Sigmatau$ is a \emph{labeling function}. A transition $t \in T$ is \emph{labeled} if $\ell(t) \in \Sigma$ and it is \emph{silent} if $\ell(t)=\tau$. % Given a vertex $v \in P \cup T$, its \emph{pre-set} $\pset v$ (\emph{post-set} $v\pset$) is the set of all direct predecessors (successors), i.e., $\pset v \coloneq \Set{u \in P \cup T \given (u, v) \in F}$ and $v\pset \coloneq \Set{u \in P \cup T \given (v, u) \in F}$.
		Given a vertex $v \in P \cup T$, its \emph{pre-set} $\pset v$ and \emph{post-set} $v\pset$ are defined by $\pset v \coloneq \Set{u \in P \cup T \given (u, v) \in F}$ and $v\pset \coloneq \Set{u \in P \cup T \given (v, u) \in F}$.
		With regard to a place (transition), its pre- and post-set are also called \emph{input} and \emph{output transitions} (\emph{places}).
	\end{definition}
	
	\begin{definition}[Free-Choice Petri Net]
		%A Petri net $N=(P,T,F)$ is \emph{free-choice} if any two places either share all or none of their output transitions, i.e., $\forall p,p'\in P\colon p\pset=p'\pset \vee p\pset\cap p'\pset=\emptyset$.
		A Petri net $N=(P,T,F,\ell)$ is \emph{free-choice} if any two transitions either share all or none of their input places, i.e., $\forall t,t'\in T\colon \pset t=\pset t' \text{ or } \pset t\cap\pset t'=\emptyset$.
	\end{definition}
	
	\begin{definition}[Conflict-Free Petri Net]
		A Petri net $N=(P,T,F,\ell)$ is \emph{conflict-free} if any place $p\in P$ has either at most one output transition or all output transitions are on a self-loop with that place, i.e., $\forall p\in P\colon \abs{p\pset}\geq1 \implies p\pset\subseteq\pset p$.
	\end{definition}
	
	\begin{definition}[Marking, System, Firing Rule]\label{def:marking}
		Given a Petri net $N=(P,T,F,\ell)$% with labeling function $\ell\colon T\pto\Sigma$ over an alphabet $\Sigma$
		, a \emph{marking} $M\in \N^P$ is a multiset where $M(p)$ is the number of \emph{tokens} at place $p\in P$.
		A place $p\in P$ is \emph{marked} at $M$ if $M(p)>0$. A place that is not marked is called \emph{empty}.
		%The set of all markings of the Petri net $N$ is denoted by $\N^P$.
		The pair $(N,M)$ of a Petri net $N=(P,T,F,\ell)$ and a marking $M\in\N^P$ is called a \emph{system}.
		A transition $t\in T$ is \emph{enabled} in $M$, denoted by $(N,M)\fire t$, if and only if each of its input places $p\in\pset t$ is marked, i.e., $\forall p\in\pset t\colon M(p)>0$. An enabled transition may \emph{fire}, denoted by $(N,M)\fire{t}(N,M')$, and firing results in a new marking $M'$:
		\begin{equation*}
			M'(p)=\begin{cases}
				M(p)-1 & \text{if } p\in\pset t\wedge p\notin t\pset, \\
				M(p)+1 & \text{if } p\notin\pset t\wedge p\in t\pset, \\
				M(p)   & \text{otherwise}.
			\end{cases}
		\end{equation*}
		
		A sequence of transitions $\sigma=\seq{t_i}_{i=1}^n\in T^*$ is called a \emph{firing sequence} of $(N,M)$ if for every transition $t_i$ of the sequence holds that $(N,M_{i-1})\fire{t_i}$ and $(N,M_{i-1})\fire{t_i}(N,M_i)$ where $M_0=M$ and $M_n=M'$. Firing such a sequence is denoted by $(N,M)\fire{\sigma}(N,M')$. The empty sequence $\seq{}$ is always enabled and firing the empty sequence leaves the marking unchanged, i.e., $(N,M)\fire{\seq{}}(N,M)$. A marking $M'$ is \emph{reachable} if a firing sequence $\sigma\in T^*$ exists such that $M'$ is the resulting marking, i.e., $(N,M)\fire{\sigma}(N,M')$. The set of all reachable markings of $(N,M)$ is denoted by $\reach{N,M}\coloneq\Set*{M'\in\N^P\given\exists\sigma\in T^*\colon(N,M)\fire{\sigma}(N,M')}$.
		%
		% A transition $t\in T$ is \emph{dead} if it is not enabled in any of the reachable markings, i.e., $\nexists M'\in\reach{N,M}\colon(N,M')\fire{t}$. If a marking $M'\in\N^P$ does not enable any transition, i.e., $\nexists t\in T\colon(N,M')\fire{t}$, the marking $M'$ is called \emph{dead}.
	\end{definition}
	
	\begin{definition}[Boundedness, Safeness]
		Given a $b\in\N$, a system $(N,M_0)$ with $N=(P,T,F,\ell)$ is \emph{$b$-bounded} if $b$ is a bound for any reachable marking, i.e., $\forall M\in\reach{N,M_0}\colon\forall p\in P\colon M(p)\leq b$. $(N,M_0)$ is \emph{safe} if it is $1$-bounded.
	\end{definition}
	
	\begin{definition}[(Quasi-)Liveness]
		Let $(N,M_0)$ with $N=(P,T,F,\ell)$ be a system.
		A transition $t\in T$ is \emph{quasi-live} if there exists a reachable marking $M'\in\reach{N,M_0}$ which enables $t$, i.e., $\exists M\in\reach{N,M_0}\colon(N,M)\fire{t}$.
		A transition $t\in T$ is \emph{live} if it is quasi-live for every reachable marking $M\in\reach{N,M_0}$, i.e., $\forall M\in\reach{N,M_0}\colon\exists M'\in\reach{N,M}\colon(N,M')\fire{t}$.
		%A system $(N,M_0)$ is \emph{live} if every transition $t\in T$ is live.
		A system $(N,M_0)$ is \emph{(quasi-)live} if every transition $t\in T$ is (quasi-)live.
	\end{definition}
	
	\begin{definition}[Home Marking, Cyclic System]
		Let $(N,M_0)$ be a system. A marking $M\in\reach{N,M_0}$ is a \emph{home marking} if it can be reached from any reachable marking, i.e., $\forall M'\in\reach{N,M_0}\colon M\in\reach{N,M'}$.
		A system $(N,M_0)$ is \emph{cyclic} if its initial marking $M_0$ is a home marking.
	\end{definition}
	
	\begin{definition}[Persistent System]
		A system $(N,M_0)$ with $N=(P,T,F,\ell)$ is \emph{persistent} if firing a transition does not disable any other enabled transition, i.e., $\forall t_1,t_2\in T,t_1\neq t_2\colon \forall M\in\reach{N,M_0}\colon (N,M)\fire{t_1} \wedge (N,M)\fire{t_2} \implies (N,M)\fire{t_1 t_2}$.
	\end{definition}
	
	\begin{definition}[Accepting System]
		%An \emph{accepting system} $(N, M_\init, M_\final)$ extends a system $(N, M_\init)$, $N=(P,T,F,\ell)$, with a \emph{final marking} $M_\final \in \N^P$.
		An \emph{accepting system} is a tuple $(N, M_\init, M_\final)$ where $N = (P, T, F, \ell)$ is a Petri net, $M_\init \in \N^P$ is the \emph{initial marking}, and $M_\final \in \N^P$ is the \emph{final marking}, i.e., an accepting system extends a system with a final marking.
	\end{definition}
	
	\begin{definition}[Complete Firing Sequence, Trace, Behavior and Language of a System]
		Let $S=(N,M_\init,M_\final)$ be an accepting system with $N=(P,T,F,\ell)$ and $\ell\colon T\to\Sigmatau$. A firing sequence $\sigma\in T^*$ is a \emph{complete firing sequence} of $S$ if $(N,M_\init)\fire\sigma(N,M_\final)$. The set of complete firing sequences $\scfs{S}\coloneq\Set*{\sigma\in T^*\given(N,M_\init)\fire\sigma(N,M_\final)}$ is the \emph{behavior} of $S$.
		Considering the labels of a complete firing sequence, this is referred to as a \emph{trace} $\sigma\in\Sigma^*$. The set of all traces $\lang{S}\coloneq%\Set*{\sigma\in\Sigma^*\given (N,m_\init)\trace\sigma(N,m_\final)}
		\Set*{\ell(\sigma)\vert_\Sigma\given\sigma\in\scfs{S}}$
		is the \emph{language} of $S$.
		%Thus, $\sigma\in\Sigma$ is a trace of the system if and only if there exists at least one complete firing sequence such that the sequence of its labels results to $\sigma$, i.e., $\forall\sigma\in\Sigma^*\colon \sigma\in\lang{S} \coloniff \exists\sigma'\in T^*,\ell(\sigma')=\sigma\colon (N,m_\init)\fire{\sigma'}(N,m_\final)$.
	\end{definition}
	
	\begin{definition}[Easy Soundness]
		An accepting system $(N,M_\init,M_\final)$ is \emph{easy sound} if and only if the final marking is reachable, i.e., $M_\final\in\reach{N,M_\init}$.
	\end{definition}
	
	\section{Alignments}
	\label{sec:alignments}
	
	\emph{Alignments} \autocite{Adriansyah2014} juxtapose an observed trace with a complete firing sequence of the process model. For this, activities in the trace are compared in pairs with the transitions of the complete firing sequence.
	Every such pair either contains an activity from the trace and its corresponding transition from the firing sequence or\textemdash in case of a deviation\textemdash just one of them while the other position in the pair remains vacant, indicated by a special \emph{no-move symbol} $\nomove$. These pairs are called \emph{moves}.
	%We formalize alignments already illustrated in~\cref{sec:intro}.
	\begin{definition}[(Legal) Move]
		Let $\Sigma$ be an alphabet and $S=(N,M_\init,M_\final)$ an accepting system with Petri net $N=(P,T,F,\ell)$ and labeling function $\ell\colon T\to\Sigmatau$.
		Furthermore, let $\nomove$ be a distinguished \emph{\enquote{no move} symbol}.
		A \emph{move} is an ordered pair $(a,t)\in\left(\Sigma\cup\Set\nomove\right)\times\left(T\cup\Set\nomove\right)$. We distinguish between three types of \emph{legal} moves: The move $(a,t)$ is a
		\begin{itemize}
			\item \makebox[10em][l]{\emph{synchronous move}} if  $a\in\Sigma$, $t\in T$, and $a=\ell(t)$,
			\item \makebox[10em][l]{\emph{log move}} if $a\in\Sigma$ and $t=\nomove$,
			\item \makebox[10em][l]{\emph{model move}} if $a=\nomove$ and $t\in T$.
		\end{itemize}
		All other moves are considered illegal.
		A model move $(\nomove,t)$ is a \emph{silent move} if $\ell(t)=\tau$.
		The set of all legal moves between $\Sigma$ and $S$ is denoted by
		\begin{equation*}
			\mathit{LM_{\mathrm\Sigma,S}}\coloneq\Set*{(a,t)\in\Sigma\times T\given a=\ell(t)}\cup(\Sigma\times\Set{\nomove})\cup(\Set{\nomove}\times T).
		\end{equation*}
	\end{definition}
	An \emph{alignment} is a sequence of legal moves whose first components form the observed trace and whose second components form a complete firing sequence of the process model (ignoring the $\nomove$-symbol and $\tau$-labels), formally:
	\begin{definition}[Alignment]\label{def:alignment}
		Let $\Sigma$ be an alphabet, $\sigma\in\Sigma^*$ a trace, and $S=(N,M_\init,M_\final)$ an accepting system with $N=(P,T,F,\ell)$ and $\ell\colon T\to\Sigmatau$.
		An \emph{alignment} $\gamma\in\mathit{LM_{\mathrm\Sigma,S}}^*$ between $\sigma$ and $S$ is a sequence of moves such that $\sigma=\pi_1(\gamma)\vert_\Sigma$ and $\pi_2(\gamma)\vert_T\in\scfs{S}$. $\Gamma_{\sigma,S}$ denotes all alignments between $\sigma$ and $S$:
		\begin{equation*}
			\Gamma_{\sigma,S}\coloneq\Set*{\gamma\in\mathit{LM_{\mathrm\Sigma,S}}^*\given\pi_1(\gamma)\vert_\Sigma=\sigma\wedge\pi_2(\gamma)\vert_T\in\scfs{S}}.
		\end{equation*}
	\end{definition}
	
	In easy-sound systems, a trivial alignment always exists: first generate the input trace via log moves, and then generate a firing sequence from the initial to the final marking via model moves.
	This trivial alignment corresponds to the worst possible scenario: the model and the trace have nothing in common. 
	However, we are really interested in \emph{optimal} alignments which \emph{maximize} the synchronization between the trace and the model, or, to put it differently, which \emph{minimize} the deviations.
	This is formalized by assigning costs to moves and then finding an alignment with minimal costs.
	Here, the costs of an alignment are determined as the sum of costs for each move in the sequence.
	\begin{definition}[Alignment Cost, Standard Cost Function]
		Let $S=(N,M_\init,M_\final)$ be an easy-sound system, i.e., $\scfs{S}\neq\emptyset$, with $N=(P,T,F,\ell)$ and $\ell\colon T\to\Sigmatau$, and let $\mathit{LM_{\mathrm\Sigma,S}}$ be the set of all legal moves between $\Sigma$ and $S$. An \emph{alignment cost function} is a function $c\colon\mathit{LM_{\mathrm\Sigma,S}}\to\costRange$. The cost of an alignment $\gamma\in\mathit{LM_{\mathrm\Sigma,S}}^*$ is given by the sum of costs of each move in the sequence, i.e., $\seqsum c(\gamma)$.
		
		The \emph{standard cost function} $c\colon\mathit{LM_{\mathrm\Sigma,S}}\to\costRange$ is defined as
		\begin{equation*}
			(a,t)\mapsto c(a,t)\coloneq\left\lbrace\begin{array}{lllll}
				0 & \quad & a\in\Sigma\wedge t\in T\wedge a=\ell(t), &\text{ or } & a=\nomove\wedge \ell(t)=\tau, \\
				1 & & a\in\Sigma\wedge t=\nomove, &\text{ or } & a=\nomove\wedge \ell(t)\in\Sigma.
			\end{array}\right.
		\end{equation*}
	\end{definition}
	An alignment is considered to be \emph{optimal} if and only if there is no other alignment with lower costs. Thus, it aligns a complete firing sequence of the process model that most closely matches the observed trace according to the cost function.
	We are mostly interested in \emph{optimal} alignments that match the observed trace as closely as possible (according to a cost function):
	\begin{definition}[Optimal Alignment]
		Let $S$ be an easy-sound system, i.e., $\scfs{S}\neq\emptyset$, and let $c\colon\mathit{LM_{\mathrm\Sigma,S}}\to\costRange$ be an alignment cost function. Given a trace $\sigma\in\Sigma^*$, an alignment $\gamma_{\mathit{opt}}\in\Gamma_{\sigma,S}$ is \emph{optimal} if and only if no other alignment between $\sigma$ and $S$ has lower costs, i.e.,
		\begin{equation*}
			\seqsum c(\gamma_{\mathit{opt}})=\min_{\gamma\in\Gamma_{\sigma,S}}\Set*{\seqsum c(\gamma)}.
		\end{equation*}
		%$\seqsum c(\gamma_{\mathit{opt}})=\min_{\gamma\in\Gamma_{\sigma,S}}\Set*{\seqsum c(\gamma)}$.
		%The set of all optimal alignments between $\sigma$ and $S$ with regard to the cost function $c$ is denoted by
		%\begin{equation*}
		%	\Gamma_{\sigma,S,c}^\mathit{opt}\coloneq\argmin_{\gamma\in\Gamma_{\sigma,S}}\Set{\seqsum c(\gamma)}.
		%\end{equation*}
	\end{definition}
	Note that the minimum always exists, since costs of (legal) moves are non-negative, which implies that we cannot decrease costs of an alignment by adding extra moves. Also, for each trace $\sigma \in \Sigma^*$ and system $S$ with $\scfs{S} \neq \emptyset$, there exists at least one trivial alignment: first use moves of the form $(a, \nomove)$, for $a \in \Sigma$, in to generate the input trace $\sigma$, and then we take any firing sequence $\sigma' \in \scfs{S}$ to let the system reach its final marking via moves of the form $(\nomove, t)$, for $t \in T$. Hence, in easy-sound systems we get a trivial upper bound on the costs of a minimal alignment.
	
	While computing an optimal alignments is a \emph{functional} optimization problem, the corresponding decision problem is the following:
	\begin{problem}{Alignment $(\CPalign)$}
		\emph{Input:} An alphabet $\Sigma$, an easy-sound system $S=(N,M_\init,M_\final)$, $\scfs{S}\neq\emptyset$, with Petri net $N=(P,T,F,\ell)$ and labeling function $\ell\colon T\to\Sigmatau$, a trace $\sigma\in\Sigma^*$ over $\Sigma$, and a cost function $c\colon\mathit{LM_{\mathrm\Sigma,S}}\to\costRange$.
		
		\emph{Question:} Given a threshold $k\in\costRange$, is there an alignment $\gamma\in\Gamma_{\sigma,S}$ with $\seqsum c(\gamma)\leq k$?
	\end{problem}
	
	As usual, although $\CPalign$ is a \emph{functional} optimization problem (for which we are interested in the optimal value as output),
	for our purposes the (binary) decision problem $\CPalign$ is more adequate since we are interested in classifying its computational complexity.
	However, it is also clear that an efficient algorithm for the decision version can be transformed into an efficient algorithm for the functional variant by performing a binary search on the cost threshold $k$. This only requires a polynomial number of calls to the decision algorithm and thus does not change the complexity class of the problem.
	
	The complexity in computing alignments comes from the huge number of possible firing sequences that are candidates for an optimal solution. Of course, testing all sequences one after the other is not a viable strategy as the search space is way too large. Instead, the best-suited firing sequence of the process model has to be found by different means.
	Even if the explicit generation of the full language of the process model proves to be difficult, it is implicitly encoded in the Petri net. This can be used to determine an optimal alignment by connecting trace and model in such a way that directly generates alignments.
	
	\subsection{Petri Nets as Language Acceptors, Workflow Nets, and the Soundness Property}
	\label{sec:alignments:wfnets}
	
	Alignments determine the closest trace of a model to a given one, that is, we usually consider \emph{full} executions, i.e., traces from the initial state to the final state of the model. In case of partial observations and thus only a part of a trace is given, the computation of alignments can be adapted to prefixes, postfixes, or infixes where deviations at the beginning or end of an alignment are not penalized \autocite{Adriansyah2014}. Nevertheless, the model trace still needs to start from the initial marking and end at the final marking. In the following, we bound our scope to standard alignments as already defined above.
	
	While Petri nets are frequently considered in combination with an initial marking, one rarely assumes a final state as well. However, this changes when Petri net languages are studied, where, depending on the type, a final marking naturally comes into play. In the study of Petri net languages various types are distinguished, such as terminal languages (labeled sequences that end in a final marking), covering or weak languages (labeled sequences that end in a marking covering the final marking), prefix-closed languages (all labeled sequences), and dead languages (labeled sequences that end in a deadlock) (see, e.g., \autocite{Hack1976,Jantzen1986}). Another distinction is made based on the labeling function, e.g., if one allows for silent transitions or if it is restricted to unique labels (i.e., each label can be used by only one transition).
	
	In the context of process mining and alignments in particular, we normally have some final or accepting state. Indeed, business procedures are usually described by modeling the behavior of a single case with a clear start and termination point. Therefore, \autocite{vanderAalst1997} introduced workflow nets, which have evolved into the de-facto modeling standard in process mining.
	\begin{definition}[Workflow Net]
		\label{def:wfnet}
		A \emph{workflow net} $N=(P,T,F,\ell,i,o)$ is a Petri net $(P,T,F,\ell)$ where $i,o\in P$ are special places, namely the single \emph{source place} $i$ and the single \emph{sink place} $o$ with $o\pset=\emptyset$, and the Petri net is strongly connected if a transition $\bar{t}$ with $\pset\bar{t}=\Set{o}$ and $\bar{t}\pset=\Set{i}$ is added, i.e., every vertex $v\in P\cup T$ of the Petri net is on a path from $i$ to $o$.
		Implicitly, in workflow nets, the initial marking is $\Multiset{i}$ and the final marking is $\Multiset{o}$.
	\end{definition}
	Please note that we deviate slightly from the classic definition of workflow nets. Commonly, the source place of a workflow net is required to not have any incoming arcs, i.e., $\pset i=\emptyset$. However, this restriction can be waived as any limitation with regard to the source is implied by the initial marking~$\Multiset{i}$ rather than the fact that the source has no incoming arcs.
	
	Due to their motivation, workflow nets are acceptors by design. Conequently, this raises the question of whether their structural design allows them to always reach the accepting state. Therefore, the notion of \emph{soundness} was introduced at the same time. Soundness describes the \enquote{well-behavedness} of a workflow net as business procedures should terminate eventually with leaving no tokens behind and each task in the procedure should potentially be executable~\autocite{vanderAalst1997}.
	\begin{definition}[Soundness of Workflow Nets]
		A workflow net $N=(P,T,F,\ell,i,o)$ is \emph{sound} if and only if it satisfies the following requirements:
		\begin{itemize}
			\item \makebox[10em][l]{option to complete,} i.e., $\forall M\in\reach{N,\Multiset{i}}\colon\Multiset{o}\in\reach{N,M}$,
			\item \makebox[10em][l]{proper completion,} i.e., $\forall M\in\reach{N,\Multiset{i}}\colon M\geq\Multiset{o}\implies M=\Multiset{o}$, and
			\item \makebox[10em][l]{no dead transitions,} i.e., $\forall t\in T\colon\exists M\in\reach{N,\Multiset{i}}\colon(N,M)\fire{t}$.
		\end{itemize}
	\end{definition}
	In this definition, the option to complete also already implies proper completion~\autocite{vanderAalstHHSVVW2011} and having no dead transitions is the same as quasi-liveness. This is because there are no dominating markings in workflow nets with the option to complete~\autocite{SchwanenP2026}.
	However, the notion of soundness is not limited to workflow nets, but can be applied to all accepting systems.
	\begin{definition}[Soundness of Accepting Systems]
		\label{def:soundness}%
		An accepting system $(N,M_\init,M_\final)$ is \emph{sound} if and only if it satisfies the following requirements:
		\begin{itemize}
			\item \makebox[10em][l]{option to complete,} i.e., $\forall M\in\reach{N,M_\init}\colon M_\final\in\reach{N,M}$,
			\item \makebox[10em][l]{proper completion,} i.e., $\forall M\in\reach{N,M_\init}\colon M\geq M_\final\implies M=M_\final$, and
			\item \makebox[10em][l]{quasi-liveness,} i.e., $\forall t\in T\colon\exists M\in\reach{N,M_\init}\colon(N,M)\fire{t}$.
		\end{itemize}
	\end{definition}
	Be aware that quasi-liveness and the option to complete alone do not suffice in this more general setting as proper completion is not automatically implied anymore. This implication goes even further: while for workflow nets also boundedness directly follows from soundness, this no longer applies to general accepting systems.
	
	Soundness guarantees that the accepting marking is reachable from every other reachable marking. This is not necessary for alignments: they only require easy soundness, i.e., that the accepting marking is reachable from the initial marking. Basically, the language accepted by the system should not be empty. In fact, the easy-soundness requirement is merely a technicality: although alignments are not defined on an empty language, the main reason is that the $A^*$-based techniques to compute alignments require reachability of the final marking in order to guarantee termination~\autocite{CarmonaDSW2018}. A notable observation regarding alignments related to soundness is worth mentioning: because alignments must necessarily reach the final marking, they effectively reduce their consideration of the accepting system to the part of it that is actually sound. Thus, in the context of other types of Petri net languages, it might also be interesting to adapt alignments accordingly and investigate the effect of different soundness notions (see, e.g., \autocite{vanderAalstHHSVVW2011}).
	
	Workflow networks are clearly accepting systems, but any accepting system can also be converted into a workflow network quite easily. To do this, simply add a new source from which the original initial marking is generated via a silent start transition, and also add an end transition that consumes the original final marking and puts a token in a new sink. But of course, some Petri net properties might be lost throughout this conversion. Therefore, we here extended our scope beyond workflow nets. However, there is one mild condition that we impose on accepting systems, namely that their underlying graph must be weakly connected. As we will see, concurrency plays a key role in the complexity of alignments. Hence, with Petri nets being weakly-connected, we exclude any concurrency added by the parallel execution of independent nets.
	
	Let us make another important remark which is valid throughout the rest of this article: whenever we consider computational problems involving \emph{bounded} Petri nets, we always assume that the bound $b\in\N$ is part of the input \emph{and} that the bound $b$ is represented in unary notation.
	Our motivation for this requirement is as follows.
	It is well-known that, in general, the bound $b\in\N$ of a \emph{bounded} Petri net can be extremely large, even super exponential in the size of the Petri net.
	Since we are, however, interested in low complexity classes (such as $\P$, $\NP$ and $\PSPACE$), a super-exponential bound would be rather meaningless, as we could not even represent it using polynomial space.
	By requiring that the bound is given in unary notation as part of the input, we ensure that a runtime which is polynomial in the bound $b$ still accounts for an \emph{efficient} runtime.
	This assumption also goes in line with the fact that in almost all works from the process mining area, a \emph{bounded} Petri net system usually has a small constant bound (most frequently $b=1$ which gives us the setting of safe systems).
	We leave it as an interesting academic endeavor for the future to investigate in how far our complexity results change if we drop this assumption.
	
	\subsection{Synchronous Product}
	\label{sec:alignments:optimal}
	
	The standard approach for computing optimal alignments is via a reduction to a reachability problem in Petri nets.
	The first step is to express the trace itself as a Petri net:
	\begin{definition}[Trace System]\label{def:trace_system}
		Let $\Sigma$ be an alphabet and let $\sigma\in\Sigma^*$ be a trace over $\Sigma$. The \emph{trace system} of trace $\sigma$, denoted by $\mathcal T(\sigma)$, is an accepting system $\mathcal T(\sigma)\coloneq(N,M_\init,M_\final)$ with a Petri net $N=(P,T,F,\ell)$ where
		\begin{itemize}
			\item $P\coloneq\Set{p_i\given 0\leq i\leq\abs\sigma}$ is the set of places,
			\item $T\coloneq\Set{t_i\given 1\leq i\leq\abs\sigma}$ is the set of transitions,
			\item $F\coloneq\bigcup_{1\leq i\leq\abs\sigma}\Set{(p_{i-1},t_i),(t_i,p_i)}$ is the flow relation,
			\item $\ell\colon T\to\Sigma$ is the labeling function such that $\ell\big(\seq{t_i}_{i=1}^{\abs\sigma}\big)=\sigma$,
			\item $M_\init=\Multiset{p_0}$ is the initial marking, and
			\item $M_\final=\Multiset{p_{\abs\sigma}}$ is the final marking.
		\end{itemize}
	\end{definition}
	
	Now, both, the process model and the trace are represented by a Petri net and can be combined using the synchronous product, which was introduced in \autocite{Adriansyah2014} and is a special case of the product of Petri nets introduced in \autocite{Winskel1987}.
	\begin{definition}[Synchronous Product]\label{def:synchronous_product}
		Let $\Sigma$ be an alphabet and let $S_1=(N_1,M_{1,\init},M_{1,\final})$ and $S_2=(N_2,M_{2,\init},M_{2,\final})$ be two accepting systems with Petri nets $N_1=(P_1,T_1,F_1,\ell_1)$ and $N_2=(P_2,T_2,F_2,\ell_2)$ and labeling functions $\ell_1\colon T_1\to\Sigmatau$ and $\ell_2\colon T_2\to\Sigmatau$ where $P_1$, $T_1$, $P_2$, and $T_2$ are pairwise disjoint sets. Furthermore, let $\nomove\notin\Sigma,T_1,T_2$ be a distinguished symbol. The \emph{synchronous product} of $S_1$ and $S_2$, denoted as $S_1\otimes S_2$, is the accepting system $S_1\otimes S_2\coloneq(N,M_\init,M_\final)$ with the Petri net $N\coloneq(P,T,F,\ell)$ and labeling function $\ell\colon T\to(\Sigma\cup\Set{\tau,\nomove})\times(\Sigma\cup\Set{\tau,\nomove})$ where
		\begin{itemize}
			\item $P\coloneq P_1\cup P_2$,
			\item $T\coloneq\Set{(t_1,t_2)\in T_1\times T_2\given \ell_1(t_1)=\ell_2(t_2)}\cup(T_1\times\Set\nomove)\cup(\Set\nomove\times T_2)$,
			\item $F\coloneq\begin{aligned}[t]
				&\Set*{(p,(t_1,t_2))\in P\times T\given (p,t_1)\in F_1 \vee (p,t_2)\in F_2}\\
				\cup\,&\Set*{((t_1,t_2),p)\in T\times P\given (t_1,p)\in F_1 \vee (t_2,p)\in F_2},
			\end{aligned}$
			\item $(t_1,t_2)\mapsto \ell(t_1,t_2)\coloneq\begin{cases}
				(\ell_1(t_1),\ell_2(t_2)) & t_1\in T_1 \wedge t_2\in T_2, \\
				(\ell_1(t_1),\nomove)     & t_2\notin T_2,                \\
				(\nomove,\ell_2(t_2))     & t_1\notin T_1,
			\end{cases}$
			\item $M_\init\coloneq M_{1,\init}+M_{2,\init}$, and $M_\final\coloneq M_{1,\final}+M_{2,\final}$.
		\end{itemize}
	\end{definition}
	As shown in \autocite{Adriansyah2014}, complete firing sequences in the synchronous product correspond to alignments between the trace and the model.
	\begin{proposition}[{{\normalfont\autocite[Theorem~4.3.5]{Adriansyah2014}}}]
		Given a trace $\sigma$ and an accepting system $S$ as process model, complete firing sequences of their synchronous product correspond to alignments between $\sigma$ and $S$, i.e., $\Gamma_{\sigma,S}=\lang{\mathcal{T}(\sigma)\otimes S}$.
	\end{proposition}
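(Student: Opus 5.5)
Since $\mathcal T(\sigma)\otimes S$ has transition set $(T_{\mathcal T}\times\Set\nomove)\cup(\Set\nomove\times T)\cup\Set{(t_i,t)\given\ell_{\mathcal T}(t_i)=\ell(t)}$, I read the equality $\Gamma_{\sigma,S}=\lang{\mathcal T(\sigma)\otimes S}$ as saying that each complete firing sequence of the product, with every trace transition $t_i$ replaced by its label $\sigma_i$, is exactly an alignment. Equivalently, the map $h$ that sends $(t_i,t)\mapsto(\ell_{\mathcal T}(t_i),t)$, $(t_i,\nomove)\mapsto(\ell_{\mathcal T}(t_i),\nomove)$ and $(\nomove,t)\mapsto(\nomove,t)$ restricts to a bijection from $\scfs{\mathcal T(\sigma)\otimes S}$ onto $\Gamma_{\sigma,S}$. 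The plan is to prove both inclusions via a projection lemma for markings of the product.

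\textbf{Step 1 (projection lemma).} The place sets $P_{\mathcal T}$ and $P$ are disjoint, so every product marking splits as $M_{\mathcal T}+M_S$. By the definition of $F$ in \cref{def:synchronous_product}, the pre-set of a product transition $(u,v)$ is $\pset u\cup\pset v$ and its post-set is $u\pset\cup v\pset$, where $\pset\nomove=\nomove\pset=\emptyset$. Hence $(u,v)$ is enabled at $M_{\mathcal T}+M_S$ iff $u$ is enabled at $M_{\mathcal T}$ (or $u=\nomove$) and $v$ is enabled at $M_S$ (or $v=\nomove$). Firing it acts componentwise, exactly as $u$ fires in $\mathcal T(\sigma)$ and $v$ fires in $S$. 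By induction on the length of a sequence $\rho$ of product transitions, $\rho$ fires from $M_{\mathcal T}+M_S$ to $M'_{\mathcal T}+M'_S$ iff $\pi_1(\rho)\vert_{T_{\mathcal T}}$ fires from $M_{\mathcal T}$ to $M'_{\mathcal T}$ and $\pi_2(\rho)\vert_T$ fires from $M_S$ to $M'_S$. Completeness then also splits componentwise, because $M_\final=M_{\mathcal T,\final}+M_{S,\final}$ and the decomposition of a marking into its two components is unique.

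\textbf{Step 2 (the trace component).} In $\mathcal T(\sigma)$, the only complete firing sequence from $\Multiset{p_0}$ to $\Multiset{p_{\abs\sigma}}$ is $\seq{t_1,\dots,t_{\abs\sigma}}$. This holds because the net is a single path, and the token can only advance from $p_{i-1}$ to $p_i$ through $t_i$. Consequently, the first projections of a product sequence are complete iff, after applying $h$, they read $\sigma$.

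\textbf{Step 3 (both inclusions).} Combining Steps 1 and 2, a sequence $\rho$ is in $\scfs{\mathcal T(\sigma)\otimes S}$ iff two conditions hold: $\pi_1(h(\rho))\vert_\Sigma=\sigma$, and $\pi_2(h(\rho))\vert_T\in\scfs{S}$. Moreover, $h(\rho)$ consists of legal moves, since synchronous product transitions require equal labels. Conversely, given an alignment $\gamma$, I lift it by replacing the $j$-th move carrying a log symbol with the trace transition $t_j$. Step 2 guarantees that this lift is well defined and that it is a complete firing sequence. One detail needs care: silent transitions of $S$ have label $\tau\neq\ell_{\mathcal T}(t_i)$, so they only appear in model moves, which matches legality.

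The main obstacle is bookkeeping rather than depth. It consists of stating precisely what the language equality means, given that product labels are pairs, and verifying the enabling and firing rule for pairs where one component is $\nomove$. I would address this first, by fixing the map $h$ explicitly as above.
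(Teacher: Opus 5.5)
The paper gives no proof of its own here; it only cites Adriansyah's Theorem~4.3.5. Your argument is the standard one and it is correct: product transitions fire componentwise over the disjoint place sets, and $\seq{t_1,\dots,t_{\abs\sigma}}$ is the unique complete run of $\mathcal T(\sigma)$. Your reading of the equality through $h$ is also correct, and in fact it is needed. Taken literally, the product's labels are pairs of labels, which would forget which model transition fired.

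One statement needs tightening. The \enquote{iff} that opens Step~3, and the last sentence of Step~2, are false for arbitrary sequences of product transitions once $\sigma$ has repeated letters. Take $\sigma=\seq{a,a}$. A sequence starting with $(t_2,\nomove),(t_1,\nomove)$, followed by model moves forming a complete firing sequence of $S$, meets both of your label-level conditions. It is still not firable, because $p_1$ is initially empty. The correct characterization, which is exactly what Step~1 gives, is this: $\rho$ is a complete firing sequence iff $\pi_1(\rho)\vert_{T_{\mathcal T}}=\seq{t_1,\dots,t_{\abs\sigma}}$ and $\pi_2(\rho)\vert_T\in\scfs{S}$. The label condition is only necessary.

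Your explicit lift, which assigns $t_j$ to the $j$-th move carrying a log symbol, satisfies precisely this characterization. Its well-definedness as a product transition comes from $\ell_{\mathcal T}(t_j)=\sigma_j=\ell(t)$ for synchronous moves, not from Step~2. With that, both inclusions go through, and so does the injectivity of $h$ on complete firing sequences.
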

	
	Furthermore, there are some additional properties of the synchronous product which will prove useful for this work.
	\begin{proposition}[{{\normalfont\autocites[Theorem~4.3.4]{Adriansyah2014}[Theorem~4.1]{Winskel1987}}}]
		Given two systems $S_1$ and $S_2$, any combination $M_1+M_2$ of a reachable marking $M_1\in\reach{S_1}$ and a reachable marking $M_2\in\reach{S_2}$ is a reachable marking in the synchronous product $S_1\otimes S_2$, i.e., $\forall M_1\in\reach{S_1},M_2\in\reach{S_2}\colon M_1+M_2\in\reach{S_1\otimes S_2}$.
	\end{proposition}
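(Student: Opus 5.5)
We show that for all $M_1\in\reach{S_1}$ and $M_2\in\reach{S_2}$ the marking $M_1+M_2$ is reachable in $S_1\otimes S_2$, by simulating the two firing sequences one after the other with the asynchronous transitions of the product. First, take firing sequences $\sigma_1=\seq{t_1,\dots,t_n}\in T_1^*$ and $\sigma_2=\seq{u_1,\dots,u_m}\in T_2^*$ with $(N_1,M_{1,\init})\fire{\sigma_1}(N_1,M_1)$ and $(N_2,M_{2,\init})\fire{\sigma_2}(N_2,M_2)$. Our candidate firing sequence of the product is
\[
\rho \coloneq \seq{(t_1,\nomove),\dots,(t_n,\nomove),(\nomove,u_1),\dots,(\nomove,u_m)},
\]
and we aim to show $(N,M_{1,\init}+M_{2,\init})\fire{\rho}(N,M_1+M_2)$.

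The key structural observation comes from the definition of $F$ in \cref{def:synchronous_product}. Since $P_1$, $T_1$, $P_2$, $T_2$ are pairwise disjoint, we have $\pset(t,\nomove)=\pset t\subseteq P_1$ and $(t,\nomove)\pset=t\pset\subseteq P_1$ for each $t\in T_1$. Here the pre- and post-sets on the right are taken in $N_1$. Symmetrically, $(\nomove,u)$ has exactly the pre- and post-sets of $u$ in $N_2$, and these lie inside $P_2$.

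Using this, we prove a one-step lemma. Let $M$ be a marking of $N_1$ and $M'$ a marking of $N_2$, and suppose $(N_1,M)\fire{t}(N_1,\hat M)$. Then $(N,M+M')\fire{(t,\nomove)}(N,\hat M+M')$. Enabledness only inspects places in $\pset t\subseteq P_1$, and on those places $M+M'$ agrees with $M$. The case distinction in the firing rule of \cref{def:marking} depends only on whether $p\in\pset t$ and whether $p\in t\pset$. On $P_1$ it therefore changes $M$ exactly as in $N_1$, and on $P_2$ it leaves $M'$ unchanged. The symmetric statement holds for $(\nomove,u)$ with $u\in T_2$.

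The main step is an induction on the length of $\sigma_1$, holding $M_{2,\init}$ fixed. It gives $(N,M_{1,\init}+M_{2,\init})\fire{(t_1,\nomove)\cdots(t_n,\nomove)}(N,M_1+M_{2,\init})$. A second induction on the length of $\sigma_2$, now holding $M_1$ fixed, continues from there to $M_1+M_2$. Since $M_{1,\init}+M_{2,\init}$ is by definition the initial marking of $S_1\otimes S_2$, this proves $M_1+M_2\in\reach{S_1\otimes S_2}$.

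No step is really hard. The only point needing care is the separation of places in the one-step lemma, namely that an asynchronous transition touches only the places of its own component. That follows directly from the disjointness assumption and the shape of $F$. Note also that the synchronous transitions $(t_1,t_2)$ are never needed.
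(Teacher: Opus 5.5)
Your argument is correct: firing the asynchronous copies $(t_1,\nomove),\dots,(t_n,\nomove)$ of $\sigma_1$ and then $(\nomove,u_1),\dots,(\nomove,u_m)$ of $\sigma_2$ works because $\pset(t,\nomove)$ and $(t,\nomove)\pset$ coincide with $\pset t$ and $t\pset$ in $N_1$, and symmetrically for $(\nomove,u)$ in $N_2$. This coincidence uses pairwise disjointness together with $\nomove\notin T_2$ (and $\nomove\notin T_1$), so no arc of the other component contributes. The paper gives no proof of its own and only cites Adriansyah's Theorem~4.3.4 and Winskel's Theorem~4.1, so your interleaving argument supplies a self-contained proof where the paper defers to the literature.
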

	\begin{corollary}\label{cor:synchronous_reachability_set}
		Given two systems $S_1$ and $S_2$, the set of reachable markings of their synchronous product $S_1\otimes S_2$ is the product of the sets of reachable markings of the systems, i.e., $\reach{S_1\otimes S_2}=\reach{S_1}\times\reach{S_2}$.
	\end{corollary}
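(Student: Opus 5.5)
Both inclusions will be shown directly, with the previous proposition supplying one of them. The statement should be read with markings of $S_1\otimes S_2$ written as pairs $(M_1,M_2)$. This identification is legitimate because, by \cref{def:synchronous_product}, $P=P_1\cup P_2$ with $P_1\cap P_2=\emptyset$, so every marking $M\in\N^P$ splits uniquely as $M=M|_{P_1}+M|_{P_2}$. Under this identification, the preceding proposition already gives $\reach{S_1}\times\reach{S_2}\subseteq\reach{S_1\otimes S_2}$.

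For the converse inclusion I will argue by induction on the length of a firing sequence $\sigma$ with $(N,M_{1,\init}+M_{2,\init})\fire{\sigma}(N,M)$. The claim is that $M|_{P_1}\in\reach{S_1}$ and $M|_{P_2}\in\reach{S_2}$.

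\begin{itemize}
\item For $\sigma=\seq{}$ the claim is immediate, since the initial marking is the sum of the two initial markings.
\item For the inductive step, consider the last fired transition $(t_1,t_2)$. By the definition of $F$, its pre- and post-sets satisfy $\pset(t_1,t_2)=\pset t_1\cup\pset t_2$ and $(t_1,t_2)\pset=t_1\pset\cup t_2\pset$, where $\pset\nomove=\nomove\pset=\emptyset$. Because $P_1$ and $P_2$ are disjoint, the places of $\pset(t_1,t_2)$ lying in $P_1$ are exactly those of $\pset t_1$, and similarly for post-sets and for $P_2$.
\item If $(t_1,t_2)$ is enabled at $M$, then $t_1$ (if $t_1\neq\nomove$) is enabled at $M|_{P_1}$ in $N_1$, and $t_2$ is enabled at $M|_{P_2}$ in $N_2$.
\item The firing rule of \cref{def:marking} acts placewise. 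On $P_1$ the new marking therefore equals the result of firing $t_1$ in $N_1$, or leaves $M|_{P_1}$ unchanged when $t_1=\nomove$; the same holds on $P_2$.
\item Hence each projected marking is obtained from a reachable one by firing at most one transition, and is itself reachable. This completes the induction.
\end{itemize}

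The only real subtlety is the self-loop case of the firing rule: a place $p$ may lie in both $\pset t_1$ and $t_1\pset$. This case is handled by the same placewise check, because membership of $p\in P_1$ in $\pset(t_1,t_2)$ and in $(t_1,t_2)\pset$ coincides with its membership in $\pset t_1$ and $t_1\pset$. Combining the two inclusions yields the stated equality.
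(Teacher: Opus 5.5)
Your proof is correct and follows the paper's route. The paper derives the corollary directly from the preceding proposition, which supplies $\reach{S_1}\times\reach{S_2}\subseteq\reach{S_1\otimes S_2}$, and leaves the converse inclusion implicit; your induction makes that converse explicit by showing that each product transition acts on $P_1$ and $P_2$ exactly as its respective component (or not at all for $\nomove$), so projections of reachable markings are reachable. The only blemish is notational: in the inductive step, $(t_1,t_2)$ is enabled at the marking preceding $M$, not at $M$ itself.
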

	\begin{corollary}\label{cor:synchronous_product_boundedness}
		Given a $b_1$-bounded system $S_1$ and a $b_2$-bounded system $S_2$, their synchronous product $S_1\otimes S_2$ is $\max\Set{b_1,b_2}$-bounded.
	\end{corollary}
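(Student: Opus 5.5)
The statement to prove is \cref{cor:synchronous_product_boundedness}: a $b_1$-bounded $S_1$ and a $b_2$-bounded $S_2$ give a $\max\Set{b_1,b_2}$-bounded $S_1\otimes S_2$. The plan is to obtain it directly from \cref{cor:synchronous_reachability_set}, using the fact that the place sets of the two components are disjoint.

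First, I fix an arbitrary reachable marking $M\in\reach{S_1\otimes S_2}$. By \cref{cor:synchronous_reachability_set}, $M$ decomposes as $M=M_1+M_2$ with $M_1\in\reach{S_1}$ and $M_2\in\reach{S_2}$. Here $M_1$ is supported on $P_1$ and $M_2$ on $P_2$, each extended by zero as agreed for multisets over different domains. I would also note why the ``$\subseteq$'' direction of that corollary holds, since it is the direction actually needed.

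\begin{itemize}
\item By \cref{def:synchronous_product}, every transition of the product has pre- and post-set equal to the union of the pre-/post-sets of its (at most two) component transitions.
\item Since $P_1\cap P_2=\emptyset$, firing a product transition $(t_1,t_2)$ changes the restriction to $P_1$ exactly as firing $t_1$ does in $N_1$ (or leaves it unchanged when $t_1=\nomove$), and likewise for $P_2$.
\item The restriction to $P_i$ of a product transition is enabled exactly when the corresponding component transition is enabled, because the product's enabling condition is the conjunction of the two componentwise conditions.
\end{itemize}

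An induction on the length of a firing sequence from $M_\init=M_{1,\init}+M_{2,\init}$ then shows that the restrictions $M|_{P_1}$ and $M|_{P_2}$ are reachable in $S_1$ and $S_2$, respectively.

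Next, take any place $p\in P=P_1\cup P_2$. If $p\in P_1$, then $M_2(p)=0$ by disjointness, so $M(p)=M_1(p)\le b_1$ by $b_1$-boundedness of $S_1$. Symmetrically, if $p\in P_2$, then $M(p)=M_2(p)\le b_2$. In either case $M(p)\le\max\Set{b_1,b_2}$, and since $M$ was arbitrary this is the claim.

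The only point needing care, and the main obstacle, is the disjointness of the place sets. Without it the bounds would add up rather than combine as a maximum. Disjointness is part of the hypothesis in \cref{def:synchronous_product}, and it is exactly what the restriction argument uses. Everything else is a direct application of the previous corollary.
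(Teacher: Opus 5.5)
Your proposal is correct and follows the paper's route: the paper states this as an immediate consequence of \cref{cor:synchronous_reachability_set}, where each place of $P_1\cup P_2$ holds only the tokens of the component it belongs to. Your induction justifying $\reach{S_1\otimes S_2}\subseteq\reach{S_1}\times\reach{S_2}$ is a worthwhile addition, since the cited proposition only states the reverse inclusion and boundedness needs exactly this direction.
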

	
	\section{Reachability as a Lower Bound for Alignments}
	\label{sec:reachability}
	
	The reachability problem for Petri nets is generally considered a lower bound for the complexity of the alignment problem \autocite{CarmonaDSW2018,CarmonaDW2022}.
	In the following, we not only give a proof for this relationship, but we also draw a connection between known complexity results for the reachability problem on various Petri net classes and their implications on the alignment problem.
	
	The reachability problem asks whether two given markings are reachable from one another and it is one of the classic algorithmic benchmarks in Petri net theory. Other decision problems for Petri nets are often  equivalent to the reachability problem, manifesting its central role when analyzing Petri nets.
	\begin{problem}{Reachability ($\CPreach$)}
		\emph{Input:} A system $(N,M_0)$ with $N=(P,T,F,\ell)$ and a marking $M\in\N^P$.
		
		\emph{Question:} Is $M\in\reach{N,M_0}$?
	\end{problem}
	
	\begin{table}
		\newcommand{\rotc}[1]{\rotatebox[origin=c]{90}{#1}}
		\newcommand*{\PNprop}{$\bullet$}
		\newcommand*{\PNimpl}{$\circ$}
		\centering
		\caption{Overview of complexity results for the reachability problem on different classes of Petri nets. Assumed properties of the class considered are marked by \PNprop, thereby implied properties are marked by \PNimpl.}
		\label{tab:complexity_of_reachability}
		\begin{booktabs}{colspec={*{9}{c}l},cell{1}{1}={c=8}{c},cell{1}{9}={r=2,c=2}{c}}
			\toprule
			Petri Net Properties &                 &             &               &                   &                    &                &                       & {Complexity of the\\ Reachability Problem} &                                       \\ \cmidrule[lr]{1-8}
			   \ \rotc{live}     & \rotc{bounded*} & \rotc{safe} & \rotc{cyclic} & \rotc{persistent} & \rotc{free-choice} & \rotc{acyclic} & \rotc{conflict-free}\ &                                            &                                       \\ \midrule
			                     &                 &             &               &                   &                    &                &                       &           $\Ackermann$-complete            & \autocite{CzerwinskiO2022,Leroux2022} \\
			                     &                 &             &               &                   &      \PNprop       &                &                       &           $\Ackermann$-complete            & \autocite{CzerwinskiO2022,Leroux2022} \\ \addlinespace
			                     &                 &             &    \PNprop    &                   &                    &                &                       &            $\EXPSPACE$-complete            & \autocite{CardozaLM1976}              \\ \addlinespace
			                     &     \PNprop     &             &               &                   &                    &                &                       &             $\PSPACE$-complete             & \autocite{JonesLL1977}                \\
			                     &     \PNimpl     &   \PNprop   &               &                   &                    &                &                       &             $\PSPACE$-complete             & \autocite{ChengEP1993,ChengEP1995}    \\
			                     &     \PNimpl     &   \PNprop   &               &                   &      \PNprop       &                &                       &             $\PSPACE$-complete             & \autocite{ChengEP1993,ChengEP1995}    \\
			      \PNprop        &     \PNimpl     &   \PNprop   &    \PNprop    &                   &                    &                &                       &             $\PSPACE$-complete             & \autocite{SchwanenP2026}              \\ \addlinespace
			      \PNprop        &     \PNprop     &             &               &                   &      \PNprop       &                &                       &               $\NP$-complete               & \autocite{Esparza1998a}               \\
			      \PNprop        &     \PNimpl     &   \PNprop   &               &                   &      \PNprop       &                &                       &               $\NP$-complete               & \autocite{Esparza1998a}               \\
			                     &                 &             &               &                   &                    &    \PNprop     &                       &               $\NP$-complete               & \autocite{ChengEP1993,ChengEP1995}    \\
			                     &     \PNimpl     &   \PNprop   &               &                   &                    &    \PNprop     &                       &               $\NP$-complete               & \autocite{Stewart1995}                \\
			                     &     \PNimpl     &   \PNprop   &               &                   &      \PNprop       &    \PNprop     &                       &               $\NP$-complete               &                                       \\
			                     &                 &             &               &      \PNimpl      &                    &                &        \PNprop        &               $\NP$-complete               & \autocite{HowellR1987,HowellR1988}    \\ \addlinespace
			                     &     \PNprop     &             &               &      \PNimpl      &                    &                &        \PNprop        &                  $\in\P$                   & \autocite{HowellR1989}                \\
			                     &     \PNimpl     &   \PNprop   &               &      \PNimpl      &                    &                &        \PNprop        &                  $\in\P$                   & \autocite{HowellR1989}                \\
			      \PNprop        &     \PNprop     &             &    \PNprop    &                   &      \PNprop       &                &                       &                  $\in\P$                   & \autocite{DeselE1991,DeselE1993}      \\
			      \PNprop        &     \PNimpl     &   \PNprop   &    \PNprop    &                   &      \PNprop       &                &                       &                  $\in\P$                   & \autocite{DeselE1991,DeselE1993}      \\ \bottomrule
		\end{booktabs}
		\medskip
		
		* Holds for a $b$-bounded Petri net where $b$ is part of the input (given in unary notation).
	\end{table}
	The reachability problem is well-studied and its complexity has been investigated for many interesting subclasses of Petri nets. Although some of this complexity results are already known since the 1970s, the question on the complexity of $\CPreach$ on general Petri nets remained open for a long time. It was early shown to be $\EXPSPACE$-hard~\autocite{Lipton1976} but decidable~\autocite{Mayr1981}. However, only recent advances strengthened these bounds further by first giving an $\Ackermann$ upper bound \autocite{LerouxS2015} and then proving it to be non-elementary~\autocite{CzerwinskiLLLM2019,CzerwinskiLLLM2020}---until in 2022, when its complexity was finally answered to be $\Ackermann$-complete~\autocite{CzerwinskiO2022,Leroux2022}.
	
	In \cref{tab:complexity_of_reachability}, we collated these recent results with those for $\CPreach$ on relevant Petri net subclasses based on overviews given in \autocite{EsparzaN1994,ChengEP1993,ChengEP1995,JonesLL1977,Esparza1998,SchwanenP2026}. Well-known Petri net classes not specifically addressed in \cref{tab:complexity_of_reachability} are S-systems and T-systems (although T-nets are a strict subclass of conflict-free Petri nets), for both of which reachability is known to belong to $\P$~\autocite{BestT1987,CommonerHEP1971,GenrichL1973}. Furthermore, every non-free-choice Petri net can be transformed into an equivalent free-choice Petri net using a folklore approach that goes back to \autocite{Hack1974,JonesLL1977}. For each arc that conflicts with the free-choice property, the construction introduces a pair of a silent transition and a place. Although this transformation generally violates behavioral properties like liveness and cyclicity, it at least preserves boundedness. From this, we can derive that the reachability problem on the class of acyclic free-choice Petri nets (regardless of their boundedness) remains $\NP$-complete.
	
	\begin{table}
		\newcommand*{\WFprop}{$\bullet$}
		\newcommand*{\WFimpl}{$\circ$}
		\centering
		\caption{Overview of complexity results for the reachability problem on different classes of workflow nets. Structural and behavioral properties of the considered workflow net classes are: quasi-liveness (QL), the option to complete (OC), free-choiceness (FC), and safeness (safe). Assumed properties are marked by \WFprop, thereby implied properties are marked by \WFimpl.}
		\label{tab:complexity_of_reachability_wfnets}
		\begin{booktabs}{colspec={*{5}{c}l},cell{1}{1}={c=4}{c},cell{1}{5}={r=2,c=2}{c}}
			\toprule
			Workflow Net Properties &                     &                     &         & {Complexity of the\\ Reachability Problem} &                                       \\ \cmidrule[lr]{1-4}
			  \enspace QL\enspace   & \enspace OC\enspace & \enspace FC\enspace &  safe   &                                            &                                       \\ \midrule
			                        &                     &                     &         &           $\Ackermann$-complete            & \autocite{CzerwinskiO2022,Leroux2022} \\
			        \WFprop         &                     &       \WFprop       &         &           $\Ackermann$-complete            & \autocite{SchwanenP2026}              \\ \addlinespace
			                        &       \WFprop       &                     &         &               $\in\EXPSPACE$               & \autocite{SchwanenP2026}              \\
			        \WFprop         &       \WFprop       &                     &         &               $\PSPACE$-hard               & \autocite{SchwanenP2026}              \\ \addlinespace
			                        &                     &                     & \WFprop &             $\PSPACE$-complete             & \autocite{ChengEP1993,ChengEP1995}    \\
			        \WFprop         &       \WFprop       &                     & \WFprop &             $\PSPACE$-complete             & \autocite{SchwanenP2026}              \\
			        \WFprop         &                     &       \WFprop       & \WFprop &             $\PSPACE$-complete             & \autocite{SchwanenP2026}              \\ \addlinespace
			        \WFimpl         &       \WFprop       &       \WFprop       & \WFimpl &                  $\in\P$                   & \autocite{SchwanenP2026}              \\ \bottomrule
		\end{booktabs}
	\end{table}
	As already mentioned above, workflow nets constitute the standard model class in process mining. As they provide a clear start and termination point, the connection to also consider their soundness follows almost naturally.
	In light of workflow nets, we are therefore also interested in the properties related to soundness.
	\Cref{tab:complexity_of_reachability_wfnets}---as already shown in~\autocite{SchwanenP2026}---highlights the complexity results for $\CPreach$ on workflow nets considering different degrees of the soundness property fulfilled, namely quasi-liveness (QL) and the option to complete (OC), combined with the effect of the free-choice property (FC). Because sound workflow nets are a subclass of cyclic, live, and bounded Petri net systems, the reachability problem for the latter is also $\PSPACE$-complete.
	
	Most of the existing approaches for computing optimal alignments solve a shortest path problem on the reachability graph of the synchronous product where the distance of an arc is associated with the cost of the corresponding move. This leads to an extension of the reachability problem where costs are associated to the transitions of the Petri net. By considering minimal costs, we obtain a corresponding decision problem:
	\begin{problem}{Minimum-Cost Reachability ($\CPcostReach$)}
		\emph{Input:} A system $(N,M_0)$ with $N=(P,T,F,\ell)$, a cost function $c\colon T\to\costRange$, and a marking $M\in\N^P$.
		
		\emph{Question:} Given a threshold $k\in\costRange$, is there a $\sigma\in T^*$ such that $(N,M_0)\fire\sigma(N,M)$ and $\seqsum c(\sigma)\leq k$?
	\end{problem}
	
	Since we can also assign costs of 0 to any transition, $\CPcostReach$ is a generalization of $\CPreach$.
	\begin{lemma}
		\label{lem:reach_to_mincostreach}%
		$\CPreach$ is polynomial-time reducible to $\CPcostReach$.
	\end{lemma}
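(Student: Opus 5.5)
The reduction is the trivial one suggested by the remark preceding the statement. Given an instance of $\CPreach$, namely a system $(N,M_0)$ with $N=(P,T,F,\ell)$ and a target marking $M\in\N^P$, we map it to the instance of $\CPcostReach$ consisting of the same system $(N,M_0)$, the same target marking $M$, the constant cost function $c\colon T\to\costRange$, $t\mapsto 0$, and the threshold $k\coloneq 0$.

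We first check that this map is computable in polynomial time. The net and the markings are copied verbatim. The cost function is a table with $\abs T$ entries, each equal to the constant $0$, and the threshold is a single constant. The output size is therefore linear in the input size, and it is produced in linear time.

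It remains to prove correctness, i.e., $M\in\reach{N,M_0}$ if and only if there exists $\sigma\in T^*$ with $(N,M_0)\fire\sigma(N,M)$ and $\seqsum c(\sigma)\leq 0$.

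\emph{Only if.} Suppose $M\in\reach{N,M_0}$. By the definition of reachable markings there is a firing sequence $\sigma$ with $(N,M_0)\fire\sigma(N,M)$. Every transition costs $0$, so $\seqsum c(\sigma)=0\leq k$. This also covers the case $M=M_0$, witnessed by the empty sequence.

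\emph{If.} Any witness $\sigma$ for the $\CPcostReach$ instance satisfies $(N,M_0)\fire\sigma(N,M)$, so $M\in\reach{N,M_0}$ by definition. The cost condition is not needed in this direction.

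No step here is a genuine obstacle. The only point to check is that the instance produced is a legal input of $\CPcostReach$: the constant $0$ lies in $\costRange$, and so does the threshold $k=0$. Both conditions hold trivially.
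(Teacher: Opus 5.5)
Your proof is correct and follows the paper's own argument: keep the system and the target marking, assign cost $0$ to every transition, and observe that the cost condition then becomes vacuous. The only difference is that you fix the threshold $k=0$, whereas the paper allows an arbitrary $k\in\costRange$; your explicit check of both directions and of the polynomial-time bound also spells out what the paper leaves implicit.
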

	\begin{proof}
		Let $(N,M_0)$, $N=(P,T,F,\ell)$, and $M$ be an input of the $\CPreach$ problem. That is, $(N,M_0)$ is a system with a Petri net $N=(P,T,F,\ell)$, and $M\in\N^P$ is a marking.
		Let $c\colon T\to\costRange,t\mapsto c(t)\coloneq 0$ be a cost function. Then, a solution to $\CPcostReach$ with a system $(N,M_0)$ where $N=(P,T,F,\ell)$, the marking $M$ and any number $k\in\costRange$ as input is also a solution to $\CPreach$.
	\end{proof}
	
	With the extension of $\CPreach$ to $\CPcostReach$, we can now show the close connection to $\CPalign$:
	\begin{lemma}
		\label{lem:mincostreach_to_align}%
		On the class of safe systems, $\CPcostReach$ is polynomial-time reducible to $\CPalign$.
	\end{lemma}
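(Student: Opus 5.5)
The plan is to encode a $\CPcostReach$ instance as an alignment problem for the \emph{empty} trace. In that setting every move is a model move, so the alignment cost is simply the cost of the chosen complete firing sequence. Let $(N,M_0)$ with $N=(P,T,F,\ell)$ be a safe system, $c\colon T\to\costRange$ a cost function, $M\in\N^P$ a target marking and $k\in\costRange$ a threshold. The naive choice of accepting system $(N,M_0,M)$ with trace $\seq{}$ does not immediately work, because $\CPalign$ requires an easy-sound input, whereas in $\CPcostReach$ the marking $M$ may be unreachable. The main work is therefore to add an expensive \enquote{escape route} that guarantees easy soundness, does not create a cheap alignment, and preserves safeness.

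\textbf{Construction.} Add fresh places $p_{\mathit{start}}$, $p_{\mathit{run}}$, $p_{\mathit{done}}$ and fresh transitions $t_{\mathit{begin}}$, $t_{\mathit{end}}$, $t_{\mathit{skip}}$. Their arcs are as follows.
\begin{itemize}
\item $t_{\mathit{begin}}$ moves a token from $p_{\mathit{start}}$ to $p_{\mathit{run}}$.
\item $t_{\mathit{end}}$ moves a token from $p_{\mathit{run}}$ to $p_{\mathit{done}}$.
\item $t_{\mathit{skip}}$ consumes $p_{\mathit{start}}$ together with all places in $\supp{M_0}$, and produces $\supp{M}\cup\Set{p_{\mathit{done}}}$.
\item Every original transition $t\in T$ gets a self-loop with $p_{\mathit{run}}$, i.e.\ arcs $(p_{\mathit{run}},t)$ and $(t,p_{\mathit{run}})$.
\end{itemize}
Since the system is safe, $M_0$ and $M$ are sets (for $M$, if some $M(p)>1$ the $\CPcostReach$ answer is trivially \enquote{no} and we output a fixed no-instance). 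The new accepting system is $S=(N',M_0+\Multiset{p_{\mathit{start}}},M+\Multiset{p_{\mathit{done}}})$. All transitions are labeled $\tau$, and we take $\Sigma=\Set{a}$ (the alphabet must be non-empty, but $a$ is unused). The trace is $\sigma=\seq{}$. The cost function is $c'(\nomove,t)=c(t)$ for $t\in T$, $c'(\nomove,t_{\mathit{begin}})=c'(\nomove,t_{\mathit{end}})=0$, $c'(\nomove,t_{\mathit{skip}})=k+1$, and arbitrary values (say $1$) on log moves, which never occur. The threshold stays $k$. The construction is clearly polynomial, and $N'$ is weakly connected as long as $N$ is.

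\textbf{Verification.} First, $S$ is easy sound, since $\seq{t_{\mathit{skip}}}$ is a complete firing sequence. Next, check safeness.
\begin{itemize}
\item $p_{\mathit{start}}$ is marked only in the initial marking, so $t_{\mathit{skip}}$ and $t_{\mathit{begin}}$ can each fire only there, i.e.\ as the first transition.
\item After $t_{\mathit{skip}}$, the marking is exactly $M+\Multiset{p_{\mathit{done}}}$ (safe). Every transition is now dead: $p_{\mathit{run}}$ is empty, and $p_{\mathit{start}}$ never regains a token.
\item After $t_{\mathit{begin}}$, the original part evolves exactly as $(N,M_0)$, because the $p_{\mathit{run}}$ self-loops do not change any marking. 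This part is therefore safe, and it freezes once $t_{\mathit{end}}$ fires.
\end{itemize}
Hence $S$ is safe, and its complete firing sequences are exactly $\seq{t_{\mathit{skip}}}$ and $t_{\mathit{begin}}\,\rho\,t_{\mathit{end}}$ with $(N,M_0)\fire{\rho}(N,M)$. For the empty trace, the alignments are precisely these sequences as model moves, with costs $k+1$ and $\seqsum c(\rho)$ respectively. So an alignment of cost $\le k$ exists iff some $\rho$ reaches $M$ with $\seqsum c(\rho)\le k$.

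The only delicate step is designing the escape route so that safeness is not destroyed. A plain transition from $M_0$ to $M$ could fire later at a marking that covers $M_0$ and then overfill places. This is why the escape is tied to the one-shot place $p_{\mathit{start}}$, and why the original net is disabled after the escape by means of $p_{\mathit{run}}$.
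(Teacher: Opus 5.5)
Your reduction is correct, and its core is the same as the paper's. You align the empty trace, so every move is a model move, and you secure easy soundness with an escape transition of cost $k+1$.

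The difference is in the gadget. The paper just adds one transition $t_\Lskip$ with $\pset t_\Lskip=\supp{M_0}$ and $t_\Lskip\pset=\supp{M}$, and never checks that the result is still safe. In general it is not. Take $N$ with a single transition moving a token from $p_1$ to $p_2$, with $M_0=\Multiset{p_1}$ and the unreachable target $M=\Multiset{p_1,p_2}$. After $t_\Lskip$ fires, both $t_\Lskip$ and the original transition are enabled, and either one puts a second token on $p_2$.

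Your one-shot place $p_{\mathit{start}}$, together with the $p_{\mathit{run}}$ self-loops, lets the escape fire only at the initial marking and freezes the net afterwards. So your output really is a safe instance. That is exactly what the $\PSPACE$-hardness direction of \cref{thm:align_safe_pspace_complete} needs. The paper's claim that the reduction stays within safe systems is only justified by a construction like yours.

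The price is three extra places and three extra transitions. In return you also handle the details the paper leaves implicit: targets with $M(p)>1$, and the requirement of a non-empty alphabet.
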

	\begin{proof}
		Let $(N,M_0)$, $M$, $c$, and $k$ be an input for the $\CPcostReach$ problem, i.e., $(N,M_0)$ is a safe system with Petri net $N=(P,T,F,\ell)$, $M\in\N^P$ the target marking, $c\colon T\to\costRange$ a cost function, and $k\in\costRange$ a threshold.
		It is to decide if $M$ can be reached from $M_0$ with costs at most $k$.
		
		To map this to an input of $\CPalign$, we make use of the empty trace $\sigma=\seq{}$. Aligning the empty trace corresponds to finding a firing sequence from the initial marking $M_0$ to the final marking $M$ with minimal costs.
		This is because the alignment can only consist of model moves. For these model moves we can simply use the costs for firing the underlying transitions according to $c$.
		There is, however, a small issue: the system $(N,M_0,M)$ is not necessarily easy-sound. This is, however, required for inputs of $\CPalign$.
		To solve this, we add a new transition $t_\Lskip$ which allows the system to move from $M_0$ to $M$ in one step. Moreover, we make this transition very expensive (at least $k+1$). This ensures easy-soundness, and, in case $M$ is not reachable from $M_0$, the optimal alignment has costs at least $k+1$.
	\end{proof}
	Although adding the gadget of a skip transition $t_\Lskip$ is just a technicality, it nevertheless restricts the applicability of our reduction to safe systems, which is sufficient in the context of this article.
	Yet, if other model classes or additional Petri net properties are to be considered, it has to be ensured that these are not violated by adding the gadget or by replacing it with a suitable construction.
	
	By combining the two previous results, we can now reduce $\CPreach$ to $\CPalign$ in polynomial time.
	\begin{corollary}
		On the class of safe systems, $\CPreach$ is polynomial-time reducible to $\CPalign$.
	\end{corollary}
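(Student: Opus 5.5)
The corollary follows by composing \cref{lem:reach_to_mincostreach} and \cref{lem:mincostreach_to_align}. Since polynomial-time many-one reductions compose, we only need to check that the intermediate instance produced by the first reduction meets the hypothesis of the second, namely that it is a safe system.

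We proceed in three steps. First, given a $\CPreach$ instance consisting of a safe system $(N,M_0)$ and a target marking $M$, we apply \cref{lem:reach_to_mincostreach}. This produces the $\CPcostReach$ instance $(N,M_0)$, $M$, the constant cost function $c\equiv 0$, and threshold $k=0$. The net and the initial marking are unchanged, so the system is still safe.

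Second, we feed this instance into \cref{lem:mincostreach_to_align}. It outputs an $\CPalign$ instance: the empty trace $\seq{}$, the system $(N',M_0,M)$ extended by the skip transition $t_\Lskip$ with cost at least $k+1=1$, and model moves priced according to $c$. We then obtain the equivalences $M\in\reach{N,M_0}$ iff a $c$-cost-$0$ firing sequence reaches $M$ iff an alignment of cost at most $0$ exists. The second equivalence holds because any alignment of the empty trace consists only of model moves, and any use of $t_\Lskip$ costs at least $1$.

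Third, both maps are computable in time polynomial in the input size: we add one transition with its arcs, a constant cost table, and the empty trace. Hence the composite reduction is polynomial.

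The only point needing care is safeness, in particular whether adding $t_\Lskip$ can violate it in the new system. This is already handled inside \cref{lem:mincostreach_to_align}, which is stated for safe systems. So we can simply invoke it and observe that its input requirement, a safe system, is met because the first reduction leaves $(N,M_0)$ untouched.
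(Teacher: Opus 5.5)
Your proof is correct and follows the paper's argument, which obtains the corollary by composing \cref{lem:reach_to_mincostreach} (zero costs) with \cref{lem:mincostreach_to_align} (empty trace plus an expensive skip transition). One caveat: \cref{lem:mincostreach_to_align} only assumes a safe input and never shows that the net extended by $t_\Lskip$ stays safe (it need not when $M$ is unreachable), so your remark that this is \enquote{already handled} holds only in the sense that the lemma's hypothesis is met, which is all the composition requires.
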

	
	Let us take the opportunity to provide one adaption of the gadget used in the proof of \cref{lem:mincostreach_to_align} explicitly, namely for the classes of bounded and of general systems.
	Let $M_0,M\in\N^P$ be the input markings of $\CPcostReach$ and let $d$ be the maximal token difference of a place in these markings, i.e., $d\coloneq\max_{p\in P}\abs{M(p)-M_0(p)}$. We construct the gadget in the following way: first, we add $d+1$ places $P_G$. Second, we add input transitions $T_{G,\mathit{in}}$ to the places such that each place $p\in\supp{M_0}$ marked at $M_0$ is an input place of at least $M_0(p)-M(p)$ transitions in $T_{G,\mathit{in}}$, each place in $P_G$ has exactly one input transition from $T_{G,\mathit{in}}$, and each transition in $T_{G,\mathit{in}}$ has at least one input and one output place. Third, we add output transitions $T_{G,\mathit{out}}$ such that each place in $P_G$ has exactly one output transition from $T_{G,\mathit{out}}$, each place $p\in\supp{M}$ marked at $M$ is an output place of at least $M(p)-M_0(p)$ transitions in $T_{G,\mathit{out}}$, each transition in $T_{G,\mathit{out}}$ has at least one input and one output place, and for every place $p\in\supp{M_0}\cup\supp{M}$ holds that $M_0(p)-\abs{p\pset\cap T_{G,\mathit{in}}}+\abs{\pset p\cap T_{G,\mathit{out}}}=M(p)$. Finally, we assign costs of at least $(k+1)/2$ to all added transitions in $T_{G,\mathit{in}}\cup T_{G,\mathit{out}}$.
	The resulting gadget allows us to transition from the marking $M_0$ to the marking $M$ via first firing all transitions in $T_{G,\mathit{in}}$ followed by firing all transitions in $T_{G,\mathit{out}}$, but only at very high costs.
	This adaption lets us now apply the reduction to the classes of bounded and of general systems. Therefore, we can conclude this section with the following two results:
	\begin{corollary}
		On the class of bounded systems, $\CPalign$ is $\PSPACE$-hard.
	\end{corollary}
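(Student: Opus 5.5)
The plan is to chain the two reductions established in this section and then use the known hardness result for bounded nets from \cref{tab:complexity_of_reachability}. By \autocite{JonesLL1977}, $\CPreach$ is $\PSPACE$-complete on $b$-bounded systems when $b$ is part of the input and given in unary. Moreover, \cref{lem:reach_to_mincostreach} reduces $\CPreach$ to $\CPcostReach$ by assigning cost $0$ everywhere. That reduction keeps the net unchanged, so it preserves boundedness. It therefore suffices to reduce $\CPcostReach$ on bounded systems to $\CPalign$ on bounded systems in polynomial time.

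For this, I will follow the proof of \cref{lem:mincostreach_to_align}. The skip transition $t_\Lskip$ is replaced by the gadget described just before the statement. Given $(N,M_0)$, $M$, $c$ and $k$, I take the trace to be empty, so every alignment consists solely of model moves. Each model move $(\nomove,t)$ with $t\in T$ gets cost $c(t)$, and each gadget transition gets cost at least $(k+1)/2$.

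Correctness has two directions. If $M$ is reachable from $M_0$ with cost at most $k$, then the same firing sequence, taken as model moves, is an alignment of cost at most $k$. Conversely, suppose an alignment has cost at most $k$. Every firing of a gadget transition is accompanied by at least one more gadget firing, because $T_{G,\mathit{in}}$-transitions put tokens into $P_G$, which must be emptied at the end since $M$ has no tokens in $P_G$. Any use of the gadget therefore costs at least $k+1$. Hence a cheap alignment uses only original transitions and witnesses reachability of $M$ within cost $k$. Easy-soundness holds because the gadget firing sequence ($T_{G,\mathit{in}}$ first, then $T_{G,\mathit{out}}$) leads from $M_0$ to $M$, by the token-balance condition in the construction.

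The main obstacle is making sure the gadget keeps the system bounded with a polynomial bound $b'$ written in unary. I plan to argue as follows.
\begin{itemize}
\item Each place of $P_G$ has exactly one input transition. So, informally, each gadget place can accumulate at most as many tokens as its input transition can fire.
\item A $T_{G,\mathit{in}}$-transition consumes tokens from original places, and the total token content of the original net is bounded by $b\cdot\abs P$. So the tokens moved into the gadget stay bounded in terms of $b$, $\abs P$ and $d$.
\item A $T_{G,\mathit{out}}$-transition needs a token in a gadget place, so gadget outputs cannot inflate original places beyond roughly $b+d+1$ in total.
\end{itemize}
If this sketch is too loose, I will refine the gadget so that it becomes a one-shot construction. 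Concretely, I add a control place initially marked with a single token, consumed by the first gadget transition, and I chain the gadget transitions sequentially through single-token places. The gadget can then fire at most once, the added places are safe, and original places stay below $\max\{b, \max_p M(p)\}+d$, which is polynomial and unary-representable. With the $\PSPACE$-hardness of $\CPreach$ on bounded systems and these polynomial reductions composed, the corollary follows.
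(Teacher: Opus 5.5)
\paragraph{Gap: the gadget does not keep the system bounded.}
Your chain of reductions is the same as the paper's: $\CPreach$ on bounded systems, then $\CPcostReach$ with zero costs, then $\CPalign$ with the empty trace and the adapted skip gadget. Your cost analysis and your easy-soundness argument are fine. The paper simply assumes that the gadget-extended system stays bounded, so you are right to single this out as the crux. Both of your justifications of it fail, however.

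Boundedness of $(N,M_0)$ only constrains markings in $\reach{N,M_0}$. The gadget lets the original transitions continue from markings outside this set; in a negative instance, using the gadget at $M_0$ produces $M$ itself, which is unreachable by assumption. From such markings the original net may generate tokens without limit. Hence neither your bound $b\cdot\abs{P}$ on the original places nor the bound $\max\Set{b,\max_p M(p)}+d$ for the one-shot variant holds. Limiting how often the gadget fires does not address this.

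A concrete counterexample:
\begin{itemize}
\item places $p_1,p_2,p_3$;
\item a transition $t$ with $\pset t=\Set{p_2}$ and $t\pset=\Set{p_2,p_3}$;
\item a transition $u$ with $\pset u=\Set{p_1,p_3}$ and $u\pset=\Set{p_1}$, added only for connectivity;
\item $M_0=\Multiset{p_1}$ and $M=\Multiset{p_2}$.
\end{itemize}
Nothing is enabled at $M_0$, so the system is safe and $M$ is unreachable. After a single use of any gadget that moves the token from $p_1$ to $p_2$, $t$ can fire arbitrarily often and $p_3$ is unbounded. So your reduction does not map into the class of bounded systems. The one-shot variant has a second problem: it silently changes the final marking, since the control token is consumed on the gadget branch but not on the original one.

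\paragraph{Repair.}
The missing idea is to freeze the original net once the gadget is used:
\begin{itemize}
\item add an initially marked place $p_r$ on a self-loop with every original transition;
\item make the gadget a sequential chain whose first transition consumes $p_r$ (cost $k+1$) and whose last transition marks a fresh place $p_f$;
\item add a cost-$0$ transition moving the token from $p_r$ to $p_f$;
\item take $M+\Multiset{p_f}$ as the final marking;
\item output a trivial negative instance if $M(p)>b$ for some $p$.
\end{itemize}
Every reachable marking is then of one of three kinds: $M'+\Multiset{p_r}$ with $M'\in\reach{N,M_0}$, $M'+\Multiset{p_f}$ with $M'\in\reach{N,M_0}$, or a marking inside the gadget in which all original transitions are disabled. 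So $2b$ is a valid unary bound. An alignment of cost at most $k$ cannot enter the chain, so it must fire the cost-$0$ transition, after which the original part is frozen; hence it had already reached $M$.

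A simpler route also works. Safe systems are bounded with $b=1$, and \cref{lem:wfnets:pspace} proves $\PSPACE$-hardness on safe (even safe and sound workflow) nets by a direct Turing machine simulation that needs no skip gadget at all.
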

	\begin{corollary}
		On the class of general systems, $\CPalign$ is $\Ackermann$-hard.
	\end{corollary}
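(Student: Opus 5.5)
The corollary follows by composing the reductions already in place: $\CPreach\le_p\CPcostReach$ by \cref{lem:reach_to_mincostreach}, and $\CPcostReach\le_p\CPalign$ via the generalized skip gadget described just before the statement. On general (unbounded) systems, $\CPreach$ is $\Ackermann$-complete (\cref{tab:complexity_of_reachability}; \autocite{CzerwinskiO2022,Leroux2022}). So an $\Ackermann$-hard problem would reduce in polynomial time to $\CPalign$ on general systems, which gives the claim, since polynomial-time reductions are compatible with $\Ackermann$-hardness.

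Concretely, take an instance $(N,M_0)$, $M$ of $\CPreach$. Give every transition cost $0$ and pick a threshold $k=0$. Then build the accepting system $(N',M_0,M)$, where $N'$ is $N$ extended by the gadget: $d+1$ fresh places $P_G$ with $d=\max_{p}\abs{M(p)-M_0(p)}$, plus transitions $T_{G,\mathit{in}}$ and $T_{G,\mathit{out}}$ wired so that firing all of $T_{G,\mathit{in}}$ and then all of $T_{G,\mathit{out}}$ moves $M_0$ to $M$. Every gadget transition gets cost at least $(k+1)/2$. Align the empty trace $\seq{}$, using as cost function the transition costs on model moves.

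The steps to verify, in order, are these.
\begin{enumerate}
\item The construction is polynomial: the numbers of places and transitions are bounded by the unary-sized quantities in $M_0$ and $M$.
\item $(N',M_0,M)$ is easy-sound, because the gadget path exists.
\item If $M\in\reach{N,M_0}$, the corresponding firing sequence in $N$ yields an alignment of cost $0\le k$.
\item Conversely, any alignment of cost $\le k$ uses no gadget transition, since each one costs more than $k$. It is therefore a complete firing sequence of the original $N$ reaching $M$.
\end{enumerate}

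The main obstacle is step 4 in the unbounded setting. Gadget places could accumulate tokens and interact with the original net in unintended ways, but only through gadget transitions, which the cost threshold forbids entirely. The net remains weakly connected because the gadget attaches to $N$.

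A small detail also needs checking: markings must be given so that the encoding stays polynomial. If $M_0$ or $M$ are in binary, $d$ could be exponential. In that case I would replace the $d+1$ places by a single gadget transition with arc multiplicities, or simply consume $M_0$ and produce $M$ through a single expensive "reset" place pair. Either way this stays polynomial given the representation of $\CPreach$ instances.
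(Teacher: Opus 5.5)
Your proposal is correct and matches the paper's own argument: compose \cref{lem:reach_to_mincostreach} with the gadget-adapted version of \cref{lem:mincostreach_to_align}, which uses the $d+1$ places $P_G$ and the expensive transitions $T_{G,\mathit{in}}\cup T_{G,\mathit{out}}$, and then invoke the $\Ackermann$-completeness of $\CPreach$ on general systems \autocite{CzerwinskiO2022,Leroux2022}. One caveat: your fallback using arc multiplicities does not fit the paper's model, which has only ordinary arcs, but you do not need it, since the paper simply uses the $d+1$-place gadget and the known hard reachability instances have small markings.
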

	
	\section{An Upper Bound for Alignments on Safe Petri Nets}
	\label{sec:safenets}
	
	Now that we have established lower bounds for some classes of Petri nets, we will next derive our first upper bound and show that $\CPalign$ is contained in $\PSPACE$ on the class of safe systems. We begin by showing the reversed direction of our last reduction, namely that $\CPalign$ is reducible to $\CPcostReach$ in polynomial time.
	\begin{lemma}
		\label{lem:align_to_mincostreach}%
		$\CPalign$ is polynomial-time reducible to $\CPcostReach$.
	\end{lemma}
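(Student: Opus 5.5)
The plan is to use the synchronous product from \cref{def:synchronous_product}. Given an input $(\Sigma,S,\sigma,c,k)$ of $\CPalign$ with $S=(N,M_\init,M_\final)$, we build the trace system $\mathcal T(\sigma)$ from \cref{def:trace_system}; it has $\abs\sigma+1$ places and $\abs\sigma$ transitions. We then form $\mathcal T(\sigma)\otimes S=(N',M'_\init,M'_\final)$. This product has $\abs P+\abs\sigma+1$ places. Its transitions are at most $\abs\sigma\cdot\abs T$ synchronous pairs, plus $\abs\sigma$ log moves, plus $\abs T$ model moves. All of these can be listed in polynomial time, since deciding whether a pair $(t_i,t)$ is synchronous only requires comparing $\ell(t_i)$ with $\ell(t)$.

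For the cost function $c'\colon T'\to\costRange$ we transfer $c$ in the obvious way. A synchronous transition $(t_i,t)$ gets $c(\ell(t_i),t)$, a log transition $(t_i,\nomove)$ gets $c(\ell(t_i),\nomove)$, and a model transition $(\nomove,t)$ gets $c(\nomove,t)$. The $\CPcostReach$ instance is then the system $(N',M'_\init)$, the target marking $M'_\final=\Multiset{p_{\abs\sigma}}+M_\final$, the cost function $c'$, and the same threshold $k$.

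Correctness follows from the proposition of \autocite{Adriansyah2014} stating that $\Gamma_{\sigma,S}$ coincides with the (labelled) complete firing sequences of the product. A firing sequence $\rho$ of the product reaching $M'_\final$ projects, move by move, to an alignment $\gamma$ whose moves are the images of the transitions of $\rho$. Conversely, every alignment lifts to such a sequence. For the lift, the trace component forces the $i$-th move that consumes a trace letter to use the transition $t_i$, because $p_{i-1}$ is the only marked trace place. In both directions $\seqsum c'(\rho)=\seqsum c(\gamma)$ by construction. Hence an alignment of cost at most $k$ exists if and only if $M'_\final$ is reachable with cost at most $k$.

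The step I expect to need most care is this lifting and projection between alignments and firing sequences. The cited proposition is phrased via labels, so one must argue that the correspondence is cost-preserving. This is where uniqueness of the trace transition consuming each letter matters: the pairing of letters with trace transitions is determined, so the cost of each move is determined too. Finally, easy-soundness of $S$ plays no role in the target problem, because $\CPcostReach$ has no such requirement. The reduction is therefore polynomial and valid for arbitrary systems.
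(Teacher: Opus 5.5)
Your proof is correct and follows the same route as the paper. You build $\mathcal T(\sigma)\otimes S$, take $M'_\final$ as the target marking, carry the move costs over to the product transitions and keep the threshold $k$, relying on the correspondence between alignments and complete firing sequences of the product; your explicit transfer $c'(t_i,t)\coloneq c(\ell(t_i),t)$, $c'(t_i,\nomove)\coloneq c(\ell(t_i),\nomove)$, $c'(\nomove,t)\coloneq c(\nomove,t)$ is in fact slightly more careful than the paper, which simply identifies the product transitions with legal moves ($T'\subseteq\mathit{LM}_{\Sigma,S}$).
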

	\begin{proof}
		Let $\Sigma$, $S=(N,M_\init,M_\final)$, $N=(P,T,F,\ell)$, $\ell\colon T\to\Sigmatau$, $\sigma\in\Sigma^*$, $c\colon\mathit{LM_{\mathrm\Sigma,S}}\to\costRange$, and $k$ be an input of the $\CPalign$ problem.
		That is, $\Sigma$ is an alphabet representing the set of activities, $S=(N,M_\init,M_\final)$ is an easy-sound system, i.e., $\scfs{S}\neq\emptyset$, with the Petri net $N=(P,T,F,\ell)$ and labeling function $\ell\colon T\to\Sigmatau$, $\sigma\in\Sigma^*$ is a trace over the alphabet $\Sigma$, and $c\colon\mathit{LM_{\mathrm\Sigma,S}}\to\costRange$ is a function which assigns costs to each legal move between $\Sigma$ and $S$.
		
		%Let $\mathcal T(\sigma)$ be the trace system of $\sigma$.
		According to \autocite{Adriansyah2014}, finding an optimal alignment between $\sigma$ and $S$ is identical to finding a cost-minimal complete firing sequence in the synchronous product net $\mathcal T(\sigma)\otimes S$. Let $\mathcal T(\sigma)\otimes S\coloneq(N',M_\init',M_\final')$ where $N'\coloneq(P',T',F',\ell')$ and in particular $T'\subseteq LM_{\mathrm{\Sigma},S}$. Therefore, a solution to $\CPcostReach$ with a system $(N',M_\init')$ where $N'=(P',T',F',\ell')$, a marking $M_{\final}'$, a cost function $c$, and a number $k$ as input is also a solution to $\CPalign$.
		% By \cref{def:trace_system,def:synchronous_product} it is easy to see that this reduction can be computed in polynomial time.
	\end{proof}
	
	Furthermore, we can see that considering costs when firing a transition does not significantly increase the complexity of the reachability problem. In fact, there exists a polynomial-space algorithm that is capable of solving $\CPcostReach$. Using \emph{Savitch's Theorem} \autocite{Savitch1970}, we can now show that $\CPcostReach$ is contained in $\PSPACE$ on the class of safe systems:
	\begin{theorem}
		\label{lem:mincostreach_safe_pspace}%
		On the class of safe systems, $\CPcostReach$ can be decided in polynomial space (in short: $\CPcostReach \in \PSPACE$).
	\end{theorem}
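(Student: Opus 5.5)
The plan is to place $\CPcostReach$ on safe systems in $\NPSPACE$ and then invoke Savitch's Theorem~\autocite{Savitch1970}, which gives $\NPSPACE=\PSPACE$. Fix an input: a safe system $(N,M_0)$ with $N=(P,T,F,\ell)$, a cost function $c\colon T\to\costRange$, a target marking $M$, and a threshold $k$. If $M$ is not a $0/1$-vector we reject immediately, since safeness forbids reaching it. Otherwise every reachable marking is a subset of $P$. So the state space of the reachability graph has at most $2^{\abs P}$ elements, and each state is stored in $\abs P$ bits.

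The first step is a bound on the length of the witness. If some firing sequence $\sigma$ reaches $M$ with $\seqsum c(\sigma)\leq k$, then some such sequence visits no marking twice. The reason is that removing a cycle in the reachability graph removes a sub-sequence of transitions, and costs are non-negative, so the cost cannot increase. It follows that $\sigma$ can be taken of length at most $2^{\abs P}-1$. This is the only place where non-negativity of $c$ is used.

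The nondeterministic algorithm keeps four items: the current marking, a step counter bounded by $2^{\abs P}$ (which needs $\abs P$ bits), the accumulated cost, and the fixed threshold $k$. In each step it guesses a transition $t$, checks that $t$ is enabled by inspecting $\pset t$, updates the marking by the firing rule, and adds $c(t)$ to the accumulated cost. It accepts when the current marking equals $M$ and the accumulated cost is at most $k$. It rejects as soon as the accumulated cost exceeds $k$ or the counter exceeds $2^{\abs P}$.

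The main obstacle is the space needed for the accumulated cost, since costs are rationals given in the input. Write every $c(t)$ and $k$ over a common denominator $D$, the product of all denominators, whose bit length is polynomial in the input. Each summand is then an integer numerator bounded by the input size in bits. We never need to store a sum larger than $kD+\max_t c(t)D$, because we reject once the sum passes $kD$. Hence the cost register uses polynomially many bits. An alternative is to bound the sum by $2^{\abs P}\cdot\max_t c(t)D$, which is also polynomial in bit length.

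The whole nondeterministic computation therefore runs in polynomial space, and Savitch's Theorem yields a deterministic polynomial-space algorithm, i.e.\ $\CPcostReach\in\PSPACE$. Combining this with \cref{lem:align_to_mincostreach} and \cref{cor:synchronous_product_boundedness} then gives $\CPalign\in\PSPACE$ on safe systems, because the synchronous product of a safe system with the safe trace system $\mathcal T(\sigma)$ is again safe.
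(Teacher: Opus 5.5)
Your proof is correct and follows the same route as the paper: a nondeterministic walk through the at most $2^{\abs P}$ markings that stores the current marking, the accumulated cost and a step counter, cuts off after $2^{\abs P}-1$ steps, and then applies Savitch's Theorem. You also make explicit two points the paper leaves implicit: the cycle-removal argument (which uses non-negativity of $c$) that justifies the cutoff, and the common-denominator argument showing that the cost register needs only polynomially many bits.
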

	\begin{proof}
		To find a \emph{deterministic} $\PSPACE$ algorithm for $\CPcostReach$, we use \emph{Savitch's Theorem} \autocite{Savitch1970}: for every \emph{non-deterministic} $\PSPACE$ algorithm, there also exists an equivalent \emph{deterministic} algorithm. Hence it suffices to sketch a non-deterministic algorithm in what follows.
		
		Let $(N,M_0)$, $N=(P,T,F,\ell)$, $c$, $M$, and $t$ be an input of the $\CPcostReach$ problem. That is, $(N,M_0)$ is a safe system with a Petri net $N=(P,T,F,\ell)$, $c\colon T\to\costRange$ is a cost function, $M\in\N^P$ is a marking, and $t\in\costRange$ is a cost limit.
		The algorithm stores a marking $\overline{M}$, which is initially set to $\overline{M}=M_0$, and a cost value $\bar{c}$, initially set to $\bar{c}=0$. Note that since $(N,M_0)$ is safe, a marking of $N$ can be stored in polynomial space. As long as $\overline{M}\neq M$, the algorithm non-deterministically chooses a transition $t\in T$ enabled in marking $\overline{M}$ and computes the marking $\overline{M}'$ after firing $t$, i.e., $(N,\overline{M})\fire t(N,\overline{M}')$. Then, the stored marking is set to $\overline{M}\coloneq\overline{M}'$ and the stored cost value is set to $\bar{c}\coloneq\bar{c}+c(t)$.
		
		If $M$ is not reachable, the algorithm does not necessarily terminate. Thus, we add a counter which counts the number of fired transitions. Because $(N,M_0)$ is safe, it has at most $2^{\abs P}$ reachable markings. Therefore, we can stop if more than $2^{\abs P}-1$ transitions were fired. If the algorithm reaches the marking $M$ and $\bar{c}\leq t$, it stops and $M$ can be reached within the cost limit. If the counter exceeds $2^{\abs P}-1$ or $\bar{c}$ exceeds $t$, $M$ cannot be reached within the cost limit.
	\end{proof}
	
	Since $\CPreach$ on safe systems is $\PSPACE$-complete \autocite{ChengEP1993,ChengEP1995}, we can conclude:
	\begin{lemma}
		\label{thm:mincostreach_safe_pspace_complete}%
		On the class of safe systems, $\CPcostReach$ is $\PSPACE$-complete.
	\end{lemma}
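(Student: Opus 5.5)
Membership in $\PSPACE$ is already provided by \cref{lem:mincostreach_safe_pspace}. It therefore remains to show $\PSPACE$-hardness of $\CPcostReach$ on safe systems. For this I would invoke \cref{lem:reach_to_mincostreach}, which gives a polynomial-time reduction from $\CPreach$ to $\CPcostReach$. Combining it with the known $\PSPACE$-hardness of $\CPreach$ on safe systems~\autocite{ChengEP1993,ChengEP1995} yields the lower bound.

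The only point needing care is that the reduction stays inside the class of safe systems. This holds because the reduction in the proof of \cref{lem:reach_to_mincostreach} leaves the system $(N,M_0)$ and the target marking $M$ unchanged. It merely adds the constant cost function $c\equiv 0$ and an arbitrary threshold, say $k=0$. Safeness of the output instance is thus exactly safeness of the input instance.

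Correctness follows directly. If $M$ is reachable via some $\sigma$, then $\seqsum c(\sigma)=0\le k$, so the $\CPcostReach$ instance is a yes-instance. Conversely, any witness for $\CPcostReach$ is in particular a firing sequence from $M_0$ to $M$. The transformation is clearly computable in polynomial time, since it only writes down the zero function on $T$.

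There is no real obstacle here. The one step worth making explicit is the class preservation just discussed, because the analogous reduction to $\CPalign$ (\cref{lem:mincostreach_to_align}) needed an extra gadget, whereas this one needs none.
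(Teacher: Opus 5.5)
Your proposal is correct and follows the paper's proof: membership comes from \cref{lem:mincostreach_safe_pspace}, and hardness comes from \cref{lem:reach_to_mincostreach} combined with the $\PSPACE$-hardness of $\CPreach$ on safe systems. Your explicit observation that the reduction leaves the safe system untouched (only adding $c\equiv 0$ and a threshold) is left implicit in the paper, and it is correct.
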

	\begin{proof}
		On the class of safe systems, $\CPreach$ is $\PSPACE$-complete \autocite{ChengEP1993,ChengEP1995}. By \cref{lem:reach_to_mincostreach}, $\CPcostReach$ is  $\PSPACE$-hard. In combination with \cref{lem:mincostreach_safe_pspace}, $\CPcostReach$ is $\PSPACE$-complete on the class of safe systems.
	\end{proof}
	
	Altogether, this yields our first main result of this article:
	\begin{theorem}
		\label{thm:align_safe_pspace_complete}%
		On the class of safe systems, $\CPalign$ is $\PSPACE$-complete.
	\end{theorem}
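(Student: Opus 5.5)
The proof splits into membership in $\PSPACE$ and $\PSPACE$-hardness, and both follow from the reductions already established.

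\emph{Upper bound.} Given an instance of $\CPalign$ with a safe system $S$, a trace $\sigma$, a cost function $c$ and a threshold $k$, I would apply the reduction of \cref{lem:align_to_mincostreach}. This yields the $\CPcostReach$ instance on the synchronous product $\mathcal T(\sigma)\otimes S$, with target marking $M'_\final$ and transition costs given by $c$ on the corresponding legal moves.

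The one point that needs checking is that this product is again a safe system, so that \cref{lem:mincostreach_safe_pspace} applies. The trace system $\mathcal T(\sigma)$ is safe: it is a single path carrying exactly one token. By \cref{cor:synchronous_product_boundedness}, the product of two safe systems is $\max\Set{1,1}=1$-bounded, hence safe.

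The product also has polynomial size. Its places are $P\cup P_{\mathcal T(\sigma)}$, and its transitions number at most $\abs{\sigma}\cdot\abs{T}+\abs\sigma+\abs T$. So the reduction runs in polynomial time, and the composed procedure decides $\CPalign$ in polynomial space.

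\emph{Lower bound.} By the corollary following \cref{lem:mincostreach_to_align}, $\CPreach$ on safe systems reduces in polynomial time to $\CPalign$ on safe systems. Since $\CPreach$ is $\PSPACE$-complete on safe systems \autocite{ChengEP1993,ChengEP1995}, $\CPalign$ is $\PSPACE$-hard.

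The step I would verify most carefully is that the skip transition $t_\Lskip$ used in \cref{lem:mincostreach_to_align} keeps the system safe. Firing $t_\Lskip$ is only possible at markings that contain $M_0$, and it produces $M$. A naive construction could then create a reachable marking with two tokens in one place; for example, this can happen if $t_\Lskip$ fires at a marking strictly larger than $M_0$.

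To rule this out, I would make $t_\Lskip$ consume all of $\supp{M_0}$ and produce $\supp{M}$. In the hardness instances this is unproblematic, since $M_0$ and $M$ are markings of the safe input system, and in safe systems $M_0$ is typically the only marking in which those places are all marked simultaneously. If that property is not automatic, I would add a fresh control place, marked initially and consumed by $t_\Lskip$, so that the skip can fire at most once and only from $M_0$.

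The case where $t_\Lskip$ fires only at markings that are exactly $M_0$ then yields exactly $M$, so safeness is preserved. The paper already asserts that the reduction is valid on safe systems, so this step is a matter of confirming that claim. Combining both directions gives $\PSPACE$-completeness.
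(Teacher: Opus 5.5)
Your main line is the paper's. The upper bound is exactly its argument: the trace system is safe, so by \cref{cor:synchronous_product_boundedness} the synchronous product is safe and \cref{lem:mincostreach_safe_pspace} applies. Your lower bound is the same reduction chain, since the corollary you use is \cref{lem:reach_to_mincostreach} composed with \cref{lem:mincostreach_to_align}. Had you simply cited that lemma, as the paper does, the proofs would coincide.

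The problem is the extra paragraph in which you confirm that $t_\Lskip$ keeps the system safe, because that confirmation is wrong. First, a fresh control place that is marked initially and consumed only by $t_\Lskip$ stays marked in every run that avoids $t_\Lskip$. So no single final marking serves both the genuine runs and the skip runs. Second, your specific scenario cannot occur before the skip anyway: in the bounded original system, a firing sequence from $M_0$ to $M_0+D$ with $D\neq0$ could be repeated to pump tokens. The real issue is what happens after the skip. The step \enquote{$t_\Lskip$ fires at exactly $M_0$ and yields exactly $M$, so safeness is preserved} does not follow. In the interesting case $M$ is not reachable in the original net, and the original transitions remain active from $M$. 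For example, take places $p_1,p_2,p_3$, $M_0=\Multiset{p_1}$, a transition $t$ with $\pset t=\Set{p_2}$ and $t\pset=\Set{p_3}$, a transition $u$ with $\pset u=\Set{p_3}$ and $u\pset=\Set{p_1}$, and the target $M=\Multiset{p_2,p_3}$. Nothing is enabled at $M_0$, so the system is safe. Yet after $t_\Lskip$ fires at $M_0$, firing $t$ puts two tokens on $p_3$, with or without your control place.

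So your suspicion was justified: the gadget as described in \cref{lem:mincostreach_to_align} does not preserve safeness on arbitrary instances. The repair has to be different, though. It suffices to work with hard instances in which the skip is harmless. Suppose $M_0=\Multiset{p_\init}$ where $p_\init$ has no input transitions, and $M=\Multiset{p_\final}$ where $p_\final$ has no output transitions. This holds in the simulation part of the Turing-machine encoding in the proof of \cref{lem:wfnets:pspace}, keeping $t_\Lacc$ as the only producer of $p_\final$. Then $t_\Lskip$ fires only at $M_0$ and leads to a dead marking. The reachable markings are those of the original net plus $\Multiset{p_\final}$, all safe. Alternatively, \cref{lem:wfnets:pspace} itself already yields $\PSPACE$-hardness of $\CPalign$ on safe and sound workflow nets, which are safe systems, without any gadget.
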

	\begin{proof}
		A trace system is safe by definition, the synchronous product considered in \cref{lem:align_to_mincostreach} is also safe according to \cref{cor:synchronous_product_boundedness}. Hence, with \cref{thm:mincostreach_safe_pspace_complete} we have $\CPalign\in\PSPACE$ on the class of safe systems. Due to \cref{lem:mincostreach_to_align,thm:mincostreach_safe_pspace_complete}, $\CPalign$ is also $\PSPACE$-hard on the class of safe systems and thus $\PSPACE$-complete.
	\end{proof}
	
	\section{Alignments on Safe and Sound Workflow Nets}
	\label{sec:wfnets}
	
	The syntactic properties of a workflow net alone do not imply any form of \enquote{well-behavedness}, for example, deadlocks are still possible. This is different when soundness is added as an additional requirement.
	In essence, soundness does not only imply boundedness, but also corresponds to the notion of \emph{liveness} studied in Petri net theory. Liveness requires that each transition can eventually fire from every reachable marking. In combination with the structural properties of a workflow net, this rules out deadlocks and guarantees proper completion of processes.
	Of course, liveness cannot hold in a workflow net, simply because we have a terminal sink marking $\Multiset{o}$. However, if we \emph{short-circuit} the workflow net by simply adding a transition $t_\Lreset$, which moves the single token from $o$ back to $i$, i.e., $\pset t_\Lreset = \Set{o}$ and $t_\Lreset\pset = \Set{i}$, then liveness is a valid requirement for the \emph{short-circuited net}. This yields another way to describe soundness of workflow nets:
	\begin{theorem}[{{\normalfont\autocite[Theorem~11]{vanderAalst1997}}}]
		\label{thm:wfnet:sound:live}%
		A workflow net is \emph{sound} if and only if the \emph{short-circuited} net is live and bounded.
	\end{theorem}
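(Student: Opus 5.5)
The plan is to prove both directions by relating the reachable markings of the short-circuited net $\overline{N}$ (the net $N$ plus $t_\Lreset$ with $\pset t_\Lreset=\Set{o}$ and $t_\Lreset\pset=\Set{i}$) to those of $N$, both started in $\Multiset{i}$. The key structural observation is that $t_\Lreset$ can only fire at a marking $M\geq\Multiset{o}$. As soon as we know that every reachable marking covering $\Multiset{o}$ \emph{equals} $\Multiset{o}$, every firing of $t_\Lreset$ leads from $\Multiset{o}$ back to $\Multiset{i}$. Hence any firing sequence of $\overline{N}$ from $\Multiset{i}$ splits at its occurrences of $t_\Lreset$ into pieces, and each piece is a firing sequence of $N$ from $\Multiset{i}$. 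Consequently $\reach{\overline{N},\Multiset{i}}=\reach{N,\Multiset{i}}$. In both directions I first establish this \enquote{proper completion} fact and then read off the remaining properties.

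($\Rightarrow$) Assume $N$ is sound. By proper completion and the observation above, $\reach{\overline{N},\Multiset{i}}=\reach{N,\Multiset{i}}$.

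For liveness, take a reachable $M$ and a transition $t$. By the option to complete, $M$ reaches $\Multiset{o}$. Firing $t_\Lreset$ then yields $\Multiset{i}$, and by the absence of dead transitions some marking enabling $t$ is reachable from $\Multiset{i}$. This also shows that $t_\Lreset$ itself is live.

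For boundedness, I argue by contradiction, using the standard Karp--Miller/Dickson fact: an unbounded system has reachable markings $M_1<M_2$ with $(N,M_1)\fire{\rho}(N,M_2)$. By the option to complete there is $\sigma$ with $(N,M_1)\fire{\sigma}(N,\Multiset{o})$. By monotonicity of the firing rule, firing $\sigma$ from $M_2$ yields $\Multiset{o}+(M_2-M_1)$, which is reachable, covers $\Multiset{o}$ and differs from it. This contradicts proper completion.

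($\Leftarrow$) Assume $\overline{N}$ is live and bounded from $\Multiset{i}$. I proceed in three steps.

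First, proper completion. Suppose $M=\Multiset{o}+X\in\reach{\overline{N},\Multiset{i}}$ with $X>0$, say via $\rho$. Then $\rho\,t_\Lreset$ leads from $\Multiset{i}$ to $\Multiset{i}+X$. By monotonicity, $(\rho\,t_\Lreset)^n$ is firable from $\Multiset{i}$ and reaches $\Multiset{i}+nX$ for every $n$, contradicting boundedness. So every reachable marking of $\overline{N}$ that covers $\Multiset{o}$ equals $\Multiset{o}$, and hence $\reach{\overline{N},\Multiset{i}}=\reach{N,\Multiset{i}}$ as explained above. In particular proper completion holds for $N$.

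Second, the option to complete. Let $M\in\reach{N,\Multiset{i}}$. By liveness of $t_\Lreset$ there is a sequence in $\overline{N}$ from $M$ to some marking enabling $t_\Lreset$, i.e., covering $\Multiset{o}$. Cut this sequence at the first marking that covers $\Multiset{o}$. The prefix contains no $t_\Lreset$, since $t_\Lreset$ needs $o$ marked, so it is a firing sequence of $N$. Its end marking equals $\Multiset{o}$ by the first step.

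Third, no dead transitions. For $t\in T$, liveness gives a firing sequence of $\overline{N}$ from $\Multiset{i}$ to a marking enabling $t$. Taking the suffix after the last $t_\Lreset$, which starts at $\Multiset{i}$ by the first step, gives a firing sequence of $N$ that enables $t$.

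The main obstacle is the boundedness part of ($\Rightarrow$). There one has to justify the existence of a strictly increasing pair $M_1<M_2$ along a run from unboundedness. I would cite the Karp--Miller coverability-tree argument (via Dickson's lemma) rather than reprove it. Everything else reduces to the monotonicity of the firing rule and the observation that $t_\Lreset$ only ever fires at $\Multiset{o}$.
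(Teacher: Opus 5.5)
Your proof is correct. It is essentially the standard argument behind the cited result \autocite[Theorem~11]{vanderAalst1997}, which the paper states without proof: liveness comes from returning to $\Multiset{i}$ via $t_\Lreset$, boundedness from a pumping pair $M_1<M_2$ that contradicts proper completion, and conversely pumping $\rho\,t_\Lreset$ yields proper completion from boundedness. Your explicit check that $\reach{\overline{N},\Multiset{i}}=\reach{N,\Multiset{i}}$ is exactly the bookkeeping the usual sketch skips; in the ($\Rightarrow$) direction, phrase it as an induction over the occurrences of $t_\Lreset$ (proper completion of $N$ applies to the $N$-prefix before each occurrence), so that it does not read as circular.
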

	
	Here, we consider \emph{safe} and sound workflow nets. Hence, $\CPalign$ is in $\PSPACE$ on this class (cf.\@ \cref{thm:align_safe_pspace_complete}).
	Despite the further semantic restriction of soundness, the $\PSPACE$ upper bound turns out to be a lower bound as well: we can show that $\CPalign$ is $\PSPACE$-complete on sound workflow nets. To prove $\PSPACE$-hardness, we adapt the well-known hardness construction for reachability on safe Petri nets.
	Specifically, we encode the computation of a $\PSPACE$-Turing machine using a safe Petri net: a polynomial number of places is used to store the current configuration of the machine (via markings), while for each possible computational step of the Turing machine (depending on the current state, the head position, and the read symbol), a distinct transition can be fired which produces the unique successor configuration.
	While the construction itself yields a safe Petri net, soundness is not guaranteed. Indeed, the resulting Petri net can contain several dead transitions simply because certain configurations cannot be reached, i.e., those transitions can never be fired from the initial marking (which contradicts liveness).
	We thus extend the construction by new gadgets to overcome this difficulty.
	
	\begin{theorem}[$\PSPACE$-Hardness of Alignments on Safe and Sound Workflow Nets]
		\label{lem:wfnets:pspace}%
		There is a polynomial time algorithm which transforms a deterministic Turing machine $\M$ with polynomial space bound $p(n)$ and an input $w$ into a safe and sound workflow net $S=(N, M_\init, M_\final)$ and a trace $\sigma$ such that, with respect to the standard cost function, $S$ and $\sigma$ can be aligned with $0$ costs if and only if $\M$ accepts $w$.
	\end{theorem}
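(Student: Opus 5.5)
Our plan is to take the standard safe-net simulation of a space-bounded Turing machine and add gadgets that make the net sound without changing which traces have a zero-cost alignment. Let $m = p(\abs{w})$.

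\textbf{Step 1: the simulation core.} For every cell $j \le m$ and tape symbol $a$ we add a place $c_{j,a}$. For every control state $q$ and head position $j$ we add a place $h_{q,j}$. For every transition rule $\delta(q,a) = (q',b,d)$ and position $j$ we add a transition that consumes $h_{q,j}$ and $c_{j,a}$ and produces $h_{q',j+d}$ and $c_{j,b}$. Each such transition is silent ($\tau$). A reachable marking is then exactly one configuration encoding, so the core is safe, and by determinism its run is unique.

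\textbf{Step 2: initialisation and acceptance.} A silent transition from the source $i$ writes the encoding of the initial configuration on input $w$. A transition labelled $\Lacc$ fires from any $h_{q_{acc},j}$; it is followed by silent cleanup transitions that remove all cell tokens and put a token in the sink $o$. The target trace is $\sigma = \seq{\Lacc}$. Since $\Lacc$ is the only labelled transition on this path, a zero-cost alignment must fire $\Lacc$ synchronously, which happens exactly when an accepting configuration is reachable, i.e.\ when $\M$ accepts $w$.

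\textbf{Step 3: soundness gadgets.} We need three properties.
\begin{itemize}
\item \emph{Option to complete from any reachable marking.} From every non-final marking we add a silent-free escape. We add a transition labelled $\Lrej$ (standard cost $1$ if it appears in an alignment of $\seq{\Lacc}$), enabled from any head place $h_{q,j}$. It moves into a cleanup phase: a sequential sweep over cells $j = 1,\dots,m$, with one transition per symbol $a$ consuming $c_{j,a}$. Since exactly one symbol token lies in each cell, the sweep terminates and then marks $o$.
\item \emph{Safety.} This is preserved because every gadget moves the single control token and consumes tape tokens without duplicating them.
\item \emph{Quasi-liveness.} Dead simulation transitions are the real problem, since many configurations are unreachable. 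To fix this, we add a second initialisation branch from $i$, labelled $\Laux$ and therefore costly. Through silent guessing transitions it lets the net write an arbitrary tape content and head position/state, one cell at a time, and then hand over to the same simulation places. Every simulation transition $(q,a,j)$ becomes enabled in some guessed configuration, and every cleanup transition fires after a rejection from some configuration. The $\Lacc$ transition is enabled because a guessed configuration can have state $q_{acc}$; this branch still costs at least $1$ because of the $\Laux$ move.
\end{itemize}
Zero-cost alignments therefore use only the honest branch, so the equivalence from Step 2 still holds.

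\textbf{Step 4: workflow structure.} We add strong connectivity via the short-circuit check, and we argue that the construction is polynomial in $m$ and $\abs{Q}$.

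\textbf{Main obstacle.} The delicate part is the guessing branch. It must not break safeness, and the cleanup must work from every reachable marking, including half-guessed ones. To handle this, we make the guessing sequential with an explicit control token (a cell pointer). Every partial guess either completes to a full configuration or is itself cleaned up by the sweep, which guarantees the option to complete.

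Checking soundness then reduces to showing two things: every reachable marking is "one control token plus at most one symbol token per cell", and the final sweep always empties the cells.
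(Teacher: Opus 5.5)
Your proposal is correct, but it makes the net sound in a different way from the paper. The paper first preprocesses $\M$. It confines the head to $\Set{0,\dots,p(n)}$, forbids repeated configurations so every run halts, and ensures a unique accepting and a unique rejecting configuration (blank tape, head at $0$). As a result, a single transition $t_\Lacc$ or $t_\Lrej$ can consume the whole final configuration and mark $p_\final$. No cleanup sweep is needed, and option to complete in the honest branch follows from halting. For quasi-liveness the paper adds purely local $\Laux$-labelled gadgets. $t^0_\Lacc$ and $t^0_\Lrej$ produce the two final configurations directly from $p_\init$. For each simulation transition $t[q,a,i]$, a pair $t^0[q,a,i]$, $t^1[q,a,i]$ guarded by $p_\aux$ injects only its three-token pre-set and later removes a three-token post-set. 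This requires a separate argument about these non-configuration markings: after $t[q,a,i]$ with $d=0$ further simulation steps may fire, and one must check that the $t^1$ of the last fired step still leads to $p_\final$.

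You instead use a universal $\Lrej$ escape from every head place followed by a sequential tape sweep. You also use one global guessing branch that writes a complete arbitrary configuration before producing a head token. This removes any preprocessing of $\M$ beyond the space bound, since option to complete no longer depends on halting or on unique final configurations. It also gives you one uniform invariant: one control token plus at most one symbol token per cell, with every cell written whenever a head token is present. Safeness, option to complete and proper completion all follow directly from that invariant. The cost is somewhat more machinery, $O(m\cdot(\abs{\Gamma}+\abs{Q}))$ extra transitions, which is still polynomial. The one point your construction must enforce, and you do, is that the head token is produced only after the guess is finished, so no sweep can start on a partially written tape.
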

	\begin{proof}
		We extend the construction in~\autocite[Theorem~4]{ChengEP1993}.
		First, we make some assumptions on $\M$ which can be guaranteed by preprocessing. When the (deterministic, single tape) Turing machine $\M$ is started with input $w$, during the computation, the head of $\M$ only moves between positions $0$ and $p(n)$, where $n = \abs w$, starting at position $0$, i.e., it will never move beyond the bounds of cell $0$ and cell $p(n)$.
		Moreover, each computation is finite, i.e., the machine $\M$ never repeats a configuration. In particular, $\M$ halts on every input and either accepts or rejects.
		Finally, there is precisely one accepting and one rejecting configuration of the machine $\M$. To guarantee this, one can implement a subroutine such that, whenever the machine $\M$ enters a final state (accepting or rejecting), then the tape is cleared, the head moves back to position $0$, and the machine enters a unique accepting or rejecting state, respectively.
		
		Let $\M = (K, \Sigma, \Gamma, \delta, q_0, q_+, q_-, \bot)$ be a deterministic Turing machine where $K$ is the set of states, $\Sigma$ the input alphabet, $\Gamma$ the tape alphabet, $\delta\colon K\setminus\Set{q_+, q_-} \times \Gamma \to K \times \Gamma \times \Set{-1, 1, 0}$ the transition function ($-1$ means the head moves one position to the left, $1$ means the head moves one position to the right, and $0$ means the head stays in position), $q_0\in K$ the initial state, $q_+ \in K$ the unique accepting state, $q_- \in K$ the unique rejecting state, and $\bot\in\Gamma\setminus\Sigma$ the blank symbol.
		
		We describe the construction in two steps. First, we simulate $\M$ via a workflow net. Second, we rule out dead transitions (i.e., transitions which are never enabled) to guarantee soundness.
		To encode the computation of $\M$ on input $w$, we define a workflow net $N = (P, T, F, \ell, p_\init, p_\final)$ with a set of places $P$ consisting of:
		\begin{itemize}
			\item the initial place $p_\init$ and the final place $p_\final$,
			\item a place $p^K_q$ for each state $q \in K$ (a token in $p^K_q$ indicates that in the current configuration $\M$ is in state $q$),
			%\item For each state $q \in K$, the net contains a place $p^K_q$; a token in this place indicates that in the current configuration, the machine $\M$ is in state $q$.
			\item a place $p^H_i$ for each possible head position $i \in \Set{0, \dots, p(n)}$ (a token in $p^H_i$ indicates that in the current configuration the head is at position $i$),
			%\item For each possible head position $i \in \Set{0, \dots, p(n)}$, the net contains one place $p^H_i$; a token in this place indicates that in the current configuration the head is at position $i$.
			\item a place $p^C_{i, a}$ for each possible tape cell content, i.e., for each combination of a valid position $i \in \Set{0, \dots, p(n)}$ and a tape symbol $a \in \Gamma$ (a token in $p^C_{i, a}$ indicates that in the current configuration tape cell $i$ holds symbol $a$).
			%\item For each possible tape cell content, i.e., for each combination of a valid position $i \in \Set{0, \dots, p(n)}$ and a tape symbol $a \in \Gamma$, the net contains one place $p^C_{i, a}$; a token in this place indicates that in the current configuration, tape cell $i$ holds symbol $a$.
		\end{itemize}
		With this preparation, we identify configurations of $\M$ with markings of $N$.
		To simulate the computation, we introduce the following set of transitions $T$:
		\begin{itemize}
			\item One transition $t_\Lstart$ that yields the initial configuration and triggers the simulation, i.e., $\pset t_\Lstart = \Set{p_\init}$ and
			$t_\Lstart\pset = \Set{p^K_{q_0}} \cup \Set{p^H_0} \cup \Set{p^C_{i,w_i}\given i < \abs w} \cup \Set{p^C_{i, \bot}\given \abs w \leq i \leq p(n)}$.
			\item One transition $t_\Lacc$ that finalizes the computation when the (unique) accepting configuration is reached, i.e., $t_\Lacc\pset = \Set{p_\final}$ and $\pset t_\Lacc = \Set{p^K_{q_+}} \cup \Set{p^H_0} \cup \Set{p^C_{i, \bot}\given i \leq p(n)}$. This transition is labeled by $\Lacc$ and there is no other transition labeled by $\Lacc$.
			In the same way, we add one transition $t_\Lrej$ that finalizes the computation when the (unique) rejecting configuration is reached.
			\item For each possible computational step, we introduce a distinct transition: for each state $q \in K$, symbol $a \in \Gamma$ with $\delta(q,a) = (q',b,d)$, and for each head position $i \in \Set{0, \dots, p(n)}$, we introduce a transition $t[q,a,i]$ that models the configuration change which occurs when $\M$ is in state $q$, the head is at position $i$, and reads the symbol $a$. More precisely, $\pset t[q,a,i] = \Set{p^K_q, p^H_i, p^C_{i,a}}$ and $t[q,a,i]\pset = \Set{p^K_{q'}, p^H_{i + d}, p^C_{i,b}}$.
		\end{itemize}
		
		By construction, we can trigger the simulation from the initial marking $M_\init=\Multiset{p_\init}$ by firing $t_\Lstart$. This generates the initial configuration of the computation of $\M$ on $w$. Since the machine $\M$ is deterministic, from that point onward there is at most one transition of the form $t[q,a,i]$ that can be fired in the current marking/configuration. This transition, in turn, updates the marking/configuration according to the transition function of $\M$. Since we have prepared $\M$ in such a way that the computation is acyclic, we will finally reach the unique accepting or rejecting configuration from which we can fire $t_\Lacc$ or $t_\Lrej$, respectively, to reach the final marking $M_\final=\Multiset{p_\final}$.
		Giving the one-to-one correspondence between markings of $N$ and configurations of $\M$, the resulting net is safe as every reachable marking corresponds to a configuration in the sense described above and such markings only hold at most one token per place.
		
		The problem is that $N$ is not sound. In general, it might be that a transition of the form $t[q,a,i]$ can never fire simply because the computation of $\M$ on $w$ does never run into an enabling configuration.
		Also, we can either fire $t_\Lacc$ or $t_\Lrej$ from the initial marking, but not both.
		
		To overcome this technical problem, we extend the workflow net further to explicitly allow the activation of each transition. Of course, we have to take care that we do not lose the safeness property on the way.
		First, we add a new place $p_\aux$ indicating when the net is in \emph{auxiliary} mode.
		Then, we add two new transitions $t^0_\Lacc$ and $t^0_\Lrej$ which can fire at the initial marking and activate $t_\Lacc$ and $t_\Lrej$: we set $\pset t^0_\Lacc = \Set{p_\init}$ and $ t^0_\Lacc\pset = \pset t_\Lacc$ and analogously for $t^0_\Lrej$.
		In other words, $t^0_\Lacc$ and $t^0_\Lrej$ produce the two unique markings where precisely $t_\Lacc$ or $t_\Lrej$ is enabled, respectively. So, we can reach the final marking by firing either of them. Note that transitions of the form $t[q,a,i]$ are not enabled in these terminal configurations of $\M$.
		
		Second, for each transition $t[q,a,i]$ we add two new transitions $t^0[q,a,i]$ and $t^1[q,a,i]$ which activate and deactivate $t[q,a,i]$ from the initial marking, i.e.,
		\begin{alignat*}{9}
			\pset &t^0[q,a,i] &{}={}&\mathrlap{\Set{p_\init},} &\quad\text{  and  }\quad& &t^0[q,a,i] \pset &{}={}& &\Set{p_\aux} \cup{} \pset t[q,a,i], \\
			\pset &t^1[q,a,i] &{}={}&\Set{p_\aux} \cup{} t[q,a,i] \pset, &\quad\text{  and  }\quad& &t^1[q,a,i] \pset &{}={}& &\mathrlap{\Set{p_\final}.}
		\end{alignat*}
		In this way, we can move via the sequence $\seq{t^0[q,a,i], t[q,a,i], t^1[q,a,i]}$ from the initial to the final marking and fire $t[q,a,i]$ along the way.
		However, there is one subtlety we have to discuss.
		In contrast to the case of $t_\Lacc$ and $t_\Lrej$, the manual activation and firing of $t[q,a,i]$ might enable transitions different from $t^1[q,a,i]$. In fact, if the head does not move in state $q$ while reading $a$, three tokens would be produced by $t[q,a,i]$ which would allow the net to fire another transition of the form $t[q',b,i]$. But, all such transitions are conservative in the sense that they do not alter the total count of tokens (they all consume and produce three tokens). Thus, the total count of tokens remains three which means that we will never be able to fire one of the final transitions $t_\Lacc$ or $t_\Lrej$. Since $\M$ does not allow cyclic computations, eventually we get stuck in the simulation component after firing one transition of the form $t[q',b,i]$ which moves the head to the left or right because for the new cell, we are missing a token in a place $p^C_{a,i+d}$ for $d\in\Set{-1,1}$.
		In this setting we can simply fire $t^1[q',b,i]$ which removes the tokens produced by $t[q',b,i]$ and enters the final marking.
		Finally, all introduced auxiliary transitions get the extra label $\Laux$.
		Note that we cannot mix a proper simulation of the computation of $\M$ on $w$ triggered by firing $t_\Lstart$ with a transition of the form $t^1[q,a,i]$ since this transition requires a token in $p_\aux$ in order to become enabled.
		Altogether, by this second extension each transition in the workflow net can be fired from the initial marking and we maintain the property to always reach the final marking in the end.
		
		To complete our reduction to the alignment problem, let $\sigma\coloneq\seq\Lacc$. Then, we claim that $\sigma$ and the safe and sound workflow net $N$ can be aligned with costs $0$ (with respect to the standard cost function) if and only if $\M$ accepts input $w$.
		Clearly, if $\M$ accepts $w$, the simulation triggered by firing $t_\Lstart$ simulates the computation via silent transitions of the form $t[q,a,i]$ until the single transition $t_\Lacc$ with label $\Lacc$ can eventually be fired synchronously with the symbol $\Lacc$ in $\sigma$. This does not incur any costs since we only have one synchronous move and several silent moves in the workflow net.
		If, on the other hand, the computation of $\M$ on $w$ is rejecting, there is no way to align $\sigma$ with costs $0$. In fact, if any of the auxiliary transitions is used, this will immediately lead to a model move since $\sigma$ does not contain the symbol $\Laux$. If, on the other hand the proper simulation of $\M$ on $w$ is started via $t_\Lstart$, then this will end up in a completely silent run ending with $t_\Lrej$ and we would require a log move for the symbol $\Lacc$ in the trace.
		
		Finally, from the description it is immediate that the construction can be carried out by a polynomial-time algorithm.
	\end{proof}
	In \autocite{SchwanenP2026}, we also sketched how to adapt the proof above to show that $\CPreach$ on safe and sound workflow nets is already $\PSPACE$-hard.
	
	Together with \cref{thm:align_safe_pspace_complete} we can conclude:
	\begin{theorem}
		\label{thm:align_wfnets_pspace_complete}%
		On the class of safe and sound workflow nets, $\CPalign$ is $\PSPACE$-complete.
	\end{theorem}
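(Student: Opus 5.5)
The statement has two halves, membership and hardness, and I will prove them separately by combining earlier results.

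For membership in $\PSPACE$, note that every safe and sound workflow net is in particular a safe system. By \cref{thm:align_safe_pspace_complete}, $\CPalign$ is in $\PSPACE$ on safe systems, so it is in $\PSPACE$ on this subclass as well. Concretely, the argument goes through \cref{lem:align_to_mincostreach}: the synchronous product $\mathcal T(\sigma)\otimes S$ is safe by \cref{cor:synchronous_product_boundedness}, and \cref{lem:mincostreach_safe_pspace} then bounds the run of the nondeterministic search by $2^{\abs{P}}$ steps. Soundness is used only to guarantee easy-soundness of the input, since option to complete implies that $\Multiset{o}$ is reachable.

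For $\PSPACE$-hardness, I will reduce an arbitrary language $L\in\PSPACE$ to $\CPalign$ on safe and sound workflow nets, using \cref{lem:wfnets:pspace}.
\begin{enumerate}
\item Fix a deterministic polynomial-space Turing machine $\M$ deciding $L$. Deterministic machines suffice because $\PSPACE=\mathsf{NPSPACE}$ by Savitch's theorem, and any $\PSPACE$ language already has a deterministic polynomial-space decider.
\item Apply the preprocessing assumed in the proof of \cref{lem:wfnets:pspace}. The head must stay within cells $0,\dots,p(n)$, computations must never repeat a configuration, and there must be unique accepting and rejecting configurations. Bounded head movement and the unique final configurations are routine to arrange. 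Acyclicity can be enforced with a step counter of $O(p(n))$ bits stored on the tape, with the machine rejecting once the counter exceeds the number of configurations. This stays polynomial space.
\item Given an input $w$, compute in polynomial time the safe and sound workflow net $S$ and the trace $\sigma=\seq{\Lacc}$ from \cref{lem:wfnets:pspace}, and output the instance $(S,\sigma,c,k)$, where $c$ is the standard cost function and $k=0$.
\item By \cref{lem:wfnets:pspace}, $w\in L$ if and only if $\sigma$ and $S$ can be aligned with cost $0$. This holds if and only if $(S,\sigma,c,0)$ is a yes-instance of $\CPalign$.
\end{enumerate}

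The only point needing care is that the instance is legal for $\CPalign$, which requires an easy-sound system. This follows from the soundness guaranteed by \cref{lem:wfnets:pspace}. The main obstacle, the construction of a sound and safe net, is already handled by that theorem. What remains is to check that the preprocessing of $\M$, and acyclicity in particular, fits within polynomial space, which the counter argument settles.

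Combining the upper and lower bounds yields $\PSPACE$-completeness.
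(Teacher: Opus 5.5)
Your proposal is correct and follows the paper's argument: $\PSPACE$-membership is inherited from safe systems via \cref{thm:align_safe_pspace_complete}, and hardness comes from the Turing-machine construction of \cref{lem:wfnets:pspace}, using the trace $\seq{\Lacc}$, the standard cost function and threshold $0$. Your step counter for acyclicity is harmless but unnecessary, since a deterministic machine that halts on every input can never repeat a configuration.
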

	As an immediate consequence, we also get:
	\begin{corollary}
		On the class of safe and sound accepting systems, $\CPalign$ is $\PSPACE$-complete.
	\end{corollary}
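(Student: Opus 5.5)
The plan is to derive the statement directly from \cref{thm:align_wfnets_pspace_complete} and \cref{thm:align_safe_pspace_complete}, by showing that the class of safe and sound workflow nets is a subclass of safe and sound accepting systems. That subclass relation gives hardness, and membership comes from the class of all safe systems.

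\emph{Upper bound.} Every safe and sound accepting system $(N,M_\init,M_\final)$ is in particular a safe system. It is also easy sound, since the option to complete applied to $M_\init\in\reach{N,M_\init}$ gives $M_\final\in\reach{N,M_\init}$. Hence it is a legitimate input for $\CPalign$, and \cref{thm:align_safe_pspace_complete} yields $\CPalign\in\PSPACE$ on this class.

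\emph{Lower bound.} I would verify that every safe and sound workflow net $N=(P,T,F,\ell,i,o)$ becomes a safe and sound accepting system $(N,\Multiset{i},\Multiset{o})$ in the sense of \cref{def:soundness}. This needs three checks.
\begin{itemize}
\item The underlying net is weakly connected, as required of Petri nets in the paper, because every node lies on a path from $i$ to $o$.
\item Safeness is unchanged, because the initial marking is the same.
\item The three soundness conditions match one by one: the option to complete is literally the same condition; proper completion is also literally the same with $M_\final=\Multiset{o}$; and ``no dead transitions'' is exactly quasi-liveness.
\end{itemize}
So the reduction is the identity map, and the $\PSPACE$-hardness from \cref{lem:wfnets:pspace} (via \cref{thm:align_wfnets_pspace_complete}) transfers to the larger class. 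One could even use the instances produced in the proof of \cref{lem:wfnets:pspace} directly: the trace $\seq\Lacc$ with the standard cost function and threshold $0$.

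I expect no real obstacle. The only point needing care is that the two soundness definitions coincide on workflow nets. In \cref{def:soundness} proper completion is an explicit separate requirement, whereas for workflow nets the paper notes it is implied by the option to complete. Since it is nonetheless an explicit requirement in the workflow net soundness definition, the condition holds verbatim and the identification goes through.
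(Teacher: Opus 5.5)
Your proposal is correct and is essentially the paper's argument. The paper states this corollary as an immediate consequence of \cref{thm:align_wfnets_pspace_complete}: hardness is inherited from the subclass of safe and sound workflow nets, and membership comes from \cref{thm:align_safe_pspace_complete}. Your explicit checks (easy soundness via the option to complete, and the literal coincidence of the two soundness definitions under $M_\init=\Multiset{i}$, $M_\final=\Multiset{o}$) just spell out details the paper leaves implicit.
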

	
	\section{An Upper Bound for Alignments on Live, Bounded, Free-Choice Systems}
	\label{sec:lbfc_systems}
	
	As we saw, to improve algorithmic bounds for $\CPalign$, imposing soundness alone is insufficient. Thus, we turn our attention to \emph{live, $b$-bounded, free-choice systems} (\lbfc-systems, for short).
	These nets have been thoroughly studied in the area of Petri net theory and enjoy nice structural properties~\autocite[cf.][]{DeselE1995}.
	
	In our main result (\cref{thm:lbf:poly:alignments}), we show that for each trace and each \lbfc-system there exists an optimal alignment of polynomial length. From this, we obtain a simple guess-and-verify $\NP$-algorithm for the alignment problem which reduces the complexity from $\PSPACE$ to $\NP$.
	
	The result is interesting in its own right.
	For example, if the costs for moves are bounded polynomially in the size of the system (which is expected for \enquote{reasonable} cost functions, e.g., for the standard cost function), then the costs of optimal alignments are also bounded polynomially. This is clearly helpful for computing quality metrics and it could also be exploited within search algorithms for earlier pruning of the exponential state space or within mixed integer linear program encodings.
	
	\begin{definition}[\lbfc-System]
		We denote a system $(N,M_0)$ which is live, ($b$-)bounded, and free-choice as \emph{\lbfc-system}.
	\end{definition}
	In this article, we consider \lbfc-systems with respect to a \emph{fixed} value of $b\in\N$, i.e., the place bound $b$ may vary across different \lbfc-systems, but is of course fixed for each individual \lbfc-system. Implicitly, we think of $b = 1$ (safe systems), but any other fixed value for $b$ is possible, too.
	
	Our $\NP$-algorithm for \lbfc-systems is based on the following key property: whenever a marking $M'$ can be reached from $M$ in an \lbfc-system, then there \emph{exists some} connecting firing sequence of polynomial length. This was first shown in~\autocite{DeselE1995a}, but we rely on the textbook by the same authors~\autocite{DeselE1995}.
	
	\begin{theorem}[Shortest Sequence Theorem {\normalfont\autocite[Theorem~9.17]{DeselE1995}}]
		Let $(N,M_0)$ be an \lbfc-system with $n$ transitions and let $M$ be a reachable marking.
		Then, there is a firing sequence $\sigma$ such that $(N,M_0)\fire\sigma(N,M)$ and
		\begin{equation*}
			\abs\sigma \leq b\cdot\frac{n\cdot(n+1)\cdot(n+2)}{6}.
		\end{equation*}
	\end{theorem}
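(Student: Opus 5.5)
Since the statement is quoted from~\autocite[Theorem~9.17]{DeselE1995}, the plan follows the structure theory of \lbfc-systems. The overall strategy is to take a firing sequence $\sigma$ from $M_0$ to $M$ whose Parikh vector $\vec\sigma$ is minimal, and then bound the number of occurrences of each transition individually. The shape of the bound suggests the target directly. Since $\frac{n(n+1)(n+2)}{6}=\sum_{k=1}^{n}\frac{k(k+1)}{2}$, I would aim to order the transitions as $t_1,\dots,t_n$ so that $t_k$ occurs at most $b\cdot\frac{k(k+1)}{2}$ times in $\sigma$.

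\emph{Step 1 (no invariant below the Parikh vector).} Live and bounded free-choice systems are well-formed and covered by T-invariants. I would show that if $\vec\sigma\geq J$ for some nonzero T-invariant $J$, then $M$ is also reachable by a sequence with Parikh vector $\vec\sigma-J$. This uses the free-choice rearrangement results of~\autocite{DeselE1995}: the occurrences corresponding to $J$ can be permuted into a reproducing subsequence and then cut out. By minimality, $\vec\sigma$ therefore dominates no nonzero T-invariant. In particular, the set $U$ of transitions occurring in $\sigma$ does not support a T-invariant, and some transitions must occur only boundedly often.

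\emph{Step 2 (bounding sequences that avoid a set of transitions).} The key lemma I would prove is the following. In a live $b$-bounded free-choice system, let $R\subseteq T$ be a set of $k$ transitions. Then any firing sequence, from any reachable marking, that does not fire the transitions of $R$ contains at most $b\cdot$(polynomial in the number of remaining transitions) occurrences of each remaining transition. The intuition is that a long $R$-free sequence would repeat a marking, since the system is bounded. It would then contain a reproducing sequence, hence a T-invariant avoiding $R$. Combined with liveness and free-choice, this contradicts the requirement that every T-invariant of a well-formed free-choice net covers a conflict with $R$. Applying the lemma iteratively yields the ordering. First choose a transition $t_1$ occurring least often; it can occur at most $b$ times, since otherwise the segments between its occurrences would yield an invariant. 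Then treat the segments of $\sigma$ between consecutive occurrences of already-bounded transitions as sequences avoiding them. Counting these segments gives the increment of $k$ per step: transition $t_k$ occurs at most $b\cdot k$ times more than the previous level, which accumulates to the triangular number $b\cdot\frac{k(k+1)}{2}$.

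\emph{Main obstacle.} Step 2 is where I expect the difficulty. Its precise quantitative form must hold within each segment, and the number of segments grows with the occurrences of earlier transitions. Making the count add up exactly to $b\cdot k(k+1)/2$ rather than a product bound requires a careful amortized argument. This would use the token bound $b$ on the places in the presets of the removed transitions, rather than counting segments naively. I would first try to carry this out via the book's notion of $t$-free sequences, namely that each $t$-free sequence has length at most $b\cdot$(number of transitions on paths to $t$). I would then induct on the distance of transitions to the removed set in the net graph, taking $t_1,\dots,t_n$ in order of this distance. Summing the resulting per-transition bounds over $k=1,\dots,n$ gives $\abs\sigma\leq b\cdot\frac{n(n+1)(n+2)}{6}$.
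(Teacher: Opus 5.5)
Your Steps~1 and~2 both rest on claims that are false for \lbfc-systems. Take the live, safe S-system (hence an \lbfc-system) with one token on $p_0$ and the two circuits $p_0\xrightarrow{a}p_1\xrightarrow{b}p_0$ and $p_1\xrightarrow{c}p_2\xrightarrow{d}p_1$. The sequence $\seq{a,c,d,b}$ returns to the initial marking, and its Parikh vector dominates the T-invariant $J$ with $J(a)=J(b)=1$. Yet removing $J$ leaves one occurrence each of $c$ and $d$, and neither can fire from a token on $p_0$. So the cut-out principle of Step~1 is false: the occurrences of a dominated invariant cannot in general be gathered into a contiguous reproducing block. In the same system, $(ab)^n$ avoids $R=\{c\}$ for every $n$, so the key lemma of Step~2 is false as well. $R$-free sequences are unbounded as soon as a choice lets the run circle away from $R$. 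Bounds on $t$-free sequences hold in live T-systems only because there are no conflicts there. Your counting is also unsupported. The segments between consecutive occurrences of $t_1$ connect different markings, so they are not T-invariants. Boundedness alone bounds their number only by the exponential number of reachable markings. Finally, the per-transition bound $b\cdot k(k+1)/2$ is read off the target formula rather than derived.

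The missing idea is a way to control how conflicts are resolved. The paper obtains the statement as the case $M_1=M_0$ of \cref{thm:shortestst:tsystems:rev}. It first permutes $\sigma$ into a sequence that is ordered with respect to a conflict order agreeing with $\sigma$ (\cref{thm:lbf:ordered:sequences}, Proposition~9.16 of Desel and Esparza); this is where the free-choice structure and liveness are genuinely used. The ordered sequence is then cut into maximal biased blocks $\sigma_1\cdots\sigma_k$. At each boundary, the first transition of the next block shares a cluster with a $\prec$-smaller transition of the current block, and that transition never occurs again. Hence $k\leq n$ and $\sigma_i$ has at most $n-i+1$ distinct transitions. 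Inside a biased block, \cref{lem:biased:permute:once} gives rounds in which each transition occurs at most once and whose supports are non-increasing. Rounds with equal support have equal marking change. So after $b$ of them, either the support shrinks or the rounds are reproducing and form a contiguous block that can be deleted. This gives length at most $b\cdot m(m+1)/2$ for a block with $m$ distinct transitions (\cref{lem:lbf:biased:sequence}), and summing over $m\leq n$ gives $b\cdot n(n+1)(n+2)/6$. The factor $b$ enters only through rounds of identical support, and only contiguous reproducing blocks are ever removed, never an arbitrary dominated T-invariant.
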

	It is important to note that this result does \emph{not} hold for general $b$-bounded Petri net systems: here, shortest connecting paths can be of exponential length.
	This is even true for safe systems which can simulate $\PSPACE$-computations as we saw earlier.
	Unfortunately, when we tried to apply the Shortest Sequence Theorem to alignments, we faced a technical obstacle: the theorem only guarantees the \emph{existence} of a \enquote{short} firing sequence.
	In particular, when we start from an arbitrary firing sequence and then, using the above theorem, pass over to a short one, this new sequence can be completely different from the original firing sequence. This is problematic in context of alignments, since the \enquote{short} firing sequence could contain different transitions, potentially incurring much higher costs of moves.
	In a nutshell, a \enquote{shorter} sequence does not necessarily qualify as a \enquote{cheaper} sequence.
	
	Luckily, by carefully analyzing the proof in~\autocite{DeselE1995}, we found that the result can be stated in a more general form. For two firing sequences $\sigma_1, \sigma_2 \in T^*$ we write $\vec{\sigma_1}\leq\vec{\sigma_2}$ if the \emph{multi}set of transitions occurring in $\sigma_1$ is a (\emph{multi}-)subset of the \emph{multi}set of transitions occurring in $\sigma_2$. In other words, $\sigma_2$ can be transformed into $\sigma_1$ by permuting and deleting transitions.
	
	\begin{theorem}[Shortest Sequence Theorem---Generalized Form]
		\label{thm:shortestst:tsystems:rev}%
		Let $(N,M_0)$ with $N=(P,T,F,\ell)$ be an \lbfc-system with bound $b$.
		Moreover, let $\sigma\in T^*$ and $M_1,M_2\in\reach{N,M_0}$ such that $(N,M_1)\fire\sigma(N,M_2)$.
		Then, there exists a firing sequence $\sigma'\in T^*$ with $\vec{\sigma'}\leq\vec{\sigma}$ such that $(N,M_1)\fire{\sigma'}(N,M_2)$ and
		\begin{equation*}
			\abs{\sigma'} \leq b \cdot \frac{\abs T \cdot (\abs T+1) \cdot (\abs T+2)}{6}.
		\end{equation*}
	\end{theorem}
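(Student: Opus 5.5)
The plan is to re-run the proof of the Shortest Sequence Theorem from \autocite[Chapter~9]{DeselE1995} with a stronger invariant. We always keep a firing sequence from $M_1$ to $M_2$ whose Parikh vector is dominated by $\vec{\sigma}$. Every transformation the textbook applies to a connecting sequence is either a \emph{permutation} of some of its transitions or a \emph{deletion} of a sub-multiset whose effect on the marking is zero. Both operations can only decrease the Parikh vector with respect to $\leq$. So if we start from the given $\sigma$ rather than from an arbitrary connecting sequence, the short sequence produced at the end automatically satisfies $\vec{\sigma'}\leq\vec{\sigma}$. The main work is to check that no step of the original argument needs to \emph{insert} transitions, for instance to re-establish liveness or to reach an intermediate marking.

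Concretely, I would argue by induction on $\abs{\supp{\vec{\sigma}}}$, the number of distinct transitions in $\sigma$. This is the natural source of the cubic bound, since $b\cdot\frac{n(n+1)(n+2)}{6}=\sum_{k=1}^{n} b\cdot\frac{k(k+1)}{2}$. The core reduction lemma I would aim for is the following. If $(N,M_1)\fire{\sigma}(N,M_2)$ and $\sigma$ is longer than the bound, then there is a nonempty $J\leq\vec{\sigma}$ with $\mat{N}\cdot J=0$ (a T-invariant) such that some sequence $\sigma''$ with $\vec{\sigma''}=\vec{\sigma}-J$ satisfies $(N,M_1)\fire{\sigma''}(N,M_2)$. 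Iterating this lemma strictly decreases the length while preserving $\vec{\cdot}\leq\vec{\sigma}$.

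To prove the lemma, I would use the free-choice property through the standard reordering argument. Within a cluster, any two enabled transitions have the same preset. We can therefore permute $\sigma$ so that a prefix $\sigma_1$ uses only transitions of a fixed subset $U\subsetneq\supp{\vec\sigma}$, with $\abs U$ one less, while the remaining occurrences of the other transitions are pushed behind it. The induction hypothesis then shortens $\sigma_1$ to length at most $b\cdot\frac{\abs U(\abs U+1)(\abs U+2)}{6}$ while keeping its Parikh vector dominated. Boundedness by $b$ forces the remaining suffix to have length at most $b\cdot\frac{k(k+1)}{2}$ before a marking repeats on the relevant places. A repeated marking yields a removable cycle, whose Parikh vector is a T-invariant contained in the current sequence.

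The step I expect to be the main obstacle is where the textbook uses liveness. In the original proof, liveness is used to argue that the target marking remains reachable after reordering, possibly via a freshly chosen sequence; this is exactly what breaks Parikh domination. My first attempt would be to replace every such appeal by the fact that the cycles and deleted sub-multisets are taken \emph{inside} the given $\sigma$. Liveness and boundedness would then be used only to guarantee well-formedness, which makes the free-choice exchange arguments valid and rules out that deleting a T-invariant disables a later transition. If a step genuinely needs an auxiliary sequence, I would try to show that this sequence can be chosen among transitions of the current cluster already occurring in $\sigma$, so that domination is preserved.
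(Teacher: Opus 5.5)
There is a genuine gap: the two steps that carry the proof, the split and the quadratic bound on the $k$-transition piece, are not justified. Your skeleton is fine. You split a connecting sequence with $k$ distinct transitions into a piece using at most $k-1$ of them and a piece of length at most $b\cdot k(k+1)/2$. The first piece is handled by induction, since $(N,M)$ is again an \lbfc-system for reachable $M$. This gives exactly $b\cdot k(k+1)(k+2)/6$, and cutting out segments that start and end in the same marking preserves $\vec{\sigma'}\leq\vec{\sigma}$.

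\emph{The split.} That conflicting transitions share their preset does not let you permute freely. Swapping two transitions of a cluster sends tokens to different places and can disable the rest of the sequence. The reordering you need is Desel and Esparza's Proposition~9.16 (\cref{thm:lbf:ordered:sequences}): for a conflict order $\preceq$ agreeing with $\sigma$, there is a $\preceq$-ordered permutation of $\sigma$ with the same end marking. This is where liveness, boundedness and free-choiceness are used. Because it returns a permutation, Parikh domination is never at risk, so the obstacle you anticipate (liveness supplying fresh sequences) does not arise.

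Now cut the ordered sequence into maximal biased blocks. The first transition $t'$ of each new block lies in the cluster of some $t\prec t'$ of the previous block, and $t$ never occurs again. Hence block $i$ uses at most $\abs{T}-i+1$ distinct transitions. Your order, with the inductive part first, also works: the part before the maximal biased suffix misses the $\prec$-larger partner of its last transition. The point your proposal never states is that the $k$-transition piece must be \emph{biased}.

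\emph{The quadratic bound.} A repeated-marking argument only bounds the length by the number of reachable markings, which can be $(b+1)^{\abs{P}}$, not $b\cdot k(k+1)/2$. The quadratic bound comes from the biased structure:
\begin{itemize}
\item Applying \cref{lem:biased:permute:once} repeatedly rewrites a biased sequence as rounds $\rho_1\rho_2\cdots\rho_n$. Each round fires each transition at most once, and the supports are non-increasing.
\item If $b+1$ consecutive rounds have the same support, they have the same marking effect. $b$-boundedness then forces that effect to be zero, so all but the first of these rounds can be deleted.
\item Each of the at most $k$ strictly decreasing supports therefore occurs in at most $b$ rounds. This gives the bound $b\sum_{j=1}^{k}j$ (\cref{lem:lbf:biased:sequence}).
\end{itemize}
Without biasedness and this round decomposition, your estimate for the $k$-transition piece has no justification.
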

	
	In order to prove \cref{thm:shortestst:tsystems:rev} we go through the machinery of~\autocite{DeselE1995} while making necessary changes to results and proofs. For those parts that do not require any adaptation, we refer to~\autocite{DeselE1995} for details.
	
	For a firing sequence $\sigma \in T^*$, we denote by $\supp{\vec{\sigma}}\subseteq T$ the set of transitions that occur in $\sigma$.
	The proof consists of two steps.
	First, we consider the case of \emph{T-systems}. T-systems are (a very simple form of) free-choice systems where each place has at most one successor transition. In such systems, whenever a transition is enabled, it cannot be disabled by firing any other transition.
	\begin{definition}[T-Net, T-System]
		\label{def:tsystem}%
		A Petri net $N=(P,T,F,\ell)$ is a \emph{T-net} if each place has at most one input and one output transition, i.e., $\forall p\in P\colon \abs{\pset p}\leq1, \abs{p\pset}\leq1$.
		A system $(N,M_0)$ is a \emph{T-system} if $N$ is a T-net.
	\end{definition}
	The syntactic restrictions of T-nets allow us to rearrange firing sequences in such a way that the same sets of transitions are repeatedly fired until, eventually, more and more transitions die out and the set of occurring transitions becomes smaller and smaller (assuming $b$-boundedness).
	Let us start with some important definitions (all taken from~\autocite{DeselE1995}).
	
	\begin{definition}[Biased Firing Sequence {\normalfont\autocite[cf.][Definition~3.22]{DeselE1995}}]
		A firing sequence $\sigma\in T^*$ is \emph{biased} if for all $t_1,t_2\in\supp{\vec{\sigma}}$, $t_1\neq t_2$, it holds that $\pset t_1\cap\pset t_2=\emptyset$.
	\end{definition}
	
	Note that firing sequences in T-systems are always biased. In biased sequences, we can move each occurring transition to an initial segment:
	\begin{lemma}[{{\normalfont\autocite[Lemma~3.24]{DeselE1995}}}]
		\label{lem:biased:permute:once}%
		Let $(N,M_0)$ be a system with $N=(P,T,F,\ell)$ and let $\sigma\in T^*$ be a biased firing sequence. Then, there exists a permutation $\rho=\sigma_1\sigma_2$ of $\sigma$ with $\vec{\rho}=\vec{\sigma}$ such that $\exists M\in\N^P\colon(N,M_0)\fire{\sigma}(N,M)\wedge(N,M_0)\fire{\sigma_1\sigma_2}(N,M)$ and no transition occurs more than once in $\sigma_1$ and every transition that occurs in $\sigma_2$ occurs also in $\sigma_1$, i.e., $\supp{\vec{\sigma_2}} \subseteq \supp{\vec{\sigma_1}}$.
	\end{lemma}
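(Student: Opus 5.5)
We take $\sigma_1$ to be the subsequence of \emph{first occurrences} of the transitions in $\sigma$, kept in their original order, and $\sigma_2$ to be the remaining occurrences, also in their original order. Then $\sigma_1$ contains each transition of $\supp{\vec\sigma}$ exactly once. Every transition in $\sigma_2$ is a repeated occurrence, so it also occurs in $\sigma_1$, which gives $\supp{\vec{\sigma_2}}\subseteq\supp{\vec{\sigma_1}}$. Clearly $\vec{\sigma_1\sigma_2}=\vec\sigma$, so if $\sigma_1\sigma_2$ is fireable from $M_0$ it reaches the same marking $M$ as $\sigma$. The only thing to prove is fireability. A naive adjacent-swap argument fails, because a repeated occurrence of $u$ may produce tokens needed by $t$. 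We therefore argue directly by token counting.

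The key observation comes from biasedness. Fix $t\in\supp{\vec\sigma}$ and $p\in\pset t$. Then $t$ is the \emph{only} transition of $\supp{\vec\sigma}$ that has $p$ in its pre-set. Consequently, along any sequence $\rho$ of occurrences of transitions from $\supp{\vec\sigma}$, the marking of $p$ after $\rho$ equals
\[
M_0(p) + \sum_{u\text{ in }\rho}\abs{\Set{p}\cap u\pset} - \#_t(\rho)\cdot[p\notin t\pset] ,
\]
where $\#_t(\rho)$ is the number of occurrences of $t$ in $\rho$. This holds because only $t$ consumes from $p$; if $t$ sits on a self-loop with $p$, its net effect on $p$ is zero. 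Thus, apart from occurrences of $t$ itself, the marking of $p$ can only increase.

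First consider the first occurrence of $t$, which lies in $\sigma_1$. Let $\alpha$ be the prefix of $\sigma$ before it. Since $t$ is enabled after $\alpha$ and $\alpha$ contains no $t$, either $M_0(p)\geq1$ or some $u$ in $\alpha$ has $p\in u\pset$. In the latter case, the first occurrence of $u$ also precedes the first occurrence of $t$ in $\sigma$, and hence precedes it in $\sigma_1$. So in $\sigma_1$ the place $p$ holds at least one token when $t$ is to fire, and, by induction along $\sigma_1$, the whole of $\sigma_1$ is fireable.

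Next consider the $k$-th occurrence of $t$ with $k\geq2$, which lies in $\sigma_2$. Compare the occurrences preceding it in $\sigma_1\sigma_2$ with those preceding it in $\sigma$. In $\sigma_1\sigma_2$ the preceding occurrences form a superset of those in $\sigma$, since all first occurrences are moved to the front and repeated occurrences keep their relative order. Both sets contain exactly $k-1$ occurrences of $t$. By the formula above, the marking of $p$ in $\sigma_1\sigma_2$ is at least the marking of $p$ in $\sigma$ at the corresponding point, which is positive. Hence every occurrence in $\sigma_2$ is enabled.

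The main obstacle is exactly the interaction just handled: occurrences moved forward may lack tokens that were produced by later repeated occurrences. The monotonicity formula, which uses the disjoint pre-sets, resolves it, because only the transition itself ever removes tokens from its input places.
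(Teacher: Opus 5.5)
Your proof is correct, including the handling of self-loops. The paper gives no proof of this lemma; it imports it from Desel and Esparza, so there is no in-paper argument to match.

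A natural way to prove it is by induction on $\abs\sigma$. Write $\sigma=\sigma' t$ and take the decomposition $\sigma_1'\sigma_2'$ of $\sigma'$. If $t$ already occurs in $\sigma_1'$, append $t$ to $\sigma_2'$. Otherwise, show that $\sigma_1' t\sigma_2'$ is fireable:
\begin{itemize}
\item only $t$ consumes from $\pset t$;
\item every producer into $\pset t$ that occurs in $\sigma_2'$ already occurs in $\sigma_1'$;
\item firing $t$ cannot disable anything in $\sigma_2'$.
\end{itemize}
Unwinding that induction gives exactly your decomposition: first occurrences in their original order, followed by repeated occurrences in their original order. You replace the step-by-step insertion with a single global comparison of prefixes. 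This is equally rigorous, and it shows where unit arc weights matter: one earlier producer suffices for the first occurrence of $t$.

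One slip needs repair. Your displayed formula is wrong when $t$ has a self-loop on $p$. For $p\in\pset t\cap t\pset$, each occurrence of $t$ contributes $+1$ to $\sum_{u\text{ in }\rho}\abs{\Set{p}\cap u\pset}$ and $0$ to the subtracted term. By the firing rule, however, its net effect on $p$ is $0$. The correct count is $M_0(p)+\sum_{u\text{ in }\rho}\abs{\Set{p}\cap u\pset}-\#_t(\rho)$. Alternatively, restrict the sum to occurrences of transitions $u\neq t$ and keep your indicator.

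The discrepancy depends only on $\#_t(\rho)$, and this is $k-1$ for both prefixes you compare. Your verbal claim that only $t$ can decrease the marking of $p$ is also correct. So none of your conclusions are affected.

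Finally, say explicitly that the $\sigma_2$ part is an induction along $\sigma_1\sigma_2$. The prefix before the $k$th occurrence of $t$ must already be known to be fireable for the count to equal the actual marking there.
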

	
	What happens if we apply this lemma repeatedly? Then, we obtain a permutation $\sigma_1\sigma_2\sigma_3\cdots$ of the original firing sequence such that each $\sigma_i$ contains each transition at most once and the sets of occurring transitions $\supp{\vec{\sigma_i}}$ in the subsequences $\sigma_i$ get smaller and smaller with increasing index $i$, i.e., $\supp{\vec{\sigma_1}} \supseteq \supp{\vec{\sigma_2}} \supseteq \supp{\vec{\sigma_3}} \supseteq \cdots$.
	In order to guarantee the existence of \enquote{short} sequences, the idea is to bound the number of repetitions of the \emph{same} set of occurring transitions. In fact, the next lemma shows that after at least $b$ many repetitions the set of occurring sequences has to decrease (if not, we can shorten the firing sequence).
	For the following lemma we need to generalize the result from~\autocite{DeselE1995}:
	
	\begin{lemma}[{{\normalfont Generalized Form of~{\autocite[Lemma~3.25]{DeselE1995}}}}]
		\label{lem:bounded:bias:decrease}%
		Let $(N,M_0)$ with $N=(P,T,F,\ell)$ be an \lbfc-system with bound $b$ and let $(N,M_0)\fire\sigma(N,M)$ for a firing sequence $\sigma\in T^*$ and a marking $M$ such that $\sigma$ is biased and non-empty.
		Then, there exists a firing sequence $\rho=\sigma_1\sigma_2\in T^*$ with $\vec{\rho}\leq\vec{\sigma}$ such that
		\begin{itemize}
			\item $(N,M_0)\fire\sigma(N,M)$ and $(N,M_0)\fire{\sigma_1\sigma_2}(N,M)$,
			\item each transition occurs at most $b$ times in $\sigma_1$, and
			\item $\supp{\vec{\sigma_1}} \supset \supp{\vec{\sigma_2}}$ (set of occurring transitions decreases).
		\end{itemize}
	\end{lemma}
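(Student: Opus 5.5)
The plan is to prove the statement by strong induction on $\abs\sigma$. I iterate \cref{lem:biased:permute:once} to cut $\sigma$ into blocks that each contain every transition at most once. Then I use $b$-boundedness to show that the leading support cannot repeat more than $b$ times unless some whole block can be deleted. Deleting a block is exactly what the new condition $\vec\rho\leq\vec\sigma$ permits.

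\textbf{Step 1 (block decomposition).} Apply \cref{lem:biased:permute:once} to $\sigma$ from $M_0$. This gives $\tau_1\sigma'$ with $\supp{\vec{\sigma'}}\subseteq\supp{\vec{\tau_1}}$, where each transition occurs at most once in $\tau_1$ and the marking $M$ is preserved. Let $M_1$ be the marking with $(N,M_0)\fire{\tau_1}(N,M_1)$. The suffix $\sigma'$ is again biased, being a subsequence of a biased sequence, and it is a firing sequence from $M_1$. Applying \cref{lem:biased:permute:once} again, now in the system $(N,M_1)$, and iterating, I obtain a permutation $\tau_1\tau_2\cdots\tau_k$ of $\sigma$. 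It leads from $M_0$ to $M$, each $\tau_i$ contains every transition at most once, and $\supp{\vec{\tau_1}}\supseteq\supp{\vec{\tau_2}}\supseteq\cdots$. Let $U\coloneq\supp{\vec{\tau_1}}\neq\emptyset$ and let $j$ be maximal with $\supp{\vec{\tau_j}}=U$. Each of $\tau_1,\dots,\tau_j$ then contains every transition of $U$ exactly once.

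\textbf{Step 2 (case $j\leq b$).} Put $\sigma_1\coloneq\tau_1\cdots\tau_j$ and $\sigma_2\coloneq\tau_{j+1}\cdots\tau_k$. Every transition occurs at most $j\leq b$ times in $\sigma_1$, and $\supp{\vec{\sigma_2}}\subsetneq U=\supp{\vec{\sigma_1}}$. This includes the case $\sigma_2=\seq{}$, which is fine because $U\neq\emptyset$. Since $\vec\rho=\vec\sigma$, we are done.

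\textbf{Step 3 (case $j\geq b+1$, the main point).} Let $M_i$ be the marking reached after $\tau_1\cdots\tau_i$. Each of the blocks $\tau_1,\dots,\tau_{b+1}$ has Parikh vector $\mathbf 1_U$, so $M_i=M_0+i\cdot\Delta$ for $i\leq b+1$, with the same change vector $\Delta$ (incidence matrix applied to $\mathbf 1_U$). Suppose some place $p$ had $\Delta(p)\neq 0$.
\begin{itemize}
\item If $\Delta(p)\geq 1$, then $M_{b+1}(p)\geq b+1$, which contradicts $b$-boundedness.
\item If $\Delta(p)\leq -1$, then $M_{b+1}(p)\geq 0$ forces $M_0(p)\geq b+1$, again a contradiction.
\end{itemize}
Hence $\Delta=0$ and $M_1=M_2$. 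Deleting the block $\tau_2$ therefore yields a firing sequence $\tau_1\tau_3\cdots\tau_k$ from $M_0$ to $M$. It is strictly shorter, still biased, non-empty, and its Parikh vector is $\leq\vec\sigma$. By the induction hypothesis it has a sequence $\rho=\sigma_1\sigma_2$ with the required properties, and transitivity of $\leq$ gives $\vec\rho\leq\vec\sigma$.

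I expect Step 3 to be the only delicate step. Its difficulty is seeing that boundedness alone forces a repeated full-support block to be marking-neutral, so that the block can be cut out rather than permuted. The remaining care lies in Step 1: \cref{lem:biased:permute:once} must be reapplied to suffixes from intermediate reachable markings, and biasedness must be preserved there. I do not expect liveness or free-choiceness to be needed beyond $b$-boundedness and biasedness.
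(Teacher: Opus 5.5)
Your proposal is correct and follows the paper's proof. Both iterate \cref{lem:biased:permute:once} into single-occurrence blocks with decreasing supports, then use $b$-boundedness to show that $b+1$ consecutive blocks with the full support $U$ have zero net marking change. The only difference is the final step: the paper notes that every full-support block then returns to $M_0$ and deletes them all but the first at once, taking $\sigma_1=\rho_1$ and $\sigma_2=\rho_{m+1}\cdots\rho_n$ with $m$ maximal, so it needs no induction, whereas you delete one block and recurse on $\abs{\sigma}$. Both routes are valid.
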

	\begin{proof}
		First, we repeatedly apply \cref{lem:biased:permute:once} in order to obtain a permutation of $\sigma$ which is of the form $\rho_1 \rho_2 \cdots \rho_n$ with $\rho_i\neq\seq{}$ with the following properties:
		\begin{itemize}
			\item for all $i=1,\dots,n$, no transition occurs more than once in $\rho_i$, and
			\item for all $i < n$, $\supp{\vec{\rho_i}} \supseteq \supp{\vec{\rho_{i+1}}}$.
		\end{itemize}
		If $n \leq b$, we are done: simply choose $\sigma_2 = \seq{}$ and $\sigma_1 = \rho_1 \cdots \rho_n$.
		So, let us assume $n \geq b+1$.
		If $\supp{\vec{\rho_1}} \supset \supp{\vec{\rho_{b+1}}}$, we can stop our argument at this point as well, since then $\sigma_1 = \rho_1 \cdots \rho_b$ and $\sigma_2 = \rho_{b+1} \cdots \rho_n$ would have the desired properties.
		So, let us also assume that $\supp{\vec{\rho_1}} = \supp{\vec{\rho_{b+1}}}$.
		Choose the maximal $m$ such that $\supp{\vec{\rho_1}} = \supp{\vec{\rho_m}}$ (note that $b + 1 \leq m \leq n$).
		Let $X \coloneq \supp{\vec{\rho_1}} = \supp{\vec{\rho_m}} \subseteq T$.
		Since each transition in $X$ occurs precisely once in each of the $\rho_i$, we know that for each place the same number of tokens is added or removed after firing each of the subsequences $\rho_i$. Since each place can hold at most $b$ tokens (since the net is $b$-bounded) this means that firing all transitions in $X$ must leave the number of tokens in each place invariant.
		But this, in turn, implies that each of the $\rho_i$, $1 \leq i \leq m$ does not modify the initial marking $M_0$.
		Hence, we can simply choose $\sigma_1 = \rho_1$ and $\sigma_2 = \rho_{m+1} \cdots \rho_n$ which completes our proof (note that this is the case where we can shorten the initial sequence).
	\end{proof}
	
	This already implies the Shortest Sequence Theorem for T-systems. For later use, let us state the concrete implications in a generalized form of the original \emph{Biased Sequence Lemma}~\autocite[Lemma~3.26]{DeselE1995}:
	\begin{lemma}[Biased Sequence Lemma, {\normalfont Generalized Form of~{\autocite[Lemma~3.26]{DeselE1995}}}]
		\label{lem:lbf:biased:sequence}%
		Let $(N,M_0)$ with $N=(P,T,F,\ell)$ be an \lbfc-system with bound $b$ and let $(N,M_0)\fire\sigma(N,M)$ for a firing sequence $\sigma\in T^*$ and a marking $M$ such that $\sigma$ is biased and non-empty.
		Let $k$ denote the number of (distinct) transitions in $\sigma$, i.e., $k\coloneq\abs{\supp{\vec{\sigma}}}$.
		Then, there exists a firing sequence $\rho\in T^*$ such that $\vec{\rho}\leq\vec{\sigma}$, $(N,M_0)\fire\rho(N,M)$, and
		\begin{equation*}
			\abs\rho \leq b \cdot \frac{k \cdot (k+1)}{2}.
		\end{equation*}
	\end{lemma}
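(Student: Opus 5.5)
We prove the Biased Sequence Lemma by induction on $k=\abs{\supp{\vec{\sigma}}}$, applying \cref{lem:bounded:bias:decrease} repeatedly. At each application the set of occurring transitions strictly shrinks, and the prefix split off has bounded length.

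\emph{Base case.} For $k=1$, \cref{lem:bounded:bias:decrease} gives $\rho=\sigma_1\sigma_2$ with $\vec\rho\leq\vec\sigma$ and $(N,M_0)\fire{\sigma_1\sigma_2}(N,M)$. Since $\supp{\vec{\sigma_2}}\subset\supp{\vec{\sigma_1}}$ and $\abs{\supp{\vec{\sigma_1}}}\leq 1$, we get $\sigma_2=\seq{}$. Hence $\abs\rho=\abs{\sigma_1}\leq b$.

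\emph{Inductive step.} For general $k$, apply \cref{lem:bounded:bias:decrease} to obtain $\sigma_1\sigma_2$ with the following properties:
\begin{itemize}
\item $\abs{\sigma_1}\leq b\cdot\abs{\supp{\vec{\sigma_1}}}\leq b\cdot k$, since each transition occurs at most $b$ times in $\sigma_1$ and $\supp{\vec{\sigma_1}}\subseteq\supp{\vec\sigma}$;
\item $\abs{\supp{\vec{\sigma_2}}}\leq k-1$.
\end{itemize}
If $\sigma_2$ is empty we are done. Otherwise let $M'$ be the marking with $(N,M_0)\fire{\sigma_1}(N,M')$ and $(N,M')\fire{\sigma_2}(N,M)$. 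The system $(N,M')$ is again an \lbfc-system with bound $b$: liveness and $b$-boundedness are inherited by reachable markings, and free-choiceness is structural. The sequence $\sigma_2$ is biased, because its support is a subset of the support of the biased sequence $\sigma$. So the induction hypothesis, applied to $(N,M')$ and $\sigma_2$, yields $\rho_2$ with $\vec{\rho_2}\leq\vec{\sigma_2}$, $(N,M')\fire{\rho_2}(N,M)$ and $\abs{\rho_2}\leq b\,(k-1)k/2$.

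Set $\rho=\sigma_1\rho_2$. Then $\vec\rho\leq\vec{\sigma_1}+\vec{\sigma_2}\leq\vec\sigma$ and $(N,M_0)\fire\rho(N,M)$. For the length,
\begin{equation*}
\abs\rho\leq bk+b\,\frac{(k-1)k}{2}=b\,\frac{k(k+1)}{2}.
\end{equation*}

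\emph{Main subtlety.} The critical point is that the induction hypothesis applies at the intermediate marking $M'$ rather than $M_0$. This is legitimate because the lemma is stated for an arbitrary \lbfc-system and these properties are preserved under reachability. A second, minor point is the case where $\sigma_1\sigma_2$ shortened $\sigma$ (the cyclic case in the proof of \cref{lem:bounded:bias:decrease}). There $\vec{\sigma_1\sigma_2}\leq\vec\sigma$ still holds, so the inequality of Parikh vectors composes transitively and nothing further is needed.
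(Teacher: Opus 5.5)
Your proof is correct and is essentially the paper's argument: you repeatedly apply \cref{lem:bounded:bias:decrease}, so each split-off prefix has length at most $b$ times its strictly shrinking support size, and these lengths sum to at most $b\sum_{i=1}^k i = b\,k(k+1)/2$. Your explicit induction spells out what the paper's ``repeatedly applying'' leaves implicit: the intermediate system $(N,M')$ is again an \lbfc-system with bound $b$, $\sigma_2$ stays biased, and strong induction together with monotonicity of the bound covers the case $\abs{\supp{\vec{\sigma_2}}}<k-1$.
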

	\begin{proof}
		By repeatedly applying \cref{lem:bounded:bias:decrease}, we find $\rho = \rho_1 \rho_2 \cdots \rho_{k}$, $\vec{\rho} \leq \vec{\sigma}$ where
		\begin{itemize}
			\item $(N,M_0)\fire{\rho_1 \rho_2 \cdots \rho_k}(N,M)$,
			\item each transition occurs at most $b$ times in $\rho_i$, and
			\item $\rho_i$ contains at most $(k+1-i)$ many different transitions.
		\end{itemize}
		Hence, $\abs\rho \leq b \cdot \sum_{i=1}^k i = b \cdot k \cdot (k+1)/2$ as claimed.
	\end{proof}
	
	Next, we consider the case of general free-choice systems.
	We start by describing some intuition of the overall argument. In free-choice systems, it is no longer the case that firing sequences are biased, so we cannot simply apply the Biased Sequence Lemma.
	Indeed, in free-choice systems, choices are possible and transitions can disable each other, however, only in a very pure form: if two transitions share a common place as precondition, then they actually share \emph{all} their places as preconditions. Hence, choices can occur, but only between transitions with the same preconditions.
	The main idea is to linearize such choices.
	For example, let us say we have three transitions $t_1,t_2,t_3$ with the same set of preconditions which occur in a firing sequence $\sigma$. Then the idea is to rearrange $\sigma$ in such a way that, for instance, all occurrences of $t_1$ happen before the occurrences of $t_2$ and those occurrences happen before the occurrences of $t_3$. If we can achieve this kind of rearrangement, then we can split the whole (rearranged) sequence up into biased subsequences according to the linearization. This, in turn, allows us to apply the Biased Sequence Lemma to the subsequences individually and to prove the Shortest Sequence Theorem for \lbfc-systems.
	
	To proceed, we need to introduce a couple of notions and definitions. In what follows, we implicitly refer to some \lbfc-system $(N,M_0)$ with $N=(P,T,F,\ell)$.
	
	\begin{definition}[Cluster {\normalfont\autocite[cf.][Definition~4.4]{DeselE1995}}]
		For two transitions $t,t'\in T$ we let $t\sim t'$ if $\pset t=\pset t'$.
		This gives rise to an equivalence relation $\sim$ on $T$ whose equivalence classes $[t]$ are called \emph{clusters}.
	\end{definition}
	Strictly speaking, clusters---as per their usual definition---also include the input places of the transitions contained, but this is irrelevant here and therefore disregarded.
	
	\begin{definition}[Conflict Order {\normalfont\autocite[cf.][Definition~9.8]{DeselE1995}}]
		A \emph{conflict order} ${\preceq}\subseteq T\times T$ is any partial order on $T$ such that two transitions $t,t'\in T$ are comparable if and only if $[t]=[t']$, i.e., $\pset t\cap\pset t'\neq\emptyset$.
		The corresponding strict partial order is denoted by $\prec$, i.e., $t\prec t'$ if and only if $t\preceq t'$ and $t\neq t'$.
	\end{definition}
	
	To put it differently, a conflict order is composed of separate linear orderings on the clusters of the \lbfc-system.
	The key idea (and main challenge) in the proof of the \emph{Shortest Sequence Theorem} is to show that each firing sequence can be rearranged in such a way that the permuted firing sequence \emph{is ordered} with respect to \emph{some} conflict order with which the firing sequence \emph{agrees}:
	\begin{definition}[Ordered Firing Sequence {\normalfont\autocite[cf.][Definition~9.8]{DeselE1995}}]
		A firing sequence $\sigma\in T^*$ \emph{is ordered} with respect to a conflict order $\preceq$ if for all $t\prec t'$ there is no occurrence of $t$ in $\sigma$ after an occurrence of $t'$.
		Furthermore, $\sigma$ \emph{agrees} with $\preceq$ if for each cluster $c$ either no transition of $c$ occurs in $\sigma$ or the last transition of $c$ that occurs in $\sigma$ is the maximal transition in $c$ (according to $\preceq$).
	\end{definition}
	
	The central result is the following proposition from~\autocite{DeselE1995} which, conveniently, we can use without modification:
	\begin{proposition}[{{\normalfont\autocite[Proposition~9.16]{DeselE1995}}}]
		\label{thm:lbf:ordered:sequences}%
		Let $(N,M_0)$ with $N = (P,T,F,\ell)$ be an \lbfc-system.
		Moreover, let $(N,M_0)\fire\sigma(N,M)$ for a firing sequence $\sigma\in T^*$ and let $\preceq$ be any conflict order which agrees with $\sigma$.
		Then, there exists a $\preceq$-ordered permutation $\rho\in T^*$ of $\sigma$, i.e., $\vec{\rho}=\vec{\sigma}$, such that $(N,M_0)\fire\rho(N,M)$.
	\end{proposition}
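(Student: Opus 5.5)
Since \cref{thm:lbf:ordered:sequences} is taken verbatim from \autocite[Proposition~9.16]{DeselE1995}, the plan is to reconstruct its proof along the lines of that textbook. The argument runs by induction on the number of \enquote{inversions} of $\sigma$ with respect to $\preceq$, that is, pairs of occurrences $t'$ before $t$ with $t\prec t'$. The invariant is that the current permutation fires from $M_0$ to $M$ and still agrees with $\preceq$. If there are no inversions, we are done. Otherwise we treat the clusters one at a time and, inside a cluster $c=\Set{t_1\prec\dots\prec t_k}$, push all occurrences of the $\preceq$-minimal transition $t_1$ to the front of the $c$-occurrences. Then we recurse on $t_2,\dots,t_k$. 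Moving occurrences inside $c$ never creates inversions in other clusters. Because $t_1\neq t_k$ and the last $c$-transition in $\sigma$ is $t_k$, the final $c$-occurrence is never moved, so agreement is preserved.

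The basic step concerns a factorization $\sigma=\alpha\,u\,\beta\,t_1\,\gamma$, where $u\in c\setminus\Set{t_1}$ and $\beta$ contains no transition of $c$. The local facts come from free-choiceness. First, $\pset u=\pset t_1$, so $t_1$ is enabled after $\alpha$. Second, the places of $\pset c$ have only transitions of $c$ as output transitions, so $\beta$ cannot consume tokens from $\pset c$. We would like to replace the segment by $\alpha\,t_1\,\beta\,u\,\gamma$, and the marking after the segment is the same by commutativity of the marking equation. The difficulty is that $\beta$ may need tokens produced by $u$, so simply swapping the two occurrences is not a valid firing sequence in general.

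This is the main obstacle, and a purely local exchange will not do. Here I would follow Desel and Esparza and use liveness and boundedness globally, via allocations, i.e.\ choice functions picking one transition per cluster. In a live and bounded free-choice system, the subsystem obtained by following an allocation is guaranteed to reach a marking where the allocated transitions are enabled. The first attempt is to strengthen the induction hypothesis so that it applies to the suffix $u\,\beta\,t_1\,\gamma$ from the marking reached after $\alpha$, which is again reachable and hence live and bounded. One then argues by induction on $\abs\sigma$: let $u\beta$ be the shortest prefix of the suffix whose subsequent occurrence of $t_1$ conflicts. We show that the token dependencies of $\beta$ on $u$ can be reshuffled by firing $t_1$ first and letting the later $c$-occurrence take over the role of $u$. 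This works because every transition of $c$ is eventually fired, by agreement, and the Parikh vector is unchanged. Once the swap is justified, the inversion count strictly decreases, and the induction closes with a permutation $\rho$ satisfying $\vec{\rho}=\vec{\sigma}$ and $(N,M_0)\fire\rho(N,M)$.
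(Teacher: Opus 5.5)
\textbf{Verdict: genuine gap.} The paper does not reprove this proposition; it imports it unchanged from \autocite{DeselE1995}. Your reconstruction correctly isolates the obstacle: in $\alpha\,u\,\beta\,t_1\,\gamma$, the segment $\beta$ may consume tokens produced by $u$. But the step meant to overcome it is only asserted, and the reason you give for it is not valid.

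First, agreement does \emph{not} say that every transition of $c$ is eventually fired. It only says that the last $c$-occurrence in $\sigma$ is the $\preceq$-maximum of $c$. Second, equality of Parikh vectors only fixes the reached marking via the marking equation. It says nothing about enabledness along the way, which is the whole issue. The hypotheses must enter in a specific way, as the following free-choice system shows. Take $M_0=\Multiset{p}$ and transitions $u\colon p\mapsto a$, $w\colon a\mapsto\Set{p,d}$, $t_1\colon p\mapsto b$, $v\colon\Set{b,d}\mapsto p$ (pre-set $\mapsto$ post-set). Then $\sigma=\seq{u,w,t_1,v,u}$ agrees with $t_1\prec u$. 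Yet every ordered permutation must start with $t_1$, and then it is stuck in the dead marking $\Multiset{b}$. This system is neither live nor bounded. Your sentence about allocations is not a precise lemma and is never used in a concrete step, so it does not explain how liveness and boundedness exclude such behavior.

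The framework is also too local to be rescued by filling in that step. Consider the live and safe S-system with $M_0=\Multiset{p}$ and transitions $u\colon p\mapsto a$, $w\colon a\mapsto p$, $t_1\colon p\mapsto b$, $w'\colon b\mapsto p$. Take $\sigma=\seq{u,w,t_1,w',u}$ and $t_1\prec u$. Your basic step produces $\seq{t_1,w,u,w',u}$, which is not fireable. The only ordered permutation is $\seq{t_1,w',u,w,u}$, so the transitions outside $c$ must be reordered among themselves; the later $u$ can only take over the role of the first one after $w'$ from $\gamma$ is pulled in front of $\beta$. Once such moves are needed, your remark that moving $c$-occurrences creates no inversions in other clusters no longer covers the operation. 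Hence the inversion count is no longer a termination measure.

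What is missing is a global lemma of roughly this form. For non-empty $\sigma$ agreeing with $\preceq$ in an \lbfc-system, there exist $t$ and $\sigma'$ such that:
\begin{itemize}
\item $t\sigma'$ is a permutation of $\sigma$ with $(N,M_0)\fire{t\sigma'}(N,M)$;
\item $t$ is $\preceq$-minimal among the transitions of $[t]$ occurring in $\sigma$;
\item $\sigma'$ again agrees with $\preceq$.
\end{itemize}
Given this, the proposition follows by induction on $\abs\sigma$, since for $(N,M_0)\fire{t}(N,M_1)$ the system $(N,M_1)$ is again an \lbfc-system. Proving that lemma is where liveness and boundedness are genuinely used. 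It is exactly the part the paper delegates to \autocite{DeselE1995}, and your proposal leaves it open.
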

	
	With \cref{thm:lbf:ordered:sequences} and the Biased Sequence Lemma (\cref{lem:lbf:biased:sequence}) we can finally prove our generalized version of the Shortest Sequence Theorem.
	% as stated above in \cref{thm:shortestst:tsystems:rev}.
	\begin{proofof}{\cref{thm:shortestst:tsystems:rev}}
		First, note that for $M\in\reach{N,M_0}$ the system $(N,M)$ is an \lbfc-system as well. Hence, it suffices to shorten a firing sequence $\sigma$ of the form $(N,M_0)\fire\sigma(N,M)$.
		Moreover, due to \cref{thm:lbf:ordered:sequences}, we can assume that $\sigma$ is $\preceq$-ordered for some conflict order $\preceq$ (if not, we can permute $\sigma$ accordingly).
		
		We iteratively split $\sigma$ up into parts $\sigma_i$, $i=1,\dots,k$, i.e., $\sigma=\sigma_1\sigma_2\cdots\sigma_k$, with respect to the following property: $\sigma_i$ is a maximal prefix of $\sigma_i\cdots\sigma_{k}$ such that $\sigma_i$ is biased.
		We claim that the number of distinct transitions in $\sigma_{i+1}$ is smaller than the number of distinct transitions in $\sigma_i$.
		This is because the first transition $t'$ of $\sigma_{i+1}$ must be in the same cluster $t'\in[t]$ of some transition $t\neq t'$ that occurs in $\sigma_i$ (because of the maximality of $\sigma_i$). Since $\sigma$ is ordered, $t\prec t'$ and $t$ cannot occur in $\sigma_{i+1}\cdots\sigma_k$.
		Hence, $\sigma_i$ contains at most $(\abs T-i+1)$ distinct transitions.
		In particular, $k\leq\abs T$.
		We have $(N,M_{i-1})\fire{\sigma_i}(N,M_i)$ where $M_k=M$.
		By \cref{lem:lbf:biased:sequence} we find firing sequences $\sigma_i'$, $\vec{\sigma_i'}\leq\vec{\sigma_i}$ of length at most $b\cdot(\abs T-i+1)\cdot(\abs T-i+2)/2$ with $(N,M_{i-1})\fire{\sigma_i'}(N,M_i)$.
		The shortened sequence $\sigma'=\sigma_1'\sigma_2'\cdots\sigma_k'$, $\vec{\sigma'}\leq\vec{\sigma}$ has length at most
		\begin{equation*}
			\frac{b}{2}\cdot\sum_{i=1}^{\abs T}(\abs T-i+1)\cdot(\abs T-i+2)=\frac{b}{2}\cdot\frac{\abs T\cdot(\abs T+1)\cdot(\abs T+2)}{3},
		\end{equation*}
		and satisfies $(N,M_0)\fire{\sigma'}(N,M)$, which completes our proof.
	\end{proofof}
	
	With this result, we can bound the length of sequences of consecutive model moves in alignments. This, in turn, allows us to show that \lbfc-systems always have optimal alignments of polynomial length:
	\begin{theorem}[Alignments in \lbfc-Systems]
		\label{thm:lbf:poly:alignments}%
		Let $S=(N,M_\init,M_\final)$ with $N=(P,T,F,\ell)$ and labeling function $\ell\colon T\to\Sigmatau$ be an accepting \lbfc-system with bound $b$ and let $\sigma\in\Sigma^*$ be a trace over the alphabet $\Sigma$.
		Then, there exists an optimal alignment $\gamma\in\Gamma_{\sigma,S}$ between $\sigma$ and $S$ such that
		\begin{equation*}
			\abs\gamma \leq (\abs\sigma + 1) \cdot \left(b \cdot \frac{\abs T \cdot (\abs T+1) \cdot (\abs T+2)}{6} + 1\right).
		\end{equation*}
	\end{theorem}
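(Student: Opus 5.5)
Start from an arbitrary optimal alignment $\gamma^*\in\Gamma_{\sigma,S}$; it exists because $S$ is easy sound and costs are non-negative, so the minimum is attained. The plan is to shorten $\gamma^*$ without increasing its cost, using the generalized Shortest Sequence Theorem (\cref{thm:shortestst:tsystems:rev}).

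First, decompose $\gamma^*$ according to the moves that consume a letter of $\sigma$, namely the synchronous moves and the log moves. There are exactly $\abs\sigma$ such moves. They split $\gamma^*$ into $\abs\sigma+1$ (possibly empty) maximal blocks $\beta_0,\dots,\beta_{\abs\sigma}$ of consecutive model moves, so that
\begin{equation*}
\gamma^*=\beta_0\,\mu_1\,\beta_1\,\mu_2\cdots\mu_{\abs\sigma}\,\beta_{\abs\sigma},
\end{equation*}
where each $\mu_j$ is a synchronous or log move. Let $M_0'=M_\init$ and track the markings of $N$ along $\pi_2(\gamma^*)\vert_T$. Each block $\beta_j$ corresponds to a firing sequence $\tau_j=\pi_2(\beta_j)$ with $(N,M_j)\fire{\tau_j}(N,M_j')$, where $M_j$ and $M_j'$ are reachable from $M_\init$, since $\pi_2(\gamma^*)\vert_T$ is a firing sequence starting at $M_\init$. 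The move $\mu_{j+1}$ either leaves the marking unchanged (log move) or fires one transition (synchronous move), leading from $M_j'$ to $M_{j+1}$.

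Second, apply \cref{thm:shortestst:tsystems:rev} to each block. The system $(N,M_\init)$ is an \lbfc-system with bound $b$, and $M_j,M_j'\in\reach{N,M_\init}$. The theorem therefore yields $\tau_j'$ with $\vec{\tau_j'}\leq\vec{\tau_j}$, $(N,M_j)\fire{\tau_j'}(N,M_j')$, and $\abs{\tau_j'}\leq b\cdot\abs T(\abs T+1)(\abs T+2)/6$. Replace $\beta_j$ by the model moves $(\nomove,t)$ for $t$ in $\tau_j'$. Since the endpoints $M_j'$ are unchanged, the moves $\mu_{j+1}$ remain enabled and legal, and the concatenation still reaches $M_\final$. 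The first projection still restricts to $\sigma$, so the result $\gamma$ lies in $\Gamma_{\sigma,S}$.

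The cost comparison is where the multiset inclusion $\vec{\tau_j'}\leq\vec{\tau_j}$ is essential: it is exactly why the generalized form is needed rather than the original theorem. The cost of a model move depends only on its transition, $c(\nomove,t)$, and all costs are non-negative. Hence $\seqsum c(\beta_j')\leq\seqsum c(\beta_j)$ for every $j$, so $\gamma$ costs at most as much as $\gamma^*$ and is therefore also optimal.

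Finally, count the length: $\abs\sigma$ letter-consuming moves plus $\abs\sigma+1$ blocks, each of length at most $b\cdot\abs T(\abs T+1)(\abs T+2)/6$. This total is at most $(\abs\sigma+1)\bigl(b\cdot\abs T(\abs T+1)(\abs T+2)/6+1\bigr)$, as claimed. The main obstacle, preserving cost under shortening, is handled entirely by \cref{thm:shortestst:tsystems:rev}. The remaining care points are that the intermediate markings are reachable, so that the theorem applies, and that the block endpoints are preserved exactly, so that the interleaved synchronous moves remain firable.
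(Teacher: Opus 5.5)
Your proposal is correct and follows essentially the paper's argument. Like the paper, you split the alignment at the $\abs\sigma$ letter-consuming moves, shorten each maximal block of model moves with the generalized Shortest Sequence Theorem, and use $\vec{\tau_j'}\leq\vec{\tau_j}$ to ensure costs do not increase; the only difference is that the paper fixes an optimal alignment of minimal length and concludes that no block can exceed the bound, whereas you perform the replacement directly.
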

	\begin{proof}
		Let $\gamma=\seq{\gamma_i}_{i=1}^{\abs\gamma}\in\Gamma_{\sigma,S}$ be an optimal alignment of minimal length. We show that $\abs\gamma$ satisfies the above inequality.
		
		Let us denote the length of the trace $\sigma$ by $q\coloneq\abs\sigma$.
		Since $\pi_1(\gamma)\vert_\Sigma=\sigma$, we can find a subsequence $\seq{\gamma_i}_{i\in I}$ of $\gamma$ with $I\subseteq\Set{1,\dots,\abs\gamma}$, $I=\Set{i_1,i_2,\dots,i_q}$, $i_1<i_2<\dots<i_{q}$ and such that $\pi_1(\seq{\gamma_i}_{i\in I})=\sigma$. We use the positions $i_1,i_2,\dots,i_{q}\in I$ in order to split $\gamma$ up into $q+1$ parts:
		\begin{align*}
			\delta_0 &= \seq{\gamma_1,\dots,\gamma_{i_1}}, \\
			\delta_1 &= \seq{\gamma_{i_1+1},\dots,\gamma_{i_2}}, \\
			&\enspace\vdots \\
			\delta_{q-1} &= \seq{\gamma_{i_{q-1}+1},\dots,\gamma_{i_{q}}}, \\
			\delta_{q} &= \seq{\gamma_{i_{q}+1},\dots,\gamma_{\abs\gamma}}.
		\end{align*}
		
		For an illustration, see \cref{fig:splitting:gamma}.
		Next, we show that for each $j\in\Set{0,\dots,q}$ we have $\abs{\delta_j}\leq b\cdot\abs T\cdot(\abs T+1)\cdot(\abs T+2)/6+1$. If we can verify this, the claim follows.
		Pick some $j\in\Set{0,\dots,q}$ and let
		\begin{align*}
			M_j\in\N^P\colon\ & (N,M_\init)\fire{\pi_2(\seq{\gamma_1,\dots,\gamma_{i_j}})\vert_T}(N,M_j), \\
			M_j'\in\N^P\colon\ & (N,M_\init)\fire{\pi_2(\seq{\gamma_1,\dots,\gamma_{i_{j+1}-1}})\vert_T}(N,M_j'),
		\end{align*}
		where for the cases $j=0$ we let $M_0\coloneq M_\init$ and for $j=q$ we let $M_{q}'\coloneq M_\final$.
		In words, $M_j$ is the marking that we obtain from $M_\init$ by firing the transitions in $\gamma$ of the first $j$ parts, i.e., $\pi_2(\seq{\gamma_1,\dots,\gamma_{i_j}})\vert_T$, and $M_j'$ is the marking that we get by firing the transitions in the first $j+1$ parts, i.e., $\pi_2(\seq{\gamma_1,\dots,\gamma_{i_{j+1}-1}})\vert_T$, except for the very last transition from the move $\gamma_{i_j}$.
		The motivation for looking at these two markings is as follows:
		\begin{itemize}
			\item by definition, we can reach the marking $M_j'$ from marking $M_j$ by firing the intermediate sequence $\pi_2(\seq{\gamma_{i_j+1},\dots,\gamma_{i_{j+1}-1}})$,
			\item each of the moves in this sequence $\seq{\gamma_{i_j+1},\dots,\gamma_{i_{j+1}-1}}$ is of the form $(\nomove,t)$, i.e., we only move in the system, but not in the trace,
			\item the length of this sequence is $\abs{\delta_j}-1$.
		\end{itemize}
		Hence, it suffices to show that the sequence $\seq{\gamma_{i_j+1},\dots,\gamma_{i_{j+1}-1}}$ is of length at most $b\cdot\abs T\cdot(\abs T+1)\cdot(\abs T+2)/6$.
		To see this, we make use of \cref{thm:shortestst:tsystems:rev}.
		In fact, by this result we know that from $\pi_2(\seq{\gamma_{i_j+1},\dots,\gamma_{i_{j+1}-1}})$ we could construct, by deleting and rearranging transitions, a firing sequence which leads from $M_j$ to $M_j'$ in the underlying system of length at most $b\cdot\abs T\cdot(\abs T+1)\cdot(\abs T+2)/6$.
		Since \emph{all} moves in $\seq{\gamma_{i_j+1},\dots,\gamma_{i_{j+1}-1}}$ are of the form $(\nomove,t)$, we could lift the necessary deletion and rearrangement steps to the level of $\gamma$ without doing any harm to the alignment properties.
		Also note that since we only delete and rearrange moves, the costs of the alignment do not increase.
		Since $\gamma$ was chosen to be an optimal alignment of minimal length, the claim follows.
		\begin{figure}[t]
			\centering
			\newdimen\boxlength
			\boxlength=2em
			\begin{tikzpicture}[scale=.80, transform shape,x=\boxlength,y=\boxlength,cell/.style={draw=blue!50,fill=blue!10,rectangle,minimum size=\boxlength},thick]
				\coordinate (label) at (-0.5, 0);
				\foreach \i in {0,1,4,6,7,10,12,13,16,18,19,22}{
					\node[cell,anchor=south west] (l\i) at (\i,0) {$\nomove$};
					\node[cell,anchor=north west,yshift=\pgflinewidth] (m\i) at (\i,0) {$t_\star$};
				}
				\foreach \i/\j in {5/1,11/2,17/q}{
					\node[cell,fill=green!20,anchor=south west] (l\i) at (\i,0) {$\sigma_\j$};
					\node[cell,fill=green!20,anchor=north west,yshift=\pgflinewidth] (m\i) at (\i,0) {$\star$};
					\node[anchor=south] (i\i) at (\i.5,1) {$i_\j\strut$};
					\node[anchor=north] (M\i) at (\i+1,-1) {$M_{\j}^{\phantom{'}}$};
				}
				\node[anchor=north] (M0) at (0,-1) {$M_0^{\phantom{'}}$};
				\foreach \i/\j in {5/0,11/1,17/{q-1},23/q}{
					\node[anchor=north] (Mp\i) at (\i,-1) {$M_{\j}^{'}$};
				}
				\foreach \i in {3,9,15,21}{
					\node (d\i) at (\i,0) {$\cdots$};
				}
				\node[anchor=south] (di) at (15,1) {$\cdots\strut$};
				\foreach \i/\j in {l/\sigma,m/S}{
					\node[anchor=base] (\i l) at (label |- \i0.base) {$\j$};
				}
				\begin{scope}[decoration={brace,amplitude=0.2\boxlength}]
					\draw[decorate] (-1,-1) -- node[left=.1\boxlength] {$\gamma\;\strut$} (-1,1);
					\foreach \i/\j/\k in {0/6/0,6/12/1,18/23/q}{
						\draw[decorate] (\i+0.1,2) -- node[above=.1\boxlength] {$\delta_\k\strut$} (\j-0.1,2);
						\coordinate (bl) at (\i.5, 0);
						\draw[blue!50,-Stealth,very thick,bend right=10] (bl|-M0.base) to +(4,0);
					}
				\end{scope}
			\end{tikzpicture}
			\caption{Decomposing the alignment $\gamma$ into parts $\delta_j$ at non-model move positions $i_j$.
				The proof idea is to shorten model move sequences (purple arrows) to connecting sequences of length at most $b\cdot\abs{T}\cdot(\abs{T}+1)\cdot(\abs{T}+2)/6$. $t_\star$ is a placeholder for any transition in $T$, $\star$ for a transition in $T$ or the no-move symbol $\nomove$.}
			\label{fig:splitting:gamma}
		\end{figure}
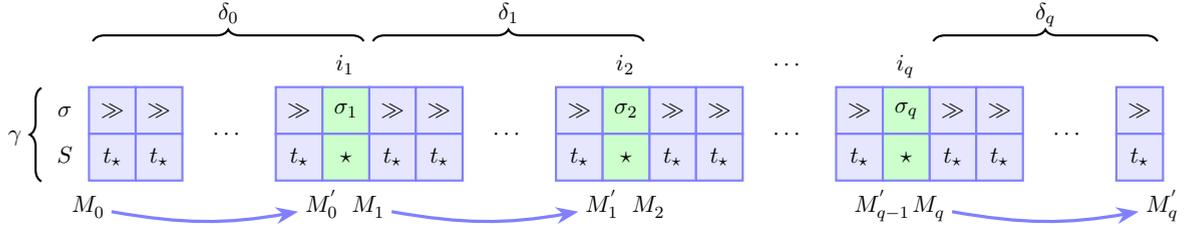
	\end{proof}
	
	\Cref{thm:lbf:poly:alignments} yields an $\NP$-strategy for the alignment problem on \lbfc-systems:
	on input $S$, $\sigma$, and $k$, where $S$ is an accepting \lbfc-system, $\sigma$ is a trace as above, and $k \in \costRange$ denotes a threshold, the problem is to decide whether some alignment $\gamma \in \Gamma_{\sigma,S}$ exists with costs $\seqsum c(\gamma) \leq k$.
	We make use of \Cref{thm:lbf:poly:alignments} and non-deterministically construct an alignment $\gamma$ of length at most $(\abs\sigma + 1) \cdot (b \cdot \abs T \cdot (\abs T+1) \cdot (\abs T+2)/6 + 1)$ and verify that:
	\begin{itemize}
		\item $\gamma$ is a valid alignment between $\sigma$ and $S$, and
		\item its costs $\seqsum c(\gamma)$ do not exceed~$k$.
	\end{itemize}
	By \Cref{thm:lbf:poly:alignments}, an optimal alignment between $\sigma$ and $S$ is among the potential candidates.
	Note that (since $b$ is a constant) the length of $\gamma$ is bounded polynomially in $\sigma$ and $S$. Moreover, it is easy to verify that $\gamma$ is a valid alignment by just simulating the system $S$ and $\sigma$ accordingly and to check that its costs do not exceed $k$. Hence, we obtain:
	\begin{corollary}
		\label{cor:lbf:np:membership}%
		On the class of \lbfc-systems, $\CPalign$ is in~$\NP$.
	\end{corollary}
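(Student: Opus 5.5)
The plan is a guess-and-verify algorithm whose correctness rests on \cref{thm:lbf:poly:alignments}. Fix an input consisting of an alphabet $\Sigma$, an accepting \lbfc-system $S=(N,M_\init,M_\final)$ with bound $b$, a trace $\sigma$, a cost function $c$ (rationals encoded in binary) and a threshold $k$. Set
\begin{equation*}
L\coloneq(\abs\sigma+1)\cdot\left(b\cdot\frac{\abs T\cdot(\abs T+1)\cdot(\abs T+2)}{6}+1\right).
\end{equation*}
Since $b$ is fixed (or given in unary, in line with the convention of \cref{sec:alignments:wfnets}), $L$ is polynomial in the input size.

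The nondeterministic machine guesses a sequence $\gamma$ of at most $L$ moves from $\mathit{LM_{\mathrm\Sigma,S}}$ and then runs three checks:
\begin{itemize}
\item each move is legal;
\item $\pi_1(\gamma)\vert_\Sigma=\sigma$, checked by a single left-to-right scan;
\item $\pi_2(\gamma)\vert_T$ fires from $M_\init$ and ends in $M_\final$.
\end{itemize}
For the third check, it simulates the firing rule of \cref{def:marking} step by step. Every intermediate marking is reachable in a $b$-bounded system, so each place holds at most $b$ tokens. Hence each marking fits in $\abs P\cdot O(\log b)$ bits and each firing step costs polynomial time. The machine then accepts if and only if $\seqsum c(\gamma)\le k$. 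This sum has at most $L$ terms, each taken from the input, so it is computable in polynomial time with bit-length polynomial in the input.

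For correctness, soundness is immediate: any accepted guess is a genuine alignment with cost at most $k$. For completeness, suppose some alignment has cost at most $k$. Then an optimal alignment also has cost at most $k$, and \cref{thm:lbf:poly:alignments} provides an optimal alignment of length at most $L$. That alignment is one of the candidate guesses, so some computation path accepts.

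Two hypotheses need care, since they are exactly what \cref{thm:lbf:poly:alignments} relies on.
\begin{itemize}
\item \textbf{Existence of an optimum.} The minimum exists because costs are non-negative and $S$ is easy-sound.
\item \textbf{Reachability of intermediate markings.} The generalized Shortest Sequence Theorem (\cref{thm:shortestst:tsystems:rev}) is applied between intermediate markings, and these are reachable from $M_\init$, as that theorem requires.
\end{itemize}
With both in place, the proof is essentially routine. The substantive step, the polynomial length bound, is already supplied by \cref{thm:lbf:poly:alignments}.
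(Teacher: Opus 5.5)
Your proposal is correct and follows the paper's own argument: nondeterministically guess an alignment of length at most the bound from \cref{thm:lbf:poly:alignments}, check validity by simulating the system and the trace, and compare the cost with $k$. Completeness comes from the existence of a short optimal alignment, and the polynomial size of the guess relies on $b$ being a constant (or given in unary). Your extra remarks, namely that markings stay within bound $b$ during simulation and that the arithmetic on rational costs stays polynomial, only spell out details the paper calls \enquote{easy to verify}.
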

	
	We can apply the same argument for sound free-choice workflow nets.
	In \cref{sec:wfnets} we already discussed that a workflow net (see~\cref{def:wfnet,thm:wfnet:sound:live}) is \emph{not} live, but that we can express soundness via liveness of the short-circuited net. Moreover, recall that soundness implies boundedness.
	Consequently, a sound \emph{free-choice} workflow net inherits the nice structural properties of \lbfc-systems. In particular, the analog of \cref{thm:shortestst:tsystems:rev} holds for sound free-choice workflow nets as well.
	Given their key role in the process mining field, let us state this result explicitly:
	\begin{corollary}
		\label{cor:fcwfnets:np:membership}%
		On the class of sound free-choice workflow nets, $\CPalign$ is in~$\NP$.
	\end{corollary}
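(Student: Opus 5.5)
We reduce the statement to \cref{thm:lbf:poly:alignments} by short-circuiting the workflow net. Let $N=(P,T,F,\ell,i,o)$ be a sound free-choice workflow net, and let $\bar N$ be its short-circuited net, obtained by adding a silent transition $t_\Lreset$ with $\pset t_\Lreset=\Set{o}$ and $t_\Lreset\pset=\Set{i}$. By \cref{thm:wfnet:sound:live}, $(\bar N,\Multiset{i})$ is live and bounded. It is also free-choice: $\pset t_\Lreset=\Set{o}$, and $o$ has no other output transition in $N$ because $o\pset=\emptyset$. Hence $(\bar N,\Multiset{i})$ is an \lbfc-system, and \cref{thm:shortestst:tsystems:rev} applies to it.

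One point needs care: the bound $b$. The \lbfc-result treats $b$ as fixed (part of the input, in unary). For sound workflow nets, boundedness follows from soundness, but the bound itself need not be small. We therefore work, as in the rest of the paper, with sound free-choice workflow nets whose bound $b$ is given; the safe case $b=1$ is the intended one. The reachable markings of $\bar N$ from $\Multiset{i}$ are the same as those of $N$. Indeed, $t_\Lreset$ is enabled only at markings $M\geq\Multiset{o}$, which by proper completion means $M=\Multiset{o}$, and firing it returns to $\Multiset{i}$. So $\bar N$ inherits the same bound $b$.

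Next we rerun the proof of \cref{thm:lbf:poly:alignments} for $N$. Take an optimal alignment $\gamma$ of minimal length and split it at the positions of log and synchronous moves, exactly as before. Each block of consecutive model moves is a firing sequence $\rho$ of $N$ between two markings $M_j,M_j'$ reachable from $\Multiset{i}$, so it is also a firing sequence of $\bar N$. By \cref{thm:shortestst:tsystems:rev} there is a sequence $\rho'$ in $\bar N$ with $\vec{\rho'}\leq\vec{\rho}$, leading from $M_j$ to $M_j'$, whose length is at most $b(\abs T+1)(\abs T+2)(\abs T+3)/6$. Since $\rho$ does not contain $t_\Lreset$, neither does $\rho'$, so $\rho'$ is a firing sequence of $N$ itself. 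This Parikh-subvector property is exactly why the generalized theorem is needed here. Replacing the block by the corresponding model moves for $\rho'$ does not increase the cost, because we only delete and permute moves with nonnegative cost. Minimality of the length of $\gamma$ then gives
\[\abs\gamma\leq(\abs\sigma+1)\bigl(b(\abs T+1)(\abs T+2)(\abs T+3)/6+1\bigr).\]

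Finally, the $\NP$ procedure is the same as for \cref{cor:lbf:np:membership}. Guess an alignment of at most this polynomial length, then verify it by simulating $N$ from $\Multiset{i}$ to $\Multiset{o}$ while consuming $\sigma$, and check that its cost does not exceed $k$. The main obstacle is the transfer step in the previous paragraph: making sure that shortening inside $\bar N$ never introduces $t_\Lreset$ and that the intermediate markings are reachable in $\bar N$. Both are handled by $\vec{\rho'}\leq\vec{\rho}$ and by the equality of the reachable marking sets of $N$ and $\bar N$.
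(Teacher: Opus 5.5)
Your proposal is correct and follows the paper's route. You short-circuit the net, use \cref{thm:wfnet:sound:live} to obtain an \lbfc-system, and rerun the proof of \cref{thm:lbf:poly:alignments} with the generalized Shortest Sequence Theorem; the paper leaves free-choiceness of the short-circuited net, the equal reachability sets, and the Parikh-subvector argument that keeps $t_\Lreset$ out of the shortened blocks implicit, and you spell them out. Your caveat about $b$ is harmless but unnecessary: the short-circuited net is live, bounded and free-choice, hence covered by S-components, each of which holds at most the single token of $\Multiset{i}$, so sound free-choice workflow nets are safe and $b=1$.
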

	
	\section{Membership as a Lower Bound for Alignments}
	\label{sec:membership}
	
	Another closely related problem is the \emph{membership problem}. It determines whether a trace is part of the language of a given Petri net, i.e., whether it occurs as the labeling of a complete firing sequence of a given accepting system.
	\begin{problem}{Membership ($\CPmember$)}
		\emph{Input:} An alphabet $\Sigma$, an accepting system $S=(N,M_\init,M_\final)$ with $N=(P,T,F,\ell)$ and labeling function $\ell\colon T\to\Sigmatau$, and a trace $\sigma\in\Sigma^*$ over $\Sigma$.
		
		\emph{Question:} Is $\sigma\in\lang{S}$?
	\end{problem}
	
	We next observe that $\CPmember$ is a special case of $\CPalign$ where we look for a \emph{perfect} alignment, i.e., an alignment with costs $0$ with respect to the standard cost function.
	\begin{lemma}
		\label{lem:member_to_align}%
		For easy-sound accepting systems, $\CPmember$ is polynomial-time reducible to $\CPalign$.
	\end{lemma}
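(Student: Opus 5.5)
The reduction keeps the instance essentially unchanged. Given an input $(\Sigma, S, \sigma)$ of $\CPmember$ with $S=(N,M_\init,M_\final)$ easy-sound, we output the $\CPalign$ instance consisting of the same alphabet $\Sigma$, the same system $S$, the same trace $\sigma$, the standard cost function $c$, and threshold $k\coloneq 0$. Because $S$ is easy-sound, this is a legal input of $\CPalign$ ($\scfs{S}\neq\emptyset$). No gadget such as $t_\Lskip$ is needed, so the reduction does not alter the model class. The standard cost function is described by a table over the polynomially many legal moves $\mathit{LM_{\Sigma,S}}$, so the construction clearly runs in polynomial time.

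For correctness I will show that $\sigma\in\lang{S}$ if and only if some $\gamma\in\Gamma_{\sigma,S}$ has $\seqsum c(\gamma)\le 0$.

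For the forward direction, take $\rho\in\scfs{S}$ with $\ell(\rho)\vert_\Sigma=\sigma$. Walk through $\rho$ and build $\gamma$ move by move. A transition $t$ with $\ell(t)=\tau$ becomes the silent move $(\nomove,t)$. A labeled transition $t$ becomes the synchronous move $(\ell(t),t)$. Since the labeled transitions of $\rho$ spell exactly $\sigma$, we get $\pi_1(\gamma)\vert_\Sigma=\sigma$ and $\pi_2(\gamma)\vert_T=\rho\in\scfs{S}$. Every move used has cost $0$, so $\seqsum c(\gamma)=0$.

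For the backward direction, let $\gamma$ have cost $0$. Costs are nonnegative, so every move in $\gamma$ costs $0$. By the definition of the standard cost function, $\gamma$ therefore contains no log moves and no labeled model moves. Hence every move is either synchronous $(a,t)$ with $a=\ell(t)$ or silent $(\nomove,t)$ with $\ell(t)=\tau$. Let $\rho\coloneq\pi_2(\gamma)\vert_T$, which lies in $\scfs{S}$. Then $\ell(\rho)\vert_\Sigma$ consists exactly of the labels of the synchronous moves, which equal $\pi_1(\gamma)\vert_\Sigma=\sigma$. Thus $\sigma\in\lang{S}$.

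The only point needing care is the backward direction, specifically ruling out hidden nonzero contributions. This rests entirely on the nonnegativity of costs and on the fact that the standard cost function is $0$ precisely on synchronous and silent moves, both of which the argument uses directly.
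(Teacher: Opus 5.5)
Your proposal is correct and uses the same reduction as the paper. You keep $\Sigma$, $S$ and $\sigma$ unchanged, take the standard cost function with threshold $k=0$, and observe that a cost-$0$ alignment consists exactly of synchronous and silent model moves, whose second components form a complete firing sequence labeled $\sigma$; you just write out both directions (and the polynomial size of the cost table) more explicitly than the paper does.
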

	\begin{proof}
		Let $\Sigma$, $S=(N,M_\init,M_\final)$, $N=(P,T,F,\ell)$, $\ell\colon T\to\Sigmatau$, and $\sigma\in\Sigma^*$ be an input of the $\CPmember$ problem.
		For the reduction to $\CPalign$ we make use of the standard cost function and set the threshold to $k = 0$.
		In effect, we are looking for a \emph{perfect} alignment between the trace $\sigma$ and the system $S$ (i.e., with costs~0 with respect to the standard cost function).
		Note that a perfect alignment can only consist of synchronous moves and silent model moves. Thus, such an alignment exists if and only if the trace $\sigma$ can be obtained as the labeling of a firing sequence of the system $S$ from $M_\init$ to $M_\final$.
		This is precisely the decision problem $\CPmember$.
	\end{proof}
	Similarly as in the proof of~\cref{lem:mincostreach_to_align}, we have one technical problem: the system $S$ needs to be easy-sound.
	Again, we could readily solve this with a gadget adequate for the Petri net class at hand. For safe systems, for example, adding a transition $t_\Lskip$ with $\pset t_\Lskip=\supp{M_\init}$ and $t_\Lskip\pset=\supp{M_\final}$ with a new label not present in $\sigma$ would suffice. If this transition is taken in some alignment, we neither have a synchronous nor a silent move, and thus the alignment costs are at least $1$.
	In the following, however, we will apply this reduction to sound systems only, so that we can omit the gadget here completely and instead limit the input to systems that are already easy-sound.
	
	It remains a question for future research if a reduction in the other direction also holds, i.e., if $\CPalign$ is polynomial-time reducible to $\CPmember$.
	
	\section{Alignments on Process Trees}
	\label{sec:process_trees}
	
	Process trees are well-known models for \emph{block-structured} business processes. Intuitively, process trees constrain the control flow in a tree-like fashion, thereby excluding (long-term) dependencies between different branches of the model. More formally, process trees are a syntactic fragment of workflow nets which guarantees soundness and decomposability.
	The restricted expressive power leads to much better algorithmic behavior and prevents certain learning algorithms from overfitting (of course, by accepting a representional bias of the model). Indeed, one of the most successful process discovery algorithms, the \emph{Inductive Miner}~\autocite{Leemans2017,Leemans2022,LeemansFA2014}, uses process trees as its core. This is why process trees have have drawn considerable attention in the area of process mining.
	
	Despite being generally accepted as \emph{simple} process models, we show that the alignment problem for process trees still remains hard, specifically $\NP$-complete.
	However, at the same time, our result witnesses that process trees do enjoy better algorithmic properties than general sound workflow nets for which the alignment problem is $\PSPACE$-complete.
	
	We proceed as follows. First, we derive $\NP$-membership from the fact that process trees translate into equivalent sound free-choice workflow nets of polynomial size.
	To establish $\NP$-hardness, we uncover a tight connection between process trees and \emph{shuffle languages}, a class of regular expressions that has been studied extensively in the area of formal language theory. Of course, this new connection may prove quite useful in other contexts as well.
	We start with a formal definition of shuffle languages and process trees.
	\begin{definition}[Shuffle $\shuffle$]
		For $x,y \in \Sigma^*$, the \emph{shuffle} $x \shuffle y$ of $x$ and $y$ is defined by
		\begin{equation*}
			x \shuffle y \coloneq \Set*{v_1 w_1 \dots v_k w_k \given x = v_1 \dots v_k, y = w_1 \dots w_k, v_i, w_i \in \Sigma^*, 1\leq i\leq k}.
		\end{equation*}
		Let $\mathcal L_1, \mathcal L_2 \subseteq \Sigma^*$ be two languages. The shuffle of $\mathcal L_1$ and $\mathcal L_2$ is defined by
		\begin{equation*}
			\mathcal L_1\shuffle\mathcal L_2 \coloneq \bigcup\Set{w_1\shuffle w_2\given w_1\in\mathcal L_1, w_2\in\mathcal L_2}.
		\end{equation*}
		For multiple words $w_1,\dots,w_n\in\Sigma^*$ or languages $\mathcal L_1,\dots,\mathcal L_n\subseteq\Sigma^*$, we also define
		\begin{equation*}
			\bigshuffle_{i=1}^n w_i\coloneq w_1\shuffle\dots\shuffle w_n\qquad\text{and}\qquad\bigshuffle_{i=1}^n\mathcal L_i\coloneq\mathcal L_1\shuffle\dots\shuffle\mathcal L_n.
		\end{equation*}
	\end{definition}
	
	The shuffle operator captures (independent) parallel executions. During our research on process trees we found that they have already been studied before in the area of formal language theory, see, e.g.,~\autocite{MayerS1994}. Here, they are known as \emph{shuffle languages} or \emph{regular expressions with interleaving}. Indeed, process trees are nothing more than usual regular expressions extended by the shuffle operator~$\shuffle$.
	\begin{definition}[Process Tree]
		\label{def:process_tree}%
		Let $\Sigma$ be an alphabet and let $\tau\notin\Sigma$ be the silent activity. Both, the set of \emph{process trees} (over $\Sigma$) and the \emph{language} of a process tree $T$, denoted by $\lang{T}$, are defined recursively:
		\begin{itemize}
			\item the silent activity $\tau$ is a process tree where $\lang{\tau}=\Set{\seq{}}$,
			\item each activity $a\in\Sigma$ is a process tree where $\lang{a}=\Set{\seq{a}}$,
			\item for process trees $T_1$, \dots, $T_n$, $n\in\N\setminus\Set{0}$, the following are also process trees:
			\begin{itemize}
				\item $\ptseq(T_1,\dots,T_n)$ where $\lang{\ptseq(T_1,\dots,T_n)}=\lang{T_1}\cdot\ldots\cdot\lang{T_n}$,
				\item $\ptxor(T_1,\dots,T_n)$ where $\lang{\ptxor(T_1,\dots,T_n)}=\lang{T_1}\cup\ldots\cup\lang{T_n}$,
				\item $\ptpar(T_1,\dots,T_n)$ where $\lang{\ptpar(T_1,\dots,T_n)}=\lang{T_1}\shuffle\ldots\shuffle\lang{T_n}$, and
				\item $\ptloop(T_1,T_2)$ where $\lang{\ptloop(T_1,T_2)}=\lang{T_1}\cdot(\lang{T_2}\cdot\lang{T_1})^*$.
			\end{itemize}
		\end{itemize}
		The symbols $\ptseq$ (sequence), $\ptxor$ (exclusive choice), $\ptpar$ (parallel), and $\ptloop$ (loop) are \emph{process tree operators}.
		A \emph{process tree with unique labels} is a process tree where each activity occurs at most once.
	\end{definition}
	
	It is not hard to see that all shuffle languages are regular (just translate the shuffle operator by using a standard product construction). Hence, also the behavior defined by a process tree is a regular language.
	This also means that each process tree can be rewritten as an equivalent process tree (regular expression) that does not make use of the parallel operator ($\ptpar$). However, in general, this transformation inevitably leads to a process tree that is exponentially larger than the original one~\autocite{MayerS1994}. Interestingly, this phenomenon is well-known from other models with concurrency as, for example, from safe Petri nets.
	
	Our goal is to classify the algorithmic complexity of the alignment problem for process trees. To this end we make use of the intimate connection between process trees and shuffle languages. In fact, the computational complexity of important decision problems for shuffle languages is well-known. Let us formulate the following problems and results for later reference.
	\begin{problem}{Shuffle of Words ($\CPwordshuffle$)}
		\emph{Input:} An alphabet $\Sigma$, and words $v\in\Sigma^*$ and $w_1,\dots,w_n\in\Sigma^*$ over $\Sigma$.
		
		\emph{Question:} Is $v\in\bigshuffle_{i=1}^n w_i$?
	\end{problem}
	\begin{problem}{Membership for Shuffle Languages ($\CPmembershuffle$)}
		\emph{Input:} An alphabet $\Sigma$, a regular expression $R$ over $\Sigma$ using the operators $\cdot$ (concatenation), $\shuffle$ (shuffle), $^*$ (Kleene star), $|$ (union), $\cap$ (intersection), and a word $w\in\Sigma^*$ over $\Sigma$.
		
		\emph{Question:} Is $w \in \lang R$?
	\end{problem}
	$\CPwordshuffle$ and $\CPmembershuffle$ are known to be $\NP$-complete~\autocite{Mansfield1983,WarmuthH1984}.
	As an immediate consequence we get:
	\begin{proposition}
		\label{prop:pt_member}%
		On the class of process trees, $\CPmember$ is $\NP$-complete, even if restricted to the sequence ($\ptseq$) and parallel ($\ptpar$) operator.
	\end{proposition}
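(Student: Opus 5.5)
For hardness, I will reduce $\CPwordshuffle$, which is $\NP$-complete~\autocite{Mansfield1983,WarmuthH1984}, to $\CPmember$ on process trees that use only $\ptseq$ and $\ptpar$. Given an instance $v, w_1,\dots,w_n$, I first dispose of degenerate cases: empty words $w_i$ can be dropped, and if all $w_i$ are empty, $v\in\bigshuffle w_i$ iff $v=\seq{}$, which can be answered directly by outputting a fixed yes- or no-instance. For each remaining $w_i=\seq{a_{i,1},\dots,a_{i,m_i}}$ I build the sequence tree $T_i\coloneq\ptseq(a_{i,1},\dots,a_{i,m_i})$, or simply $T_i\coloneq a_{i,1}$ if $m_i=1$. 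Then I set $T\coloneq\ptpar(T_1,\dots,T_n)$ and take $v$ as the trace. By \cref{def:process_tree}, $\lang{T_i}=\Set{w_i}$ and $\lang{T}=\lang{T_1}\shuffle\dots\shuffle\lang{T_n}=\bigshuffle_{i=1}^n w_i$, so $v\in\lang T$ iff $v\in\bigshuffle_i w_i$. The construction is clearly linear in size, which gives $\NP$-hardness even for the restricted operator set.

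For membership in $\NP$, I will translate the process tree into an equivalent sound free-choice workflow net of polynomial size. This is the standard recursive block construction: sequence glues nets, xor shares source and sink places, parallel uses silent split and join transitions, and loop uses silent redo/exit transitions, with $\tau$-leaves becoming silent transitions. I then check two things by induction: that each block is sound and free-choice, and that its language equals $\lang{\cdot}$ of the subtree. Given this, \cref{lem:member_to_align} reduces $\CPmember$ to $\CPalign$ with threshold $0$ on this net. The net is easy-sound because it is sound, so no gadget is needed, and \cref{cor:fcwfnets:np:membership} then places the problem in $\NP$.

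Alternatively, one could avoid the net translation entirely. Process trees are shuffle expressions without intersection, so $\CPmember$ on them is a special case of $\CPmembershuffle$, which is in $\NP$~\autocite{Mansfield1983,WarmuthH1984}. Since $\lang{\tau}=\Set{\seq{}}$, I must make sure $\tau$ can be expressed there; writing it as $\emptyset^*$, or eliminating it syntactically, does this.

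The main obstacle is the upper bound if one goes via nets. The translation must be argued carefully to preserve free-choiceness and soundness in the loop and xor cases, since silent transitions are needed to keep clusters clean. I would therefore prefer the direct appeal to the $\NP$ upper bound for $\CPmembershuffle$, handling $\tau$ by the remark above. The hardness direction is routine.
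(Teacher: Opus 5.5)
Your proposal follows the paper's own argument: $\NP$-hardness by encoding a $\CPwordshuffle$ instance as a $\ptpar$ of $\ptseq$-chains, and the upper bound by treating process trees as shuffle expressions and invoking $\CPmembershuffle\in\NP$; your handling of empty words and of $\tau$ only fills in details the paper leaves implicit. One caveat applies equally to the paper's proof. The obvious translation $\lang{\ptloop(T_1,T_2)}=\lang{T_1}\cdot(\lang{T_2}\cdot\lang{T_1})^*$ duplicates $T_1$, so nested loops such as $\ptloop(\ptseq(\ptloop(\cdots),d),e)$ blow up exponentially, and the \enquote{special case} step needs an extra argument. You could guess a derivation directly on the tree: each tree node has at most $\abs{v}$ instances that consume a non-empty part of $v$, and empty instances are settled by a polynomial nullability test. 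Alternatively, use your net route. That route is polynomial by \cref{lem:pt_to_wfnet}, so the step you flagged as the main obstacle is actually the safer one, and it is how the paper obtains $\NP$-membership in \cref{thm:pt_npcomplete}.
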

	\begin{proof}
		Since $\CPwordshuffle$ is $\NP$-complete, the membership problem $\CPmember$ for process trees (i.e., regular expressions with shuffle operator) is $\NP$-hard even for process trees built only from concatenation and shuffle, i.e., the process tree operators sequence ($\ptseq$) and parallel ($\ptpar$).
		Moreover, as $\CPmembershuffle$ is $\NP$-complete, the complexity of $\CPmember$ does not increase if we throw in the remaining operators exclusive choice ($\ptxor$) and loop ($\ptloop$).
	\end{proof}
	
	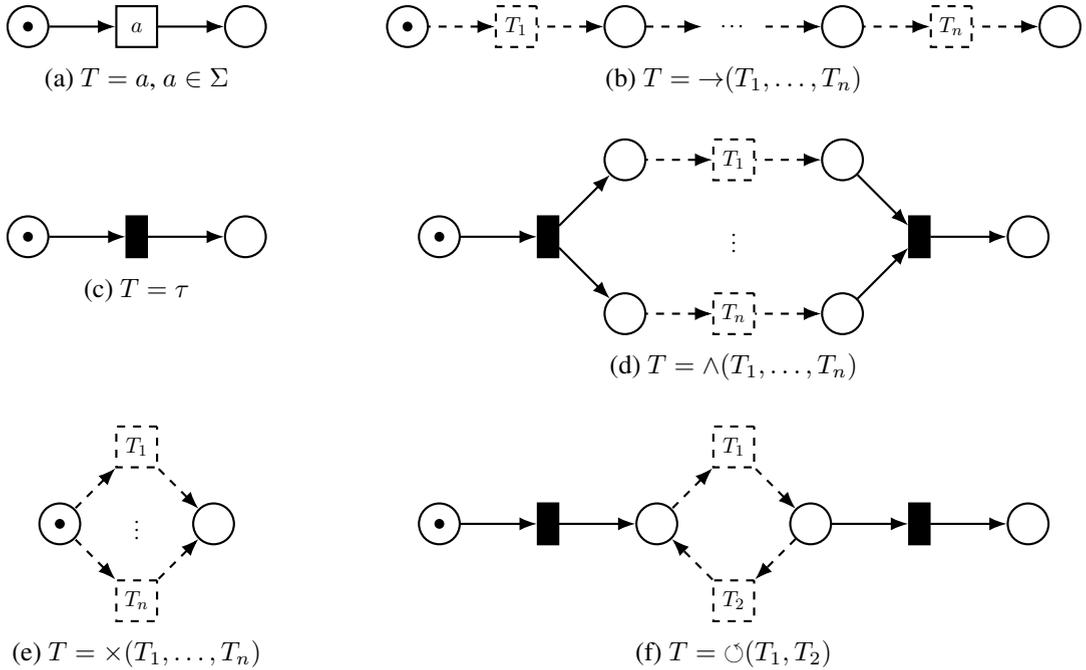
\begin{figure}
		\newdimen\intervspace
		\intervspace=2ex
		\tikzset{every picture/.style={scale=.75, transform shape}}
		\begin{subfigure}{.33\linewidth}
			\centering
			\begin{tikzpicture}[on grid]
				\node[place,          tokens=1]                     (pi)    {};
				\node[transition,     position=  0 degrees from pi] (t1) {$a$} edge[pre] (pi);
				\node[place,          position=  0 degrees from t1] (pf)    {} edge[pre] (t1);
			\end{tikzpicture}
			\caption{$T=a$, $a\in\Sigma$}
		\end{subfigure}
		\begin{subfigure}{.67\linewidth}
			\centering
			\begin{tikzpicture}[on grid]
				\node[place,          tokens=1]                             (pi)                       {};
				\node[transition,     position=  0 degrees from pi, dashed] (t1)                  {$T_1$} edge[pre, dashed]  (pi);
				\node[place,          position=  0 degrees from t1]         (p1)                       {} edge[pre, dashed]  (t1);
				\node[                position=  0 degrees from p1]         (t2)  {\enspace\dots\enspace} edge[pre, dashed]  (p1);
				\node[place,          position=  0 degrees from t2]         (p2)                       {} edge[pre, dashed]  (t2);
				\node[transition,     position=  0 degrees from p2, dashed] (t3)                  {$T_n$} edge[pre, dashed]  (p2);
				\node[place,          position=  0 degrees from t3]         (pf)                       {} edge[pre, dashed]  (t3);
			\end{tikzpicture}
			\caption{$T=\ptseq(T_1,\dots,T_n)$}
		\end{subfigure}
		
		\vspace{\intervspace}
		
		\begin{minipage}[c]{.33\linewidth}
			\begin{subfigure}{\linewidth}
				\centering
				\begin{tikzpicture}[on grid]
					\node[place,          tokens=1]                     (pi)    {};
					\node[tau-transition, position=  0 degrees from pi] (t1)    {} edge[pre] (pi);
					\node[place,          position=  0 degrees from t1] (pf)    {} edge[pre] (t1);
				\end{tikzpicture}
				\caption{$T=\tau$}
			\end{subfigure}
		\end{minipage}
		\begin{minipage}[c]{.67\linewidth}
			\begin{subfigure}{\linewidth}
				\centering
				\begin{tikzpicture}[on grid]
					\node[place,          tokens=1]                             (pi)                       {};
					\node[tau-transition, position=  0 degrees from pi]         (t1)                       {} edge[pre]          (pi);
					\node[place,          position= 45 degrees from t1]         (p1)                       {} edge[pre]          (t1);
					\node[place,          position=-45 degrees from t1]         (p2)                       {} edge[pre]          (t1);
					\node[transition,     position=  0 degrees from p1, dashed] (t2)                  {$T_1$} edge[pre, dashed]  (p1);
					\node[transition,     position=  0 degrees from p2, dashed] (t4)                  {$T_n$} edge[pre, dashed]  (p2);
					\node[                between=t2 and t4]                    (t3) {\enspace\vdots\enspace};
					\node[place,          position=  0 degrees from t2]         (p3)                       {} edge[pre, dashed]  (t2);
					\node[place,          position=  0 degrees from t4]         (p4)                       {} edge[pre, dashed]  (t4);
					\node[tau-transition, position= 45 degrees from p4]         (t5)                       {} edge[pre]          (p3) edge[pre]          (p4);
					\node[place,          position=  0 degrees from t5]         (pf)                       {} edge[pre]          (t5);
				\end{tikzpicture}
				\caption{$T=\ptpar(T_1,\dots,T_n)$}
			\end{subfigure}
		\end{minipage}
		
		\vspace{\intervspace}
		
		\begin{subfigure}{.33\linewidth}
			\centering
			\begin{tikzpicture}[on grid]
				\node[place,          tokens=1]                             (pi)                       {};
				\node[transition,     position= 45 degrees from pi, dashed] (t1)                  {$T_1$} edge[pre, dashed]  (pi);
				\node[transition,     position=-45 degrees from pi, dashed] (t3)                  {$T_n$} edge[pre, dashed]  (pi);
				\node[                between=t1 and t3]                    (t2) {\enspace\vdots\enspace};
				\node[place,          position= 45 degrees from t3]         (pf)                       {} edge[pre, dashed]  (t1) edge[pre, dashed]  (t3);
			\end{tikzpicture}
			\caption{$T=\ptxor(T_1,\dots,T_n)$}
		\end{subfigure}
		\begin{subfigure}{.67\linewidth}
			\centering
			\begin{tikzpicture}[on grid]
				\node[place,          tokens=1]                             (pi)                       {};
				\node[tau-transition, position=  0 degrees from pi]         (t1)                       {} edge[pre]          (pi);
				\node[place,          position=  0 degrees from t1]         (p1)                       {} edge[pre]          (t1);
				\node[transition,     position= 45 degrees from p1, dashed] (t2)                  {$T_1$} edge[pre, dashed]  (p1);
				\node[transition,     position=-45 degrees from p1, dashed] (t3)                  {$T_2$} edge[post, dashed] (p1);
				\node[place,          position= 45 degrees from t3]         (p2)                       {} edge[pre, dashed]  (t2) edge[post, dashed] (t3);
				\node[tau-transition, position= 45 degrees from p4]         (t4)                       {} edge[pre]          (p2);
				\node[place,          position=  0 degrees from t4]         (pf)                       {} edge[pre]          (t4);
			\end{tikzpicture}
			\caption{$T=\ptloop(T_1,T_2)$}
		\end{subfigure}
		\caption{Workflow net representations of different process trees $T$.}
		\label{fig:pt_transformation}
	\end{figure}
	From this result, it is not hard to infer that $\CPalign$ is $\NP$-complete on process trees as well.
	There is a small caveat: we defined alignments between traces and \emph{Petri nets}, while we specified the semantics of process trees without reference to a Petri net representation.
	This is a minor problem, however, since process trees can be translated into equivalent safe and sound workflow nets (and we can use the standard cost function for the alignment computation). An explicit construction can be found in~\autocite{Leemans2017,Leemans2022}, which is also illustrated in \cref{fig:pt_transformation}.
	\begin{lemma}
		\label{lem:pt_to_wfnet}%
		Using the construction from \cref{fig:pt_transformation}, a process tree can be transformed into an equivalent safe and sound free-choice workflow net in polynomial time.
	\end{lemma}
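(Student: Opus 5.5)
We prove the statement by structural induction on the process tree $T$. For each subtree we establish a combined invariant: the construction of \cref{fig:pt_transformation} yields a workflow net $N_T$ with source $i_T$ and sink $o_T$ such that (i) $\lang{N_T}=\lang{T}$ with respect to the markings $\Multiset{i_T}$ and $\Multiset{o_T}$, (ii) $N_T$ is free-choice, (iii) $N_T$ is safe and sound, and (iv) $\abs{P_T}+\abs{T_T}$ is linear in the size of $T$. Two auxiliary facts will be carried along because the composition steps need them. First, $i_T$ has no input arcs and $o_T$ has no output arcs (in the loop case we make this hold by using the silent entry and exit transitions shown in the figure). Second, every transition of $N_T$ consuming from $i_T$ has $\Set{i_T}$ as its entire pre-set.

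\emph{Size and running time.} Each operator node contributes a constant number of places and silent transitions plus arcs linking to its children. Hence the net has size $O(\abs{T})$ and can be produced by a single recursive traversal in polynomial time.

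\emph{Language and soundness.} For the leaves $a$ and $\tau$ all claims are immediate. For the inductive steps:
\begin{itemize}
\item For $\ptseq$ we glue $o_{T_j}$ with $i_{T_{j+1}}$. Since the sink of $N_{T_j}$ has no outgoing arcs inside $N_{T_j}$, a token in the glued place can only be consumed by $N_{T_{j+1}}$. Soundness of $N_{T_j}$ (option to complete and proper completion) then implies that runs decompose into consecutive complete runs, which gives concatenation of the languages.
\item For $\ptxor$ we identify all sources and all sinks. The first transition fired commits the run to one branch, because each sub-net needs its own source token and the pre-sets of the start transitions are exactly $\Set{i}$.
\item For $\ptpar$ the silent split places one token into each sub-net's source. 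The components are disjoint, so reachable markings are sums of reachable markings of the components. The silent join fires only when all sinks are marked, and by proper completion each component is then empty. This gives the shuffle of the languages.
\item For $\ptloop$, with place $p_1$ as source of $N_{T_1}$ and sink of $N_{T_2}$, and $p_2$ as sink of $N_{T_1}$ and source of $N_{T_2}$, runs alternate between complete runs of $N_{T_1}$ and $N_{T_2}$ until the exit transition fires. This yields $\lang{T_1}(\lang{T_2}\lang{T_1})^*$.
\end{itemize}
Safeness follows in each case because at every reachable marking each component is in a reachable marking of its own safe net, and the glue places carry at most one token. Quasi-liveness and the option to complete are inherited from the components, since the enclosing control places can always be driven as required.

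\emph{Free-choice.} Every newly introduced or glued place is a boundary place. By the auxiliary fact, any transition consuming from such a place has pre-set exactly that singleton. For instance, in the $\ptxor$ case all start transitions of all branches have pre-set $\{i\}$, and the loop place $p_1$ is consumed only by start transitions of $N_{T_1}$. Therefore any two transitions sharing an input place share all input places. The $\ptpar$ join has pre-set $\{o_{T_1},\dots,o_{T_n}\}$, but these sink places have no other consumers, so free-choiceness is preserved there as well.

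\emph{Main obstacle.} The delicate part is the loop case. There we must verify that gluing does not break soundness or free-choiceness, in particular that tokens from a "redo" iteration cannot coexist with unfinished tokens. This is where proper completion of the sub-nets and the invariant "sink has no consumers / source has no producers inside the sub-net" are essential, and where the extra silent entry and exit transitions of the construction are needed. We would handle this case carefully and treat the others briefly. We could also cite \autocite{Leemans2017,Leemans2022} for the soundness of the construction.
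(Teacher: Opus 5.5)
Your proposal is correct, but the paper proves nothing here. It points to the translation of \autocite{Leemans2017,Leemans2022} depicted in the figure and takes safeness, soundness, free-choiceness and language equivalence from that literature. You instead give a self-contained structural induction.

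Your auxiliary invariants are the right ones: the source has no input arcs, the sink has no output arcs, and every transition consuming from the source has the source as its entire pre-set. They make each gluing step harmless for free-choiceness. Together with proper completion of the components, they give the decomposition of runs from which safeness, soundness and the language equations follow. The citation route costs nothing but leaves the reader to check that the figure matches the cited construction.

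Your route also explains why the figure looks as it does. In particular, the silent entry and exit transitions of the loop are essential. Without them, a loop used as a branch of a $\ptxor$ could send a redo token back into the shared source and start a sibling branch. A finished sibling could also restart $N_{T_2}$ from the shared sink. Either way language equivalence breaks.

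For the loop case you flag as delicate, the cleanest way to close it is an explicit marking invariant. Every reachable marking of $N_{\ptloop(T_1,T_2)}$ is one of:
\begin{itemize}
\item $\Multiset{i}$ or $\Multiset{o}$;
\item a marking reachable in $N_{T_1}$ from $\Multiset{p_1}$;
\item a marking reachable in $N_{T_2}$ from $\Multiset{p_2}$.
\end{itemize}
Closure under firing follows directly from proper completion and your two auxiliary invariants. From it, safeness, soundness and the language $\lang{T_1}\cdot(\lang{T_2}\cdot\lang{T_1})^*$ all follow at once.

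Finally, add the (trivial) path condition of the workflow-net definition to your induction hypothesis, since being a workflow net is part of the claim.
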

	
	This allows us to now formulate the central result of this section:
	\begin{theorem}
		\label{thm:pt_npcomplete}%
		On the class of process trees, $\CPalign$ is $\NP$-complete.
	\end{theorem}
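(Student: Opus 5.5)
The proof has two halves, both assembled from results already established. For membership in $\NP$, I start from a process tree $T$, an input trace $\sigma$, a cost function $c$ and a threshold $k$. I apply \cref{lem:pt_to_wfnet} to turn $T$ in polynomial time into an equivalent safe and sound free-choice workflow net $N_T$ with $\lang{N_T}=\lang{T}$. Since alignments on process trees are defined via this Petri net representation, the instance of $\CPalign$ on $T$ becomes an instance on $N_T$ of polynomial size. \Cref{cor:fcwfnets:np:membership} says $\CPalign$ is in $\NP$ on sound free-choice workflow nets, so it is in $\NP$ on process trees as well.

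Let me spell out why that corollary applies. By \cref{thm:wfnet:sound:live}, the short-circuited net of $N_T$ is live and bounded, and short-circuiting preserves free-choiceness, so it is an \lbfc-system with $b=1$. The generalized Shortest Sequence Theorem (\cref{thm:shortestst:tsystems:rev}) then applies to the short-circuited net. It lets me shorten every maximal block of consecutive model moves, just as in \cref{thm:lbf:poly:alignments}. One detail needs care: the short-circuit transition $t_\Lreset$ must not appear in the shortened block. It cannot, because every block lies inside a run from $\Multiset{i}$ to $\Multiset{o}$ of the original net, and a shortened sequence only uses a sub-multiset of the original transitions. This yields an optimal alignment of polynomial length, which an $\NP$ algorithm can guess and verify.

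For $\NP$-hardness, I reduce $\CPmember$ on process trees, which is $\NP$-hard by \cref{prop:pt_member} even with only $\ptseq$ and $\ptpar$, to $\CPalign$ via \cref{lem:member_to_align}. That lemma requires an easy-sound system, and here this holds automatically: $N_T$ is sound, so its final marking is reachable. The reduction maps $(T,\sigma)$ to $(N_T,\sigma)$ with the standard cost function and threshold $k=0$. A perfect alignment exists exactly when $\sigma\in\lang{T}$, and the map is computable in polynomial time by \cref{lem:pt_to_wfnet}.

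The only step that needs real care is the $\NP$ upper bound, namely transferring the \lbfc{} machinery to the non-live workflow net through short-circuiting. The hardness direction is routine once \cref{prop:pt_member} and \cref{lem:pt_to_wfnet} are in hand.
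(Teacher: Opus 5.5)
Your proposal is correct and follows the paper's proof. $\NP$-membership comes from the polynomial translation of \cref{lem:pt_to_wfnet} into safe and sound free-choice workflow nets together with \cref{cor:fcwfnets:np:membership}, and $\NP$-hardness comes from applying \cref{lem:member_to_align} (easy soundness being automatic) to \cref{prop:pt_member}. Your extra argument is also right: because the generalized Shortest Sequence Theorem yields $\vec{\sigma'}\leq\vec{\sigma}$, the short-circuit transition never enters a shortened model-move block, which fills in a detail the paper leaves implicit in \cref{cor:fcwfnets:np:membership}.
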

	\begin{proof}
		Membership in $\NP$ follows from the fact that process trees efficiently translate into safe and sound free-choice workflow nets (\cref{lem:pt_to_wfnet}).
		For the same reason, we can directly apply the reduction in \cref{lem:member_to_align}, so that $\CPmember$ is a special case of $\CPalign$. Then, $\NP$-hardness follows from \cref{prop:pt_member}.
	\end{proof}
	
	In our previous work~\autocite{SchwanenPA2025}, we exploited the $\NP$-membership of $\CPalign$ and provided the first mixed integer linear program encoding of alignments on process trees. Our experiments show that this approach can lead to significant speed-ups compared to the standard $A^*$-based algorithm.

	Another important subclass of process trees arises in the scope of the Inductive Miner~\autocite{Leemans2017,Leemans2022,LeemansFA2014}. This widely-used mining algorithm yields \emph{process trees with unique labels} which means that each activity label occurs at most once in the tree.
	We recently showed that $\CPalign$ can be solved in polynomial time on process trees with unique labels \autocite{SchwanenPA2025a} using a dynamic programming approach.
	\begin{theorem}[{{\normalfont\autocite[Theorem~4]{SchwanenPA2025a}}}]
		\label{thm:pt:unique}%
		On the class of process trees with unique labels, $\CPalign$ is in $\P$.
	\end{theorem}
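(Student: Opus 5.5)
The plan is a dynamic program over the tree structure of $T$ combined with the positions of $\sigma$. With unique labels, every letter $a$ of the trace can only synchronize with the single leaf labelled $a$, and every trace letter whose label does not occur in the subtree at hand must be a log move. For a node $v$ of $T$ and indices $0\leq i\leq j\leq\abs\sigma$, let $\mathrm{cost}(v,i,j)$ be the minimal cost of an alignment between $\lang{T_v}$ and the infix $\sigma_{i+1}\cdots\sigma_j$, where $T_v$ is the subtree rooted at $v$. These values are computed bottom-up, and the answer is $\mathrm{cost}(\mathit{root},0,\abs\sigma)\leq k$. The table has $O(\abs T\cdot\abs\sigma^2)$ entries, so the aim is to fill each entry in polynomial time.

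Leaves, $\ptxor$ and $\ptseq$ are routine. A leaf $a$ is either matched with one occurrence of $a$ in the infix, with all other letters deleted, or inserted as a model move. A silent leaf deletes all letters. For $\ptxor$ we take the minimum over the children. For $\ptseq$ we take a minimum over split points, chaining binary splits left to right in $O(\abs\sigma)$ per child.

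The $\ptpar$ operator is where unique labels are essential. The children $T_1,\dots,T_n$ have pairwise disjoint alphabets $\Sigma_1,\dots,\Sigma_n$. Given an infix, any interleaving alignment can be split into per-child alignments: child $m$ is aligned against the projection of the infix onto $\Sigma_m$, and letters outside every $\Sigma_m$ are log moves. Conversely, per-child alignments can be re-interleaved because the projections determine the merge. So the cost is the sum of the children's costs on the projected subsequences plus the log moves for foreign letters. The catch is that a projection is not a contiguous infix. I would fix this by indexing the table by (node, start, end) in the full $\sigma$ and letting each entry implicitly ignore letters outside the node's alphabet, charging them as log moves only once, at the level where they leave every child's alphabet. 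Each entry $\mathrm{cost}(v,i,j)$ then means the optimal alignment of $T_v$ with $(\sigma_{i+1}\cdots\sigma_j)\vert_{\Sigma_v}$. With this convention, $\ptpar$ becomes a plain sum of child values on the same $(i,j)$ plus nothing extra. The other operators still work: for $\ptseq$ on a restricted subsequence, the split points of the projection correspond to split points in $\sigma$.

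For $\ptloop(T_1,T_2)$, the value is a shortest path over split points $i=s_0\le s_1\le\dots\le s_r=j$ with alternating costs from $T_1$ and $T_2$. Every cut position is a vertex, so iterations beyond $O(\abs\sigma)$ can be ignored except those consuming empty infixes. Those repeat a nonnegative cost and can be dropped. The result is a Bellman--Ford/Dijkstra computation with polynomially many states: vertices are pairs (position, which child is next).

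I expect the main obstacle to be making the projection convention rigorous for $\ptpar$ nested inside $\ptseq$ or $\ptloop$. The concern is that letters of a sibling's alphabet sitting between split points are correctly attributed: they must be log moves of the enclosing node, not of a child. I would handle this by proving, by induction on $T$, the decomposition lemma that any optimal alignment restricted to the moves involving $\Sigma_v$ is an optimal alignment of $T_v$ on the projection. This relies on the fact that moves of disjoint alphabets commute freely in an alignment.
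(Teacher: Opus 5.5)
Your overall plan is sound and is the kind of dynamic program the paper relies on. The paper gives no argument of its own here; it only cites a dynamic-programming algorithm from earlier work. Your design has a table over (node, $i$, $j$) holding the optimal cost of aligning $T_v$ with $(\sigma_{i+1}\cdots\sigma_j)\vert_{\Sigma_v}$. It evaluates $\ptpar$ as a plain sum over the children, which is justified by their disjoint alphabets, handles $\ptseq$ by split points, and handles $\ptloop$ by a shortest-path computation with nonnegative weights. However, two steps fail as you state them.

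\emph{Charging foreign letters.} Once an entry ignores letters outside $\Sigma_v$, it is not enough to charge a letter only ``where it leaves every child's alphabet''. In particular, $\ptxor$ is no longer a plain minimum over the children. For $\ptxor(a,b)$ on the infix $ab$, both children have value $0$ on their projections, so you would output $0$ instead of $1$. For $\ptseq(a,b)$ on $ba$, your splits give $1$ instead of $2$. At every $\ptxor$, $\ptseq$ and $\ptloop$ node you must add the log-move costs of the letters of $\Sigma_v\setminus\Sigma_m$ that fall into the segment handed to child $m$. For $\ptxor$ this gives $\min_m[\mathrm{cost}(T_m,i,j)+L_{\Sigma_v\setminus\Sigma_m}(i,j)]$, where $L_X(i,j)$ is the total log cost of the letters of $X$ in $\sigma_{i+1}\cdots\sigma_j$, obtainable from prefix sums. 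At the root you must also charge the letters outside $\Sigma_{\mathit{root}}$. Your last paragraph senses this for $\ptseq$, but it contradicts your stated convention, and $\ptxor$ is not covered.

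\emph{The correctness lemma.} The lemma you propose is false. In $\ptseq(a,b)$ on $ba$, the alignment $(\nomove,t_a)(b,t_b)(a,\nomove)$ is optimal. Its restriction to the moves involving $\{a\}$ costs $2$, while the leaf $a$ aligns with $a$ at cost $0$. Beyond this, a node under a $\ptloop$ executes several times and an untaken $\ptxor$-branch executes not at all, so there the restriction is not even an alignment of $T_v$. Also, moves of different children commute only below $\ptpar$, where the children's subnets are concurrent; below $\ptseq$ or $\ptloop$ they do not, whatever the alphabets.

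Replace the global lemma by a \emph{local} statement for each operator, proved from the block structure of the net in \cref{fig:pt_transformation}. It should say three things:
\begin{itemize}
\item every complete firing sequence of the subnet of $v$ decomposes into complete runs of the children in the pattern of the operator;
\item letters outside a child's alphabet can only be log moves within that child's segment;
\item below $\ptpar$, the per-child alignments can be separated and re-merged.
\end{itemize}
With these repairs your recurrences are correct and the table is filled in polynomial time.
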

	As a direct result of our findings here and in \autocite{SchwanenPA2025,SchwanenPA2025a,SchwanenPA2025c} we can conclude that $\CPalign$ on process trees stays within $\P$ as long as each subtree rooted at a parallel operator does not contain any duplicate activity labels.
	
	The $\NP$-completeness of $\CPwordshuffle$ implies that a basic constructor for concurrency, such as the shuffle operator, already leads to $\NP$-hardness of $\CPmember$ (and hence of $\CPalign$).
	At the same time, even the most basic process modeling notations provide such operators. To see this, note that a \emph{trace system} (cf.\@ \cref{def:trace_system}) models a single trace $w$, so the parallel composition of $n$ trace systems represents the shuffle of $n$ traces $w_1,\dots,w_n$. At the same time, this union of trace systems surely is syntactically an extremely simple formalism.
	Thus, if we want to break the $\NP$ barrier and obtain classes of process models which allow for \emph{efficient} alignment computation, we have to exclude arbitrary mixtures of choice and synchronization.
	We take a closer look at this finding in the following section.
	
	\section{Alignments on S-Systems and T-Systems}
	\label{sec:stsystems}
	
	We proceed with further analyzing the source of $\NP$-hardness by focusing on the two primitive workflow mechanisms \enquote{\emph{choice}} and \enquote{\emph{concurrency}}. While $\CPreach$ is known to be in $\P$ on the classes of S-systems (a model for systems without concurrency) and T-systems (a dual class of Petri nets without choices, but with concurrency) \autocite{BestT1987,CommonerHEP1971,GenrichL1973}, we show that $\CPalign$ is solvable in polynomial time just over S-systems restricted to a single token.
	
	A definition of T-nets and T-systems is already given in \cref{def:tsystem}. From the previous section, we know that $\CPwordshuffle$ and therefore $\CPalign$ on process trees is $\NP$-complete, even if the process tree is restricted to the sequence ($\ptseq$) and parallel ($\ptpar$) operator (cf.\@ \cref{prop:pt_member}). Given such a process tree and using the construction from \cref{fig:pt_transformation}, we immediately see that all introduced places have at most one incoming and one outgoing arc; thus, the resulting workflow net is a safe T-system and even acyclic. Therefore, we can conclude:
	\begin{corollary}
		\label{cor:align_tsystem_nphard}%
		On the class of safe T-systems and safe acyclic systems, $\CPalign$ is $\NP$-hard.
	\end{corollary}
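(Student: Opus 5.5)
The plan is to reduce from $\CPwordshuffle$, which is $\NP$-complete~\autocite{Mansfield1983,WarmuthH1984}, through $\CPmember$ and then via \cref{lem:member_to_align} to $\CPalign$. Given words $v,w_1,\dots,w_n$, I take the process tree $\ptseq(\ptpar(T_1,\dots,T_n))$, where each $T_i=\ptseq(a_{i,1},\dots,a_{i,\abs{w_i}})$ spells out $w_i$ (empty words become $\tau$). Its language is exactly $\bigshuffle_{i=1}^n w_i$. By \cref{prop:pt_member}, deciding $v\in\lang{T}$ is $\NP$-hard for such trees.

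Next I translate $T$ with the construction of \cref{fig:pt_transformation}, which runs in polynomial time by \cref{lem:pt_to_wfnet}. The structural check goes as follows.
\begin{itemize}
\item A leaf $a$ or $\tau$ yields a single transition between two places.
\item A sequence node glues these chains together by identifying end places, so each intermediate place gets exactly one incoming and one outgoing transition.
\item The parallel node adds a silent split transition with one input place $i$ and $n$ output places, plus a silent join transition with $n$ input places and the output place $o$.
\end{itemize}
Every place therefore has $\abs{\pset p}\le1$ and $\abs{p\pset}\le1$, so the net is a T-net in the sense of \cref{def:tsystem}. The net contains no cycle, because no $\ptloop$ or back-arc is introduced, so it is acyclic. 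Safeness and soundness (in particular easy soundness) follow from \cref{lem:pt_to_wfnet}. If the single leaf case $n=1$ has a degenerate shape, I would wrap it in a $\ptpar$ with one child or simply keep $n\ge2$.

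Finally I apply \cref{lem:member_to_align} with the standard cost function and threshold $0$. The instance is easy-sound, so no skip gadget is required, and this gadget is the one piece that could otherwise destroy the T-net or acyclicity property. The result is $\NP$-hardness on safe T-systems and on safe acyclic systems.

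The main obstacle is the bookkeeping needed to confirm that the translated net is really a T-net, namely that no place acquires two input or two output transitions. This is where the $\ptxor$ and $\ptloop$ operators would fail, so the argument depends on the restriction to $\ptseq$ and $\ptpar$ in \cref{prop:pt_member}.
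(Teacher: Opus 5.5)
Your proposal is correct and takes the same route as the paper. You reduce $\CPwordshuffle$ to membership on process trees built only from $\ptseq$ and $\ptpar$, translate them via the construction of \cref{fig:pt_transformation}, check that every place has at most one input and one output transition and that no cycle arises, and conclude via \cref{lem:member_to_align} with the standard cost function and threshold $0$; your explicit structural check and your remark that soundness makes the skip gadget unnecessary simply spell out what the paper states as immediate.
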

	Since T-systems are also free-choice, we can combine \cref{cor:align_tsystem_nphard} with our results from \cref{sec:lbfc_systems}, in particular \cref{cor:lbf:np:membership}, to conclude:
	\begin{corollary}
		\label{cor:align_lfbc_npcomplete}%
		On the class of \lbfc-systems, $\CPalign$ is $\NP$-complete.
	\end{corollary}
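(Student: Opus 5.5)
The plan is to combine the upper bound of \cref{cor:lbf:np:membership} with a lower bound obtained by adapting the hardness construction behind \cref{cor:align_tsystem_nphard}. Membership in $\NP$ is immediate from \cref{cor:lbf:np:membership}. The only real work is $\NP$-hardness. The difficulty is that the safe T-systems produced in \cref{cor:align_tsystem_nphard} come from translating process trees into workflow nets (\cref{fig:pt_transformation}). Such nets are \emph{not} live, because the marking $\Multiset{o}$ is dead. So we cannot directly claim that they are \lbfc-systems.

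To repair this, I reduce from $\CPmember$ on process trees using only $\ptseq$ and $\ptpar$, which is $\NP$-hard by \cref{prop:pt_member}. Given such a tree $T$ and a word $w$, I build the workflow net $N$ via \cref{lem:pt_to_wfnet}; it is a safe, sound, acyclic T-net. I then short-circuit it by adding a transition $t_\Lreset$ with $\pset t_\Lreset=\Set{o}$ and $t_\Lreset\pset=\Set{i}$. I label $t_\Lreset$ with a \emph{fresh} label that does not occur in $w$, not with $\tau$. The accepting system is $S=(N',\Multiset{i},\Multiset{o})$.

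I then check that $S$ is an \lbfc-system and a legal input.
\begin{itemize}
\item It is a T-net, hence free-choice. In the translation, $i$ has no input and $o$ has no output, so after adding $t_\Lreset$ each of them has exactly one input and one output transition.
\item It is live and bounded by \cref{thm:wfnet:sound:live}, since $N$ is sound.
\item It is safe. Every reachable marking of the short-circuited net is reachable in $N$ (firing $t_\Lreset$ just restores $\Multiset{i}$), and $N$ is safe.
\item It is easy-sound, since $\Multiset{o}$ is reachable.
\end{itemize}

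The main obstacle is that short-circuiting changes the language: $S$ now accepts arbitrary concatenations of words of $\lang{T}$ separated by the reset. This is exactly why the reset carries a fresh label. Under the standard cost function with threshold $k=0$, as in \cref{lem:member_to_align}, any alignment that fires $t_\Lreset$ contains a non-synchronous, non-silent model move and so costs at least $1$. A zero-cost alignment therefore uses only synchronous and silent moves of the original net $N$ from $\Multiset{i}$ to $\Multiset{o}$. Such an alignment exists if and only if $w\in\lang{T}$.

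The construction is clearly polynomial, so $\CPalign$ is $\NP$-hard on (safe) \lbfc-systems, indeed even on live safe T-systems. Together with \cref{cor:lbf:np:membership}, this gives $\NP$-completeness.
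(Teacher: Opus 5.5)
Your proof is correct, and its skeleton is the paper's: $\NP$-membership comes from \cref{cor:lbf:np:membership}, and hardness comes from the $\ptseq$/$\ptpar$ process-tree construction.

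The paper gets hardness in one line: it observes that T-systems are free-choice and combines \cref{cor:align_tsystem_nphard} with \cref{cor:lbf:np:membership}. Taken literally, that does not suffice. The safe T-systems produced by \cref{fig:pt_transformation} are workflow nets whose reachable final marking $\Multiset{o}$ is dead. So they are not live, and hence not \lbfc-systems.

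Your short-circuit transition $t_\Lreset$ with a fresh label supplies exactly the missing repair:
\begin{itemize}
\item the net stays a safe T-net;
\item it becomes live by \cref{thm:wfnet:sound:live};
\item a zero-cost alignment can never fire $t_\Lreset$, so the standard-cost, threshold-$0$ reduction of \cref{lem:member_to_align} still decides $w\in\lang{T}$.
\end{itemize}
Your route yields a reduction that genuinely lands inside the target class, in fact inside live, safe T-systems, which also supports the corresponding case of \cref{thm:align_stsystem_np}. The paper's route is shorter but leaves the liveness issue unaddressed.

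Two small points should be stated explicitly. First, your safeness argument relies on proper completion of the sound net $N$. Whenever $t_\Lreset$ is enabled, the marking is $\geq\Multiset{o}$ and hence equal to $\Multiset{o}$, so firing $t_\Lreset$ restores exactly $\Multiset{i}$. Second, the fresh label must be added to the alphabet $\Sigma$ of the $\CPalign$ instance.
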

	
	\begin{figure}[t]
		\begin{subfigure}{\linewidth}
			\centering
			\begin{tikzpicture}[scale=.49, transform shape, on grid]
				\node[place, tokens=1]              (pi)  {};
				\node[right of=pi,  tau-transition] (ts)  {}  edge[pre] (pi);
				\node[right of=ts,  place]          (p20) {}  edge[pre] (ts);
				\node[right of=p20, transition]     (t21) {T} edge[pre] (p20);
				\node[right of=t21, place]          (p21) {}  edge[pre] (t21);
				\node[right of=p21, transition]     (t22) {U} edge[pre] (p21);
				\node[right of=t22, place]          (p22) {}  edge[pre] (t22);
				\node[right of=p22, transition]     (t23) {R} edge[pre] (p22);
				\node[right of=t23, place]          (p23) {}  edge[pre] (t23);
				\node[right of=p23, transition]     (t24) {I} edge[pre] (p23);
				\node[right of=t24, place]          (p24) {}  edge[pre] (t24);
				\node[right of=p24, transition]     (t25) {N} edge[pre] (p24);
				\node[right of=t25, place]          (p25) {}  edge[pre] (t25);
				\node[right of=p25, transition]     (t26) {G} edge[pre] (p25);
				\node[right of=t26, place]          (p26) {}  edge[pre] (t26);
				\node[above of=t21, place]          (p10) {}  edge[pre] (ts);
				\node[right of=p10, transition]     (t11) {P} edge[pre] (p10);
				\node[right of=t11, place]          (p11) {}  edge[pre] (t11);
				\node[right of=p11, transition]     (t12) {E} edge[pre] (p11);
				\node[right of=t12, place]          (p12) {}  edge[pre] (t12);
				\node[right of=p12, transition]     (t13) {T} edge[pre] (p12);
				\node[right of=t13, place]          (p13) {}  edge[pre] (t13);
				\node[right of=p13, transition]     (t14) {R} edge[pre] (p13);
				\node[right of=t14, place]          (p14) {}  edge[pre] (t14);
				\node[right of=p14, transition]     (t15) {I} edge[pre] (p14);
				\node[right of=t15, place]          (p15) {}  edge[pre] (t15);
				\node[below of=p20, place]          (p30) {}  edge[pre] (ts);
				\node[right of=p30, transition]     (t31) {G} edge[pre] (p30);
				\node[right of=t31, place]          (p31) {}  edge[pre] (t31);
				\node[right of=p31, transition]     (t32) {O} edge[pre] (p31);
				\node[right of=t32, place]          (p32) {}  edge[pre] (t32);
				\node[right of=p32, transition]     (t33) {E} edge[pre] (p32);
				\node[right of=t33, place]          (p33) {}  edge[pre] (t33);
				\node[right of=p33, transition]     (t34) {D} edge[pre] (p33);
				\node[right of=t34, place]          (p34) {}  edge[pre] (t34);
				\node[right of=p34, transition]     (t35) {E} edge[pre] (p34);
				\node[right of=t35, place]          (p35) {}  edge[pre] (t35);
				\node[right of=p35, transition]     (t36) {L} edge[pre] (p35);
				\node[right of=t36, place]          (p36) {}  edge[pre] (t36);
				\node[right of=p26, tau-transition] (te)  {}  edge[pre] (p15) edge[pre] (p26) edge[pre] (p36);
				\node[right of=te,  place]          (pf)  {}  edge[pre] (te);
			\end{tikzpicture}
			\caption{System which is constructed based on a process tree with the language $\lang{w_1\shuffle w_2\shuffle w_3}$ and therefore results in a safe and sound free-choice workflow net which is even acyclic and a T-system, and thus conflict-free.}\label{fig:tsystem:np}
		\end{subfigure}
		
		\vspace{1em}
		
		\begin{subfigure}{\linewidth}
			\centering
			\begin{tikzpicture}[scale=.49, transform shape, on grid]
				\node[place, tokens=1]              (pi)  {};
				\node[right of=pi,  tau-transition] (ts)  {}  edge[pre] (pi);
				\node[right of=ts,  place]          (p20) {}  edge[pre] (ts);
				\node[right of=p20, transition]     (t21) {P} edge[pre] (p20);
				\node[right of=t21, place]          (p21) {}  edge[pre] (t21);
				\node[right of=p21, transition]     (t22) {E} edge[pre] (p21);
				\node[right of=t22, place]          (p22) {}  edge[pre] (t22);
				\node[right of=p22, transition]     (t23) {T} edge[pre] (p22);
				\node[right of=t23, place]          (p23) {}  edge[pre] (t23);
				\node[right of=p23, transition]     (t24) {R} edge[pre] (p23);
				\node[right of=t24, place]          (p24) {}  edge[pre] (t24);
				\node[right of=p24, transition]     (t25) {I} edge[pre] (p24);
				\node[right of=t25, place]          (p25) {}  edge[pre] (t25);
				\node[above of=p20, place]          (p10) {}  edge[pre] (ts);
				\node[right of=p10, transition]     (t11) {P} edge[pre] (p10);
				\node[right of=t11, place]          (p11) {}  edge[pre] (t11);
				\node[right of=p11, transition]     (t12) {E} edge[pre] (p11);
				\node[right of=t12, place]          (p12) {}  edge[pre] (t12);
				\node[right of=p12, transition]     (t13) {T} edge[pre] (p12);
				\node[right of=t13, place]          (p13) {}  edge[pre] (t13);
				\node[right of=p13, transition]     (t14) {R} edge[pre] (p13);
				\node[right of=t14, place]          (p14) {}  edge[pre] (t14);
				\node[right of=p14, transition]     (t15) {I} edge[pre] (p14);
				\node[right of=t15, place]          (p15) {}  edge[pre] (t15);
				\node[below of=p20, place]          (p30) {}  edge[pre] (ts);
				\node[right of=p30, transition]     (t31) {P} edge[pre] (p30);
				\node[right of=t31, place]          (p31) {}  edge[pre] (t31);
				\node[right of=p31, transition]     (t32) {E} edge[pre] (p31);
				\node[right of=t32, place]          (p32) {}  edge[pre] (t32);
				\node[right of=p32, transition]     (t33) {T} edge[pre] (p32);
				\node[right of=t33, place]          (p33) {}  edge[pre] (t33);
				\node[right of=p33, transition]     (t34) {R} edge[pre] (p33);
				\node[right of=t34, place]          (p34) {}  edge[pre] (t34);
				\node[right of=p34, transition]     (t35) {I} edge[pre] (p34);
				\node[right of=t35, place]          (p35) {}  edge[pre] (t35);
				\node[right of=p25, tau-transition] (te)  {}  edge[pre] (p15) edge[pre] (p25) edge[pre] (p35);
				\node[right of=te,  place]          (pf)  {}  edge[pre] (te);
			\end{tikzpicture}
			\caption{System which is constructed based on a process tree with the language $\lang{w_1\shuffle w_1\shuffle w_1}=\lang{\bigshuffle_{i=1}^3 w_1}$ and therefore results in a safe and sound free-choice workflow net which is even acyclic and a T-system, and thus conflict-free.}\label{fig:tsystem:np:same}
		\end{subfigure}
		
		\vspace{1em}
		
		\begin{subfigure}{\linewidth}
			\centering
			\begin{tikzpicture}[scale=.49, transform shape, on grid]
				\node[place, tokens=3]               (p10) {};
				\node[right of=p10, transition]      (t11) {P} edge[pre] (p10);
				\node[right of=t11, place]           (p11) {}  edge[pre] (t11);
				\node[right of=p11, transition]      (t12) {E} edge[pre] (p11);
				\node[right of=t12, place]           (p12) {}  edge[pre] (t12);
				\node[right of=p12, transition]      (t13) {T} edge[pre] (p12);
				\node[right of=t13, place]           (p13) {}  edge[pre] (t13);
				\node[right of=p13, transition]      (t14) {R} edge[pre] (p13);
				\node[right of=t14, place]           (p14) {}  edge[pre] (t14);
				\node[right of=p14, transition]      (t15) {I} edge[pre] (p14);
				\node[right of=t15, place]           (p15) {}  edge[pre] (t15);
			\end{tikzpicture}
			\caption{Reduction to a single trace system $\mathcal T(w_1)$, but with multiple tokens in the initial place which results in a bounded S-system and T-system with the language $\lang{\bigshuffle_{i=1}^3 w_1}$.}\label{fig:stsystem:np}
		\end{subfigure}
		\caption{Example of safe T-systems and a bounded S-system composed of trace systems of the words $w_1=\text{PETRI}$, $w_2=\text{TURING}$, and $w_3=\text{GOEDEL}$.}
	\end{figure}
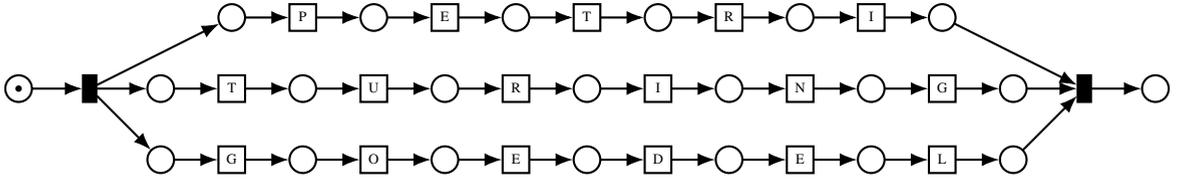
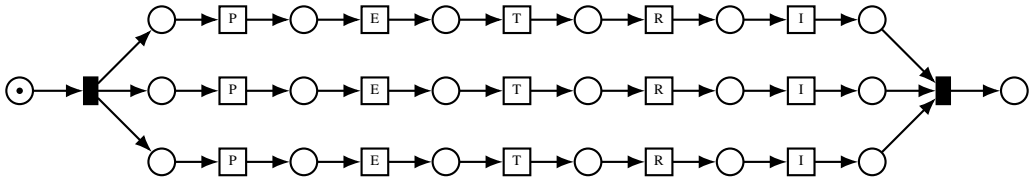
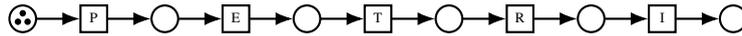
	As an example, let $w_1=\text{PETRI}$, $w_2=\text{TURING}$, and $w_3=\text{GOEDEL}$ be three words and we ask whether some input word $v$ is in $\lang{w_1\shuffle w_2\shuffle w_3}$. \Cref{fig:tsystem:np} shows the safe T-system $S$ constructed according to \cref{fig:pt_transformation} with $\lang{S}=\lang{w_1\shuffle w_2\shuffle w_3}$.
	It is also known that $\CPwordshuffle$ remains $\NP$-hard even when the words in the shuffle are identical~\autocite{WarmuthH1984}.
	\Cref{fig:tsystem:np:same} shows the safe T-system $S$ constructed according to \cref{fig:pt_transformation} with $\lang{S}=\lang{w_1\shuffle w_1\shuffle w_1}$. While the T-system $S$ in \cref{fig:tsystem:np:same} also fulfills the properties of a workflow net, the identical parallel branches could also be merged into a single trace system. Then, the silent transitions can be omitted and the initial marking now consists of one token for each original parallel branch in the initial place. In fact, the resulting system shown in \cref{fig:stsystem:np} is not only a T-system\textemdash although not longer safe\textemdash, but also a bounded S-system.
	\begin{definition}[S-Net, S-System]
		\label{def:ssystem}%
		A Petri net $N=(P,T,F,\ell)$ is an \emph{S-net} if and only if each transition has at most one input and one output place, i.e., $\forall t\in T\colon\abs{\pset t}\leq1,\abs{t\pset}\leq1$. A system $(N,M_0)$ is an \emph{S-system} if $N$ is an S-net.
	\end{definition}
	
	With the above construction and \cref{cor:lbf:np:membership}, we can already infer the following result:
	\begin{theorem}
		\label{thm:align_stsystem_np}%
		On the classes of live or sound, bounded S-systems and on the classes of live or sound, bounded or safe T-systems, $\CPalign$ is $\NP$-complete.
	\end{theorem}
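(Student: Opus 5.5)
The plan is to treat the two directions separately: an $\NP$ upper bound inherited from \cref{sec:lbfc_systems}, and $\NP$-hardness obtained from $\CPwordshuffle$ through \cref{lem:member_to_align}.

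\emph{Upper bound.} S-nets and T-nets are free-choice. Two transitions of an S-net with a common input place both have exactly that one place as pre-set. In a T-net no place has two output transitions. Hence every live, bounded S- or T-system is an \lbfc-system, and \cref{cor:lbf:np:membership} gives $\CPalign\in\NP$. Safe systems are the special case $b=1$. For the sound variants I would first short-circuit: add a transition $t_\Lreset$ with $\pset t_\Lreset=\supp{M_\final}$ and $t_\Lreset\pset=\supp{M_\init}$, with multiplicities realised through the bound. In the spirit of \cref{thm:wfnet:sound:live}, soundness should make this system live and bounded, and it stays free-choice. This is literally true for workflow nets; for the S-systems below I would check it directly.
\begin{itemize}
\item The key point is that \cref{thm:shortestst:tsystems:rev} only deletes and permutes transitions ($\vec{\sigma'}\leq\vec{\sigma}$).
\item A model-move segment of an alignment in the original net never uses $t_\Lreset$, so its shortened version does not use it either and is a firing sequence of the original net.
\item Therefore the proof of \cref{thm:lbf:poly:alignments} goes through verbatim, as in \cref{cor:fcwfnets:np:membership}.
\end{itemize}

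\emph{Hardness for T-systems.} The net of \cref{fig:tsystem:np}, built via \cref{fig:pt_transformation} from a $\ptseq/\ptpar$ process tree for $w_1\shuffle\dots\shuffle w_n$, is a safe and sound acyclic T-system workflow net (\cref{lem:pt_to_wfnet}, \cref{cor:align_tsystem_nphard}). Since it is sound, it is easy-sound, and \cref{lem:member_to_align} gives hardness for sound safe (hence bounded) T-systems.

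For liveness I would short-circuit it with $t_\Lreset$ carrying a fresh label $\Lreset$ not occurring in $v$, and keep $M_\init=\Multiset{i}$ and $M_\final=\Multiset{o}$. Several facts need checking:
\begin{itemize}
\item The place $o$ now has one output transition and $i$ one input transition, so the net is still a T-net.
\item Liveness and safeness follow from \cref{thm:wfnet:sound:live}.
\item Easy-soundness is trivial.
\item Under the standard cost function, a cost-$0$ alignment cannot fire $t_\Lreset$. It therefore exists iff $v\in w_1\shuffle\dots\shuffle w_n$.
\end{itemize}

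\emph{Hardness for S-systems.} I would use the variant of $\CPwordshuffle$ with identical words \autocite{WarmuthH1984}, that is, deciding $v\in\bigshuffle_{i=1}^k w$. Take the trace system $\mathcal T(w)$ with $M_\init=\Multiset{p_0^k}$ and $M_\final=\Multiset{p_{\abs w}^k}$, as in \cref{fig:stsystem:np}. This is an S-system (and a T-system) which is $k$-bounded, since the token count $k$ is invariant. As unary input, $k$ is bounded by $\abs v$, so the bound is polynomial. Its language is exactly $\bigshuffle_{i=1}^k w$.

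This system is sound as an accepting system:
\begin{itemize}
\item Option to complete: every token can be pushed to $p_{\abs w}$.
\item Proper completion: any reachable $M\geq M_\final$ already holds all $k$ tokens, so $M=M_\final$.
\item Quasi-liveness: every transition can fire from the initial marking path.
\end{itemize}
For liveness I would add a return transition $p_{\abs w}\to p_0$ with the fresh label $\Lreset$. The result is a strongly connected S-system with $k$ tokens, hence live and $k$-bounded, and the cost-$0$ argument is as before.

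The main obstacle I expect is the upper bound for sound but not live systems outside workflow nets. There I must verify that the short-circuited S-system is live and bounded. Strong connectivity plus token conservation in S-nets should make this direct. I must also verify that lifting \cref{thm:shortestst:tsystems:rev} back to the unshortcircuited net is valid, which rests on the support-monotonicity noted above.
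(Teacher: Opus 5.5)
Your route is the paper's. $\NP$-membership comes from \cref{cor:lbf:np:membership}, and hardness from the shuffle constructions of \cref{fig:tsystem:np,fig:stsystem:np} (with the identical-words variant of $\CPwordshuffle$ for S-systems) via \cref{lem:member_to_align}. Your hardness half is correct, and the fresh-label return transition for the live variants cleanly fills in what the paper leaves implicit.

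The genuine gap is the upper bound for the \emph{sound} but non-live classes, which the paper's one-line proof also does not spell out. Short-circuiting an arbitrary sound accepting S- or T-system with one transition $t_\Lreset$, $\pset t_\Lreset=\supp{M_\final}$ and $t_\Lreset\pset=\supp{M_\init}$, preserves neither free-choiceness nor liveness in general. Arcs are unweighted in this paper, so multiplicities cannot be \enquote{realised through the bound}. A counterexample:
\begin{itemize}
\item Take the S-net with places $p,q,r$, transitions $t_1\colon p\to q$, $t_2\colon p\to r$, $t_3\colon q\to r$, $t_4\colon r\to q$, and $M_\init=\Multiset{p^2}$, $M_\final=\Multiset{q,r}$.
\item Every reachable marking carries exactly two tokens and can reach $\Multiset{q,r}$, and all $t_i$ are quasi-live. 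Proper completion holds by token count, so this is a sound $2$-bounded S-system.
\item With $\pset t_\Lreset=\Set{q,r}$ and $t_\Lreset\pset=\Set{p}$ the net is no longer free-choice: $\pset t_3\cap\pset t_\Lreset=\Set{q}$ but $\pset t_3\neq\pset t_\Lreset$.
\item After one firing of $t_\Lreset$ a single token remains, so $t_\Lreset$ is dead.
\end{itemize}
So neither \enquote{strong connectivity plus token conservation} nor \cref{thm:shortestst:tsystems:rev} can be invoked this way. Checking only the hardness instances does not help, because the upper bound must hold for every system in the class.

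The fix is to drop liveness altogether. The proof of \cref{thm:lbf:poly:alignments} only uses that each model-move segment from $M_j$ to $M_j'$ can be replaced by a connecting sequence $\rho$ of polynomial length with $\vec{\rho}\leq\vec{\sigma}$.
\begin{itemize}
\item \emph{Bounded T-systems.} Every firing sequence is biased, since no place has two output transitions. \Cref{lem:bounded:bias:decrease,lem:lbf:biased:sequence} are stated for \lbfc-systems, but their proofs (like that of \cref{lem:biased:permute:once}) use only biasedness and $b$-boundedness. They therefore give $\abs{\rho}\leq b\cdot\abs{T}\cdot(\abs{T}+1)/2$ directly.
\item \emph{Bounded S-systems.} Split the segment into the trajectories of individual tokens: each occurrence of a transition moves one token from its input place to its output place, or deletes it. Cut closed sub-walks out of every trajectory and fire the resulting simple paths one token after another. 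This reaches the same marking with a Parikh-dominated sequence of length at most $b\cdot\abs{P}^2$.
\end{itemize}
With these bounds the argument of \cref{thm:lbf:poly:alignments} applies to every easy-sound bounded S- or T-system, which covers the sound classes.
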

	
	We observe concurrency whenever two (or more) transitions of the Petri net can fire independently, i.e., firing one transition does not disable the other transition. Note that this is exactly the situation as in \emph{persistent} or \emph{conflict-free} systems. This is of course no surprise as T-systems are indeed conflict-free and thus persistent.
	\begin{corollary}
		On the class of safe persistent systems and on the class of safe conflict-free systems, $\CPalign$ is $\NP$-hard.
	\end{corollary}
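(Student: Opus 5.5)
The corollary follows by observing that the hardness instances already built for \cref{cor:align_tsystem_nphard} lie in both target classes. No new reduction is needed.

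\emph{Step 1: the instances.} Start from an instance $v,w_1,\dots,w_n$ of $\CPwordshuffle$, which is $\NP$-complete. Build the process tree $\ptseq$/$\ptpar$ for $\bigshuffle_{i=1}^n w_i$ and translate it via \cref{fig:pt_transformation}. This yields the workflow net of \cref{fig:tsystem:np}: a silent split transition, $n$ parallel trace systems $\mathcal T(w_i)$, and a silent join transition. By \cref{lem:pt_to_wfnet} the net is safe and sound, so in particular it is easy-sound. By \cref{lem:member_to_align}, with the standard cost function and threshold $k=0$, we have $v\in\bigshuffle_i w_i$ if and only if $v$ and this net can be aligned with cost $0$. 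The whole construction is clearly polynomial.

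\emph{Step 2: conflict-freeness.} In the constructed net every place has at most one output transition. The source place feeds only the split, each place inside a $\mathcal T(w_i)$ feeds only the next letter transition, the last place of each branch feeds only the join, and the sink has none. Hence for every place $p$ with $\abs{p\pset}\geq1$ we have $\abs{p\pset}=1$. The condition $p\pset\subseteq\pset p$ is then required only when there is a genuine choice, so the condition is met vacuously in the intended reading. The net is therefore conflict-free, which also follows from its being a T-net (\cref{def:tsystem}).

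\emph{Step 3: persistence.} Persistence holds because in a reachable marking two distinct enabled transitions $t_1\neq t_2$ have disjoint pre-sets, since every place has a unique output transition. Firing $t_1$ only removes tokens from $\pset t_1$, so $t_2$ remains enabled and $t_1t_2$ is fireable.

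The only delicate point is matching the paper's exact wording of the conflict-free definition, i.e.\ checking that places with a single output transition qualify. I would treat this by noting that T-systems are conflict-free, as the paper itself remarks before the statement, rather than re-deriving the quantifier. Combining Steps 1–3, $\CPalign$ is $\NP$-hard on safe conflict-free systems and on safe persistent systems.
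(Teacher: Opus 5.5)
Your proposal is correct and follows the paper's argument, which simply notes that the safe T-system instances behind \cref{cor:align_tsystem_nphard} are conflict-free and therefore persistent. Your direct checks of both properties are sound, and you are right that the formal condition in the conflict-free definition only matches its informal reading if it starts from $\abs{p\pset}\geq 2$ rather than $\abs{p\pset}\geq 1$.
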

	
	Due to $\NP$-hardness of $\CPalign$ on bounded S-systems, in the following, we consider a special case of S-systems where we restrict the initial marking to consist of a single token only (in other words, we require the S-system to be safe). This way we ensure that the system does not allow for any concurrency, which causes the construction of the reachability graph to be intractable for general systems. In S-systems, however, the number of tokens does not change when firing transitions and therefore, S-systems restricted to a single token directly translate to their reachability graph.
	\begin{definition}[Reachability Graph]
		\label{def:reachability_graph}%
		Let $(N,M_0)$ with $N=(P,T,F,\ell)$ be a system.
		Its \emph{reachability graph} $\reachGraph{N,M_0}\coloneq(\M,M_0,T,A,\ell)$ is a rooted directed graph with the set of reachable markings $\M\coloneq\reach{N,M_0}$ as set of vertices, the initial marking $M_0$ of the system as root, and a set of arcs $A\coloneq\Set{(M,t,M')\in\reach{N,M_0}\times T\times\reach{N,M_0}\given(N,M)\fire t(N,M')}$.
	\end{definition}
	Hence, their reachability graphs are known to be polynomial-time constructible.
	\begin{proposition}
		\label{prop:reachability_graph_p}%
		The reachability graph of an S-system where the initial marking consists of a single token can be constructed in polynomial time.
	\end{proposition}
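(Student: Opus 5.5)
The plan rests on one structural invariant: in an S-net every transition has at most one input and at most one output place, so firing never increases the total number of tokens. If the transition has one input and one output place, it moves a token; if it has no output place, it can only decrease the total. For the purposes of this proposition I assume, as in the intended class, that transitions have exactly one input and one output place (both sets nonempty), so the total token count $\sum_{p\in P}M(p)$ is invariant. Should a transition have an empty preset or postset, I would treat it separately: an empty preset would generate tokens and break the setting, so the class presumably excludes it, and an empty postset leads to the empty marking, which is simply one extra vertex.

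Granting this invariant, I proceed by induction along any firing sequence from $M_0=\Multiset{p}$. Every reachable marking is then of the form $\Multiset{p'}$ for a single place $p'\in P$. Hence $\abs{\reach{N,M_0}}\leq\abs{P}$, or at most $\abs P+1$ if the empty marking is allowed. The reachability graph is therefore the graph whose vertex set is the places reachable from $p$. It has an arc $(\Multiset{p'},t,\Multiset{p''})$ exactly when $\pset t=\Set{p'}$ and $t\pset=\Set{p''}$, because a transition is enabled at $\Multiset{p'}$ if and only if its unique input place is $p'$.

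The algorithm runs a breadth-first search from $p$ over the bipartite graph $(P\cup T,F)$. At each visited place $p'$, it iterates over the transitions $t\in p'\pset$, adds the arc to $t$'s output place, and enqueues that place if it is new. Each place and each arc of $F$ is handled a constant number of times, so the running time is $O(\abs P+\abs T+\abs F)$, which is polynomial. Correctness is the statement that the set of visited vertices equals $\reach{N,M_0}$ and the set of recorded arcs equals $A$. This follows directly from the characterization of enabledness above together with the induction.

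The only step needing care is the first one: fixing precisely what the class of S-systems permits when pre- or postsets are empty, so that the single-token invariant is justified. Everything after that is routine graph search.
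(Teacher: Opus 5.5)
Your proof is correct and takes the same route as the paper, which only remarks that firing in an S-system preserves the token count, so a single-token S-system's reachable markings correspond to places and its reachability graph \enquote{directly translates} from the net. Your explicit treatment of empty pre- and postsets sharpens the paper's argument: under the paper's literal \emph{at most one} definition of S-nets, a transition with empty preset would make the reachability graph infinite, so your exclusion is genuinely needed, and the safeness assumption in the paper's later use of this proposition guarantees it.
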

	\begin{corollary}
		\label{cor:reachability_graph_p}%
		The reachability graph of a trace system is polynomial-time constructible.
	\end{corollary}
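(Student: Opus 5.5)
The plan is to derive the statement directly from \cref{prop:reachability_graph_p}, by observing that a trace system is a particular S-system whose initial marking is a single token.

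First, I check the definition of the trace system $\mathcal T(\sigma)$ (\cref{def:trace_system}). Every transition $t_i$ has exactly one input place $p_{i-1}$ and exactly one output place $p_i$. Hence $\abs{\pset t_i}\leq1$ and $\abs{t_i\pset}\leq1$, so the underlying net is an S-net in the sense of \cref{def:ssystem}. The initial marking $\Multiset{p_0}$ consists of exactly one token. So $\mathcal T(\sigma)$ lies in the class covered by \cref{prop:reachability_graph_p}.

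Second, constructing $\mathcal T(\sigma)$ from $\sigma$ takes time linear in $\abs\sigma$, since it has $\abs\sigma+1$ places, $\abs\sigma$ transitions and $2\abs\sigma$ arcs. Composing this polynomial construction with the polynomial algorithm of \cref{prop:reachability_graph_p} gives the claim. Measured in the size of the trace system itself, the claim follows immediately.

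As a sanity check, the reachability graph can also be written down explicitly. The reachable markings are exactly $\Multiset{p_i}$ for $0\leq i\leq\abs\sigma$. This follows by induction: from $\Multiset{p_i}$ the only enabled transition is $t_{i+1}$, and it yields $\Multiset{p_{i+1}}$. The arcs are therefore $(\Multiset{p_{i-1}},t_i,\Multiset{p_i})$, i.e., the graph is a simple path of length $\abs\sigma$.

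There is no real obstacle here. The only point needing care is verifying the single-token and S-net conditions, and both are immediate from the definitions.
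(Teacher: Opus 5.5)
Your proposal is correct and matches the paper's argument. The paper states this as an immediate corollary of \cref{prop:reachability_graph_p}, because a trace system is an S-system whose initial marking $\Multiset{p_0}$ is a single token. Your explicit description of the reachability graph as a path through $\Multiset{p_0},\dots,\Multiset{p_{\abs\sigma}}$ is a correct self-contained check that does not even need the proposition.
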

	
	Based on \cref{prop:reachability_graph_p}, we show that the complexity of $\CPalign$ drops to polynomial time (denoted as $\P$) on this class. While we can generate the synchronous product of systems in polynomial time, we make use of the fact that this also holds for the synchronous product of reachability graphs (which adapts the idea of \autocite[Definition~8.6]{Winskel1984}).
	\begin{definition}[Synchronous Product of Reachability Graphs]\label{def:synchronous_reachability_graph}
		Let $R_1\coloneq(\M_1,M_{0,1},T_1,A_1,\ell_1)$ and $R_2\coloneq(\M_2,M_{0,2},T_2,A_2,\ell_2)$ be the reachability graphs of two systems with the labeling functions $\ell_1\colon T_1\to\Sigmatau$ and $\ell_2\colon T_2\to\Sigmatau$ over some alphabet $\Sigma$. Furthermore, let $\nomove\notin\Sigma,T_1,T_2$ be a distinguished symbol. Their synchronous product $R_1\otimes R_2\coloneq(\M,M_0,T,A,\ell)$ with the labeling function $\ell\colon T\to\Sigmatau$ is defined such that
		\begin{itemize}
			\item $\M\coloneq\M_1\times\M_2$ and $M_0\coloneq M_{0,1}+M_{0,2}$,
			\item $T\coloneq\Set{(t_1,t_2)\in T_1\times T_2\given\ell_1(t_1)=\ell_2(t_2)}\cup(T_1\times\Set\nomove)\cup(\Set\nomove\times T_2)$,
			\item $A\coloneq\Set{a_1\oplus a_2\in A_1\times A_2\given\ell_1(\pi_2(a_1))=\ell_2(\pi_2(a_2))}\cup(A_1\times \M_2^\circ)\cup(\M_1^\circ\times A_2)$\\
			where $a\oplus a'$ denotes the arc defined by
			\begin{equation*}
				(M,t,\bar{M})\oplus(M',t',\bar{M}')\coloneq(M+M',(t_1,t_2),\bar{M}+\bar{M}')
			\end{equation*}
			and $\M^\circ$ denotes the set of \emph{\enquote{no-move} arcs} defined by $\M^\circ\coloneqq\Set{(M,\nomove,M)\given M\in\M}$, and
			\item $(t_1,t_2)\mapsto\ell(t_1,t_2)\coloneqq\begin{cases}
				(\ell_1(t_1),\ell_2(t_2)) & t_1\in T_1 \wedge t_2\in T_2, \\
				(\ell_1(t_1),\nomove)     & t_2\notin T_2,                \\
				(\nomove,\ell_2(t_2))     & t_1\notin T_1.
			\end{cases}$
		\end{itemize}
	\end{definition}
	According to \cref{def:reachability_graph,def:synchronous_product,def:synchronous_reachability_graph,cor:synchronous_reachability_set}, it does not matter if we compute the synchronous product or construct the reachability graph first.
	
	\begin{proposition}\label{prp:reachability_graph_product_equivalence}
		Given two systems $S_1$ and $S_2$, the synchronous product of their reachability graphs $\reachGraph{S_1}\otimes\reachGraph{S_2}$ is identical to the reachability graph of their synchronous product $\reachGraph{S_1\otimes S_2}$, i.e., $\reachGraph{S_1}\otimes\reachGraph{S_2}=\reachGraph{S_1\otimes S_2}$.
	\end{proposition}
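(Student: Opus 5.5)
To prove \cref{prp:reachability_graph_product_equivalence}, I will compare the two graphs component by component: vertex sets, roots, transition sets and labels, and arcs. Throughout, I will identify a pair of markings $(M_1,M_2)\in\N^{P_1}\times\N^{P_2}$ with the marking $M_1+M_2$ over $P_1\cup P_2$. Since $P_1$ and $P_2$ are disjoint, this identification is a bijection, so sums and pairs can be used interchangeably.

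The vertex set of $\reachGraph{S_1\otimes S_2}$ is $\reach{S_1\otimes S_2}$, which equals $\reach{S_1}\times\reach{S_2}=\M_1\times\M_2$ by \cref{cor:synchronous_reachability_set}. The roots coincide because $M_\init=M_{1,\init}+M_{2,\init}$ by \cref{def:synchronous_product}. The transition sets and labelling functions in \cref{def:synchronous_product} and \cref{def:synchronous_reachability_graph} are defined by literally the same formulas, so they agree as well.

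The main step is to show that the two arc sets are equal. I will first establish a firing lemma for the product net. Let $M=M_1+M_2$ be a reachable marking. The pre- and post-sets of a product transition are, by the definition of $F$, the unions of the component pre- and post-sets, and these lie in the disjoint sets $P_1$ and $P_2$. Hence the firing rule from \cref{def:marking} acts separately on each component, which gives three cases.
\begin{itemize}
\item A synchronous transition $(t_1,t_2)$ is enabled at $M$ iff $t_1$ is enabled at $M_1$ and $t_2$ is enabled at $M_2$. In that case $M\fire{(t_1,t_2)}\bar M_1+\bar M_2$, where $M_i\fire{t_i}\bar M_i$.
\item A transition $(t_1,\nomove)$ is enabled at $M$ iff $t_1$ is enabled at $M_1$. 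It then yields $\bar M_1+M_2$.
\item Symmetrically, a transition $(\nomove,t_2)$ is enabled at $M$ iff $t_2$ is enabled at $M_2$, and it yields $M_1+\bar M_2$.
\end{itemize}

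With this lemma, each case corresponds to one of the three parts of $A$ in \cref{def:synchronous_reachability_graph}.
\begin{itemize}
\item A synchronous arc of the product net corresponds exactly to $a_1\oplus a_2$, where $a_1\in A_1$ and $a_2\in A_2$ have equal labels.
\item An arc for a model-only move on one side corresponds to pairing an arc of that side with a no-move loop $(M,\nomove,M)$ of the other side. These are exactly the elements of $A_1\times\M_2^\circ$ and $\M_1^\circ\times A_2$.
\end{itemize}
In each case I will check that the arc's endpoints lie in $\M_1\times\M_2$. This follows from \cref{cor:synchronous_reachability_set} together with the fact that each component marking is reachable in its own system.

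I expect the main obstacle to be notational rather than mathematical. The definition writes $A_1\times\M_2^\circ$ as a Cartesian product, so I will read its elements via $\oplus$, with the $\nomove$ component fixing the transition name to $(t_1,\nomove)$. The definition also writes the root as a sum $M_{0,1}+M_{0,2}$ but the vertex set as a product $\M_1\times\M_2$. I will make the identification of pairs with sums explicit at the start, so that the proof reduces to the case analysis above.
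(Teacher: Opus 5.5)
Your proposal is correct and takes essentially the paper's route. The paper justifies the proposition only by pointing to the definitions of the reachability graph and of both synchronous products together with \cref{cor:synchronous_reachability_set}, and your argument unpacks exactly that: vertex sets are matched via the corollary, and arc sets via the componentwise decomposition of the firing rule over the disjoint place sets $P_1$ and $P_2$, with the identification of $(M_1,M_2)$ with $M_1+M_2$ and the reading of $A_1\times\M_2^\circ$ through $\oplus$ usefully made explicit.
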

	Therefore, we refer to $\reachGraph{S_1\otimes S_2}$ and $\reachGraph{S_1}\otimes\reachGraph{S_2}$, respectively, as the \emph{synchronous reachability graph} of the systems $S_1$ and $S_2$. We close this section by showing the following result based on the construction of the synchronous reachability graph.
	\begin{theorem}\label{thm:align_ssystem_p}
		On the class of safe and sound S-systems and on the class of live, safe S-systems, $\CPalign$ is in $\P$.
	\end{theorem}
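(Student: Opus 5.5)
The approach is to reduce the alignment problem to a shortest-path computation on the synchronous reachability graph $\reachGraph{\mathcal T(\sigma)}\otimes\reachGraph{S}$. That graph will have polynomial size as soon as $S$ has a single token in every reachable marking.

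\emph{Step 1: one token throughout.} In an S-system every transition has at most one input and at most one output place. So firing a transition either preserves the token count or removes a token (when $t\pset=\emptyset$). For a live S-system, liveness together with the weak connectivity of $N$ gives strong connectivity, and on a strongly connected S-net every transition has exactly one input and one output place. Hence the token count is invariant, and safeness then means $M_\init$ is either empty or a single token; we may assume a single token. For safe and sound S-systems, quasi-liveness and the option to complete ensure $M_\final\in\reach{N,M}$ for every reachable $M$. Safeness gives $M(p)\le 1$, and with at most one input place per transition the number of tokens never increases, so every reachable marking has at most $\abs{M_\init}$ tokens. I expect the delicate point here: we must exclude, for instance, two initial tokens in distinct places that never interact. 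Safeness alone does not forbid this; the earlier bounded S-system hardness example used multiple tokens in a single place, which safeness does forbid, but tokens in distinct places are still possible. My plan is to argue that a marking with several tokens can be handled as well, because a sound safe S-system has reachability set bounded by $\abs{P}^{\abs{M_\init}}$. If that fails, I would restrict to a single initial token, as the preceding text explicitly does (\enquote{we restrict the initial marking to consist of a single token only}), and treat it as the class definition.

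\emph{Step 2: the graph.} With a single token, \cref{prop:reachability_graph_p} shows that $\reachGraph{S}$ has at most $\abs P$ (plus possibly the empty marking) vertices and can be built in polynomial time. By \cref{cor:reachability_graph_p}, $\reachGraph{\mathcal T(\sigma)}$ is a path with $\abs\sigma+1$ vertices. By \cref{def:synchronous_reachability_graph}, their product has $O(\abs\sigma\cdot\abs P)$ vertices and $O(\abs\sigma\cdot\abs T+\abs\sigma\abs P+\abs T)$ arcs. By \cref{prp:reachability_graph_product_equivalence} it equals $\reachGraph{\mathcal T(\sigma)\otimes S}$.

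\emph{Step 3: costs and shortest paths.} We weight each arc labelled $(t_1,t_2)$ with the cost $c$ of the corresponding legal move. By the proposition from \autocite{Adriansyah2014}, paths from $M_\init+\Multiset{p_0}$ to $M_\final+\Multiset{p_{\abs\sigma}}$ correspond exactly to alignments, with matching costs. Since costs are non-negative rationals, Dijkstra's algorithm computes the optimal cost in polynomial time, and we compare it with $k$. Easy soundness guarantees that some path exists.
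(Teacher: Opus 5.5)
Your Steps~2 and~3 match the paper's proof: a shortest path in $\reachGraph{\mathcal T(\sigma)}\otimes\reachGraph{S}$, which has polynomial size by \cref{prop:reachability_graph_p} once $S$ carries a single token. The paper simply asserts that a safe S-system which is live or sound has exactly one token. That assertion is your Step~1, and for the sound class you leave it open; neither fallback closes it. The bound $\abs{P}^{\abs{M_\init}}$ is polynomial only if the number of tokens is constant, so it presupposes the missing fact. Moreover, several tokens genuinely cannot be \enquote{handled as well}. Take trace systems $\mathcal T(w_1),\dots,\mathcal T(w_k)$ plus one unmarked place $q$ with a silent transition from $q$ to each start place. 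This is a safe, weakly connected, easy-sound S-system with language $w_1\shuffle\dots\shuffle w_k$, so by the $\NP$-hardness of $\CPwordshuffle$, zero-cost alignment on such systems is $\NP$-hard. Declaring a single initial token to be part of the class definition proves a different theorem.

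The live case has smaller holes. Liveness plus weak connectivity does not give strong connectivity without boundedness (take $\pset t_1=\emptyset$, $t_1\pset=\Set{p}=\pset t_2$, $t_2\pset=\emptyset$), though with safeness you may cite \enquote{live and bounded implies strongly connected}. An invariant token count plus safeness only gives at most $\abs P$ tokens until you add that one token can be walked onto another's place. Finally, \enquote{at most one input place} does not stop the count from growing, because transitions with empty preset add tokens.

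The missing idea is quasi-liveness. Soundness and liveness both provide it, it fails in the example above, and it settles both classes at once.
\begin{itemize}
\item Weak connectivity and safeness exclude transitions with $\pset t=\emptyset$: such a $t$ is either isolated, or always enabled and able to fire twice in a row onto its output place. So every transition has exactly one input place, and tokens move independently. If $M(p)\geq1$ and $p=q_0,t_1,q_1,\dots,t_m,q_m$ is a directed path, then $t_1\cdots t_m$ is enabled at $M$ and moves one token from $p$ to $q_m$.
\item Let $p_1,\dots,p_k$ be the places marked by $M_\init$, and let $R_i$ be the set of places reachable from $p_i$ by directed paths. If $q\in R_i\cap R_j$ with $i\neq j$, walking both tokens to $q$ violates safeness. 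So the $R_i$ are pairwise disjoint, and every reachable marking has its tokens in $\bigcup_i R_i$.
\item By quasi-liveness, the input place of every transition $t$ is marked in some reachable marking, so it lies in some $R_i$, and then $t\pset\subseteq R_i$ as well.
\item Hence $R_1$ together with the transitions consuming from it is closed under arcs in both directions. For $k\geq2$ this set does not contain $p_2$, contradicting weak connectivity. For $k=0$, every transition would be dead.
\end{itemize}
So $\abs{M_\init}=1$ whenever $T\neq\emptyset$, and your Steps~2 and~3 then go through.
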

	\begin{proof}
		Finding an optimal alignment between a trace $\sigma$ and an easy-sound system $S$ is equivalent to finding a shortest path on $\reachGraph{\mathcal T(\sigma)\otimes S}$ between the initial and the final state. According to \cref{prp:reachability_graph_product_equivalence}, $\reachGraph{\mathcal T(\sigma)\otimes S}=\reachGraph{\mathcal{T}(\sigma)}\otimes\reachGraph{S}$ where the synchronous product and $\reachGraph{\mathcal{T}(\sigma)}$ are polynomial-time constructible (\cref{def:synchronous_reachability_graph,cor:reachability_graph_p}). This, however, does not generally hold for $\reachGraph{S}$. But, if $S$ is a safe S-system that is either live or sound, its initial marking consists of exactly one token. Due to \cref{prop:reachability_graph_p}, its reachability graph can therefore be constructed in polynomial time. Finally, the shortest-path problem is known to be in~$\P$~\autocite[cf.][]{Ford1956}, and therefore $\CPalign$ is also in $\P$ on this class.
	\end{proof}
	
	\section{Alignments on Acyclic Petri Nets}
	\label{sec:acyclic:systems}
	Finally, let us turn our attention to the case of \emph{acyclic} systems $(N,M_0)$, i.e., systems where the underlying Petri net $N$ is acyclic.
	We have already seen in \cref{cor:align_tsystem_nphard} that the alignment problem for acyclic systems is $\NP$-hard.
	What remains is to show that $\CPalign$ is also in $\NP$ on this class, even if we drop the safeness assumption.
	
	To this end, we recall the \emph{marking equation} of a Petri net system. We associate to a Petri net $N=(P,T,F,\ell)$ its \emph{incidence matrix} $\mat{N}\in\{-1,0,1\}^{P\times T}$.
	\begin{definition}[Incidence Matrix of a Petri Net]
		Let $N=(P,T,F,\ell)$ be a Petri net such that $P\neq\emptyset$ and $T\neq\emptyset$. Its \emph{incidence matrix} $\mat{N}\in\Set{-1,0,+1}^{P\times T}$ is defined by
		\begin{equation*}
			\mat{N}(p,t)\coloneq\begin{cases}
				-1 & \text{if } (p,t)\in F\wedge (t,p)\notin F, \\
				+1 & \text{if } (p,t)\notin F\wedge (t,p)\in F, \\
				\phantom+0  & \text{otherwise}.
			\end{cases}
		\end{equation*}
	\end{definition}
	Moreover, we associate with every firing sequence $\sigma\in T^*$ its so-called \emph{Parikh vector} $\vec\sigma\in\N^T$, i.e., the vector which abstracts from the order of transitions in $\sigma$ and only counts how many times each transition occurs.
	Then, if $\sigma$ is a firing sequence of the system $(N,M_0)$ which leads to a marking $M$, then the following \emph{marking equation} holds:
	\begin{equation*}
		M=M_0+\mat{N}\vec\sigma.
	\end{equation*}
	In this way, we can use the marking equation as a necessary condition for the reachability of a marking $M$ from~$M_0$: if there is no solution $x\in\N^T$ of the (linear) equation $M=M_0+\mat{N}x$, then $M$ is not reachable from $M_0$.
	Unfortunately, for general Petri nets, the converse is not true: even if we find a solution $x\in\N^T$ to the marking equation, then this does not mean that this solution is also \emph{realizable}, i.e., that there is an actual firing sequence $\sigma$ enabled at $M_0$ with $\vec\sigma=x$.
	
	Luckily, for acyclic systems, the situation changes. Here, the marking equation is not only a necessary but also a sufficient condition for reachability. In particular, a solution $x\in\N^T$ to the marking equation is always realizable~\autocite{HiraishiI1988,Murata1989}.
	This means that in acylic systems, solutions to the marking equation and enabled firing sequences are in one-to-one correspondence.
	
	With this result in mind, it is easy to show that $\CPalign$ is in $\NP$ on acyclic systems. Given a trace $\sigma$, we first construct the synchronous product of the system $(N,M_0)$ and the trace system $\mathcal T(\sigma)$, which is again an acyclic system.
	Then, in order to determine whether the alignment costs between the system $(N,M_0)$ and $\sigma$ do not execeed a threshold $k\in\costRange$ with respect to some cost function $c$, we simply check if the final marking $M_\final'$ of the synchronous product net $N'$ is reachable from the initial marking $M_\init'$ via a firing sequence $\sigma'$ of cost at most $k$.
	Since the system is acyclic, this reduces to guessing a vector $x\in\N^T$ of the marking equation $M_\final'=M_\init'+\mat{N'}x$ such that $\sum_{t\in T} c(t)\cdot x(t)\leq k$. This immediately yields an integer linear program which can be solved in $\NP$ and thus shows that $\CPalign$ is in $\NP$ on acyclic systems. Although the given acyclic system might not be easy-sound, this is easily accomplished using a similar construction as already shown for \cref{lem:mincostreach_to_align}.
	\begin{theorem}
		\label{thm:align_acyclic_np}%
		On the class of acyclic systems, $\CPalign$ is $\NP$-complete.
	\end{theorem}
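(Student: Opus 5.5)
The statement has two halves, and I would prove them separately. Hardness comes directly from \cref{cor:align_tsystem_nphard}. Given an instance $v, w_1,\dots,w_n$ of $\CPwordshuffle$, the process tree $\ptseq$/$\ptpar$ construction of \cref{fig:pt_transformation} yields a safe, sound, acyclic T-system. It is easy-sound by construction, so no skip gadget is needed. \Cref{lem:member_to_align} with the standard cost function and threshold $0$ then reduces membership to $\CPalign$. Since safe acyclic systems are a subclass of acyclic systems, $\NP$-hardness follows.

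For membership in $\NP$, the first step is to turn an instance $(\Sigma,S,\sigma,c,k)$ with $S=(N,M_\init,M_\final)$ acyclic into a marking-equation problem. I would build the synchronous product $S'=\mathcal T(\sigma)\otimes S=(N',M_\init',M_\final')$ in polynomial time and check that $N'$ is acyclic. Any cycle in $N'$ projects to a cycle in one of the components: each transition of $N'$ consumes only from and produces only into places of the components it involves, so following arcs within $P_1$ gives a cycle of $\mathcal T(\sigma)$, within $P_2$ a cycle of $N$. A cycle that mixes both kinds of places would still force a cycle in one of them.

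By Adriansyah's correspondence, alignments are exactly the complete firing sequences of $S'$. By the realizability result for acyclic nets \autocite{HiraishiI1988,Murata1989}, $M_\final'$ is reachable with Parikh vector $x$ if and only if $x\in\N^{T'}$ solves $M_\final'=M_\init'+\mat{N'}x$. Hence an alignment of cost at most $k$ exists if and only if the integer program
\begin{equation*}
	M_\final'=M_\init'+\mat{N'}x,\qquad \textstyle\sum_{t\in T'}c(t)\,x(t)\leq k,\qquad x\geq 0
\end{equation*}
is feasible. The $\NP$ algorithm guesses $x$ and verifies these linear constraints in polynomial time.

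The main obstacle is that $x$ must have polynomial encoding size. Because the system need not be bounded, I cannot rely on a bound on the firing count and instead invoke the standard small-solution bound for integer programming (Papadimitriou / Borosh–Treybig). This bound says that if a system $Ax=b$, $c^\top x\leq k$, $x\geq 0$ with integral data has an integral solution, then it has one whose entries are bounded exponentially in the input size, so they are polynomial in bit length. Rational costs and threshold are first scaled to integers by a common denominator.

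Alternatively, acyclicity alone gives an explicit bound. In an acyclic net, each transition fires at most as many times as the total number of tokens that can ever reach its input places. Topologically, that count is at most $\lVert M_\init'\rVert_1$ multiplied by the maximal out-degree raised to the depth. This bound is exponential in value but polynomial in bits, so it suffices.

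A final subtlety is easy-soundness, since the input requires $\scfs{S}\neq\emptyset$. This is a promise on the input, so the algorithm needs no gadget. For the hardness half, the constructed instances are already sound.
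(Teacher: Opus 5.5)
\paragraph{Verdict.} Your hardness half is fine and is exactly the paper's argument (\cref{cor:align_tsystem_nphard} combined with \cref{lem:member_to_align}). Your membership half follows the paper's route as well, but the step it rests on is false: the synchronous product of two acyclic nets need not be acyclic.

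\paragraph{Where the argument breaks.} The sentence \enquote{a cycle that mixes both kinds of places would still force a cycle in one of them} is precisely where it fails. A synchronous transition $(s,v)$ can be entered from a trace place and left into a model place.

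Concretely, take $\sigma=\seq{a,b}$. Then $\mathcal T(\sigma)$ is $q_0\to s_1\to q_1\to s_2\to q_2$ with $\ell(s_1)=a$ and $\ell(s_2)=b$. Let $S$ be the acyclic (even safe and sound) net $r_0\to v_1\to r_1\to v_2\to r_2$ with $\ell(v_1)=b$, $\ell(v_2)=a$, $M_\init=\Multiset{r_0}$ and $M_\final=\Multiset{r_2}$.

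In $\mathcal T(\sigma)\otimes S$ we have
\begin{itemize}
\item $\pset(s_1,v_2)=\Set{q_0,r_1}$ and $(s_1,v_2)\pset=\Set{q_1,r_2}$,
\item $\pset(s_2,v_1)=\Set{q_1,r_0}$ and $(s_2,v_1)\pset=\Set{q_2,r_1}$.
\end{itemize}
This gives the directed cycle $q_1\to(s_2,v_1)\to r_1\to(s_1,v_2)\to q_1$.

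Now let $x$ be the vector with $x(s_1,v_2)=x(s_2,v_1)=1$ and $0$ elsewhere. It solves $M_\final'=M_\init'+\mat{N'}x$ at cost $0$. However, neither transition is enabled at $\Multiset{q_0,r_0}$. The true optimum is $2$, since $\lang{S}=\Set{\seq{b,a}}$.

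So your integer program accepts a no-instance for $k=0$. On the (possibly cyclic) product, the marking equation is only a necessary condition, and the realizability theorem for acyclic nets cannot be invoked. The paper's own sketch asserts the same acyclicity (\enquote{which is again an acyclic system}), so following it does not close the gap.

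\paragraph{How to repair it.} Use the marking equation of $N$ itself, one block per gap between trace positions.
\begin{enumerate}
\item For each position $j$ of $\sigma$, guess whether it is a log move or a synchronous move with some transition $u_j$ satisfying $\ell(u_j)=\sigma_j$. This takes polynomially many bits.
\item Ask for vectors $x_0,\dots,x_q\in\N^T$ subject to the following constraints:
\begin{itemize}
\item $M_0=M_\init$ and $M_j'=M_j+\mat{N}x_j\geq 0$;
\item for a synchronous move, $u_{j+1}$ is enabled at $M_j'$ and $M_{j+1}$ is the marking after firing it; for a log move, $M_{j+1}=M_j'$;
\item $M_q'=M_\final$;
\item the total cost is at most $k$.
\end{itemize}
\end{enumerate}
Since $N$ is acyclic, each $x_j$ is realizable from $M_j$ by firing its transitions in a topological order of $N$. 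Hence feasibility is equivalent to the existence of an alignment of cost at most $k$. These constraints form a polynomial-size integer program, and your Papadimitriou/Borosh--Treybig argument then yields a polynomial-size certificate.

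\paragraph{Drop the alternative bound.} Your \enquote{alternative} explicit bound does not hold. The paper's definition of Petri nets allows transitions with empty preset, and such transitions can fire unboundedly often even in an acyclic net.
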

	
	\section{Conclusion}
	\label{sec:conclusion}
	
	While alignments play an important role in almost all process mining applications, the algorithmic costs for aligning a model and a trace are extremely high.
	As a matter of fact, alignments remain an algorithmic challenge even for state-of-the-art process mining tools.
	In this article, we proved that the high computational costs for computing costs are unavoidable: an efficient algorithm for alignments on sound workflow nets does not exist (assuming $\P\neq\PSPACE$).
	However, instead of giving up, we further showed that it is worth taking a closer look at the structure of alignments.
	Specifically, we derived new algorithmic guarantees for important model classes, such as the class of sound, free-choice workflow nets which, in turn, includes all process trees, a class highly relevant in practice.
	
	\begin{table}
		\centering
		\caption{Overview of our results for the complexity of the alignment problem ($\CPalign$) including a comparison with the complexity of the reachability problem ($\CPreach$).}
		\label{tab:overview:results}
		\begin{booktabs}{colspec={lcc},cell{1}{1}={r=2}{l},cell{1}{2}={c=2}{c}}
			\toprule
			Model Class                                        &     Complexity of     &                    \\ \cmidrule[lr]{2-3}
			                                                   &      $\CPreach$       &     $\CPalign$     \\ \midrule
			Petri Net Systems                                  & $\Ackermann$-complete & $\Ackermann$-hard  \\
			\qquad Bounded                                     &  $\PSPACE$-complete   &   $\PSPACE$-hard   \\
			\qquad Safe                                        &  $\PSPACE$-complete   & $\PSPACE$-complete \\
			\qquad Cyclic, Live, Safe                          &  $\PSPACE$-complete   & $\PSPACE$-complete \\
			\qquad Live, Bounded, Free-Choice                  &    $\NP$-complete     &   $\NP$-complete   \\
			\qquad Acyclic                                     &    $\NP$-complete     &   $\NP$-complete   \\
			\qquad Cyclic, Live, Bounded, Free-Choice          &        $\in\P$        &   $\NP$-complete   \\
			\qquad Bounded, Acyclic                            &        $\in\P$        &   $\NP$-complete   \\
			\qquad Safe, Acyclic                               &        $\in\P$        &   $\NP$-complete   \\ \addlinespace
			Sound Workflow Nets                                &  $\PSPACE$-complete   &   $\PSPACE$-hard   \\
			\qquad Safe                                        &  $\PSPACE$-complete   & $\PSPACE$-complete \\
			\qquad Free-Choice                                 &        $\in\P$        &   $\NP$-complete   \\
			\qquad Acyclic                                     &        $\in\P$        &   $\NP$-complete   \\ \addlinespace
			POWL Models                                        &        $\in\P$        &   $\NP$-complete   \\ \addlinespace
			Process Trees                                      &        $\in\P$        &   $\NP$-complete   \\
			\qquad Restricted to $\Set{\ptseq,\ptxor,\ptloop}$ &        $\in\P$        &      $\in\P$       \\
			\qquad with Unique Labels                          &        $\in\P$        &      $\in\P$       \\ \addlinespace
			T-Systems                                          &        $\in\P$        &     $\NP$-hard     \\
			\qquad Live, Bounded                               &        $\in\P$        &   $\NP$-complete   \\
			\qquad Sound                                       &        $\in\P$        &   $\NP$-complete   \\
			\qquad Live, Safe                                  &        $\in\P$        &   $\NP$-complete   \\
			\qquad Safe, Sound                                 &        $\in\P$        &   $\NP$-complete   \\ \addlinespace
			S-Systems                                          &        $\in\P$        &     $\NP$-hard     \\
			\qquad Live, Bounded                               &        $\in\P$        &   $\NP$-complete   \\
			\qquad Sound                                       &        $\in\P$        &   $\NP$-complete   \\
			\qquad Live, Safe                                  &        $\in\P$        &      $\in\P$       \\
			\qquad Safe, Sound                                 &        $\in\P$        &      $\in\P$       \\ \bottomrule
		\end{booktabs}
	\end{table}
	When we started this project, our intuition was that the alignment problem and the reachability problem are of the same complexity level. While we confirmed this for safe Petri nets, we found a much more interesting picture on tamer model classes. Most remarkably, on live, bounded, free-choice systems, reachability and alignments are both $\NP$-complete \autocite{Esparza1998a}. However, if we add cyclicity (i.e., the assumption that the initial marking is reachable from any other marking), reachability drops to $\P$ while the alignment problem remains $\NP$-complete. In particular, this complexity gap also holds for sound free-choice workflow nets as well as simpler model classes like process trees \autocite{SchwanenPA2025} or Partially Ordered Workflow Language (POWL) models~\autocite{KouraniZ2023}---both being strict subclasses of sound free-choice workflow nets. Similarly, reachability is in $\P$ on bounded acyclic systems and T-systems, but the alignment problem remains $\NP$-complete.
	Interestingly, for S-systems the complexity of alignments even depends on safeness. The intuitive reason is, as our proofs show, that multiple tokens bring synchronization and concurrency back into the picture of the alignment problem. For process trees, we were able to show that alignments are in $\P$ when excluding the parallel operator~\autocite{SchwanenPA2025} or if the subtree rooted at a parallel operator is at least restricted to only have unique labels~\autocite{SchwanenPA2025a,SchwanenPA2025c}.
	\cref{tab:overview:results} summarizes our results for all model classes considered in this article and additionally compares the complexity of the alignment problem with that of the corresponding reachability problem.
	
	There are many opportunities for further studies.
	Obviously, we would like to see if our theoretical results can be empirically verified. Also, based on our findings, we would like to assemble a set of \emph{hard} instances to analyze different algorithmic approaches in practice.
	Fur future research, we would also like to investigate to what extent incorporating additional information like event attributes \autocite{MannhardtLRA2016} or considering enabled activities given in translucent event logs \autocite{vanderAalst2019,BeyelSA2025} affect the complexity of alignments.
	On the more theoretical level, we want to dive deeper into the complexity structure of alignments, in particular, regarding the influence of different parameters of the models, e.g., we want to investigate the influence of \enquote{concurrency}, \enquote{non-observable behavior} and \enquote{decomposability} on the complexity of alignments using the machinery of parameterized complexity theory.
	Questions also include if restricting the labeling function has a positive effect for other model classes as well or if the bounds of the \emph{Shortest Sequence Theorem} are tight for sound free-choice workflow nets or can even be further improved.
	
	Another interesting direction is to investigate a generalized form of alignments, in which we also allow other distance metrics. So far, we have only considered a simple type of edit distance which is limited to insertions and deletions only. Although it is fairly easy to see that including substitutions would not affect the results presented in this article, the effect of including transpositions is an open research problem.
	
	\printbibliography
\end{document}